\documentclass[11pt,letterpaper]{article}
\usepackage[margin=1in]{geometry}

\usepackage{url}
\usepackage{color}
\usepackage[dvipsnames]{xcolor}
\usepackage{array}
\usepackage{algorithmicx}
\usepackage{xy}
\usepackage{multicol}
\usepackage{hyperref}
\usepackage{comment}
\usepackage{enumitem}
\usepackage{float}
\usepackage{algorithm}

\usepackage[noend]{algpseudocode}

\usepackage{tikz}
\usepackage{pgfplots}
\pgfplotsset{compat=1.18} 
\usepackage{subcaption}
\usepgfplotslibrary{groupplots}
\usepgfplotslibrary{fillbetween}
\usepackage{pgfplotstable}

\usetikzlibrary{positioning}
\usepackage{amsthm}
\usepackage{amsmath}
\usepackage{amssymb}
\usepackage{cleveref}
\usepackage[subtle,mathspacing=normal]{savetrees}

\newtheorem{theorem}{Theorem}
\newtheorem{definition}[theorem]{Definition}
\newtheorem{lemma}[theorem]{Lemma}

\newtheorem{corollary}[theorem]{Corollary}

\theoremstyle{remark}

\renewcommand{\paragraph}[1]{\vspace{.2 cm} \noindent \textbf{#1}}

\setlist[itemize]{noitemsep}
\setlist[itemize]{nolistsep}
\setlist[enumerate]{noitemsep}
\setlist[enumerate]{nolistsep}

\usepackage{mathtools} 
\usepackage{xspace}

\newcommand{\defn}[1]{\textbf{\emph{#1}}}

\renewcommand{\epsilon}{\varepsilon}

\algnewcommand\algorithmicforeach{\textbf{for each}}
\algdef{S}[FOR]{ForEach}[1]{\algorithmicforeach\ #1\ \algorithmicdo}
\algnewcommand\algorithmicnot{\textbf{not}}
\algdef{S}[IF]{IfNot}[1]{\algorithmicif\ \algorithmicnot\ #1\ \algorithmicthen}
\makeatletter
\renewcommand{\ALG@name}{Subroutine}
\makeatother

\begin{document}

\title{Bounding the Fragmentation of B-trees Subject to Batched Insertions}

\author{Michael A. Bender \\
\small Stony Brook University and RelationalAI \\ 
 \small\texttt{bender@cs.stonybrook.edu}
\and
Aaron Bernstein \\
\small New York University \\
\small\texttt{aaron.bernstein@nyu.edu}
\and
Nairen Cao\\
\small New York University \\
\small\texttt{nc1827@nyu.edu}
\and
Alex Conway\\
\small Cornell Tech \\
\small\texttt{me@ajhconway.com}
\and
Mart\'in Farach-Colton \\
\small New York University \\
\small\texttt{martin@farach-colton.com}
\and
Hanna Koml\'os \\
\small Max Planck Institute for Informatics \\
\small\texttt{hkomlos@gmail.com}
\and
Yarin Shechter \\
\small New York University \\
\small\texttt{yahrin4@gmail.com}
\and
Nicole Wein \\
\small University of Michigan\\
\small\texttt{nswein@umich.edu}
}

\date{}

\maketitle

\begin{abstract}
The issue of internal fragmentation in data structures is a fundamental challenge in database design. A seminal result of Yao in this field shows that evenly splitting the leaves of a B-tree against a workload of uniformly random insertions achieves space utilization of around 69\%. However, many database applications perform batched insertions, where a small run of consecutive keys is inserted at a single position. We develop a generalization of Yao’s analysis to provide rigorous treatment of such batched workloads. Our approach revisits and reformulates the analytical structure underlying Yao’s result in a way that enables generalization and is used to argue that even splitting works well for many workloads in our extended class. For the remaining workloads, we develop simple alternative strategies that provably maintain good space utilization.
\end{abstract}

\section{Introduction}
\label{sec:intro}
The issue of internal fragmentation in data structures is a fundamental challenge in database design~\cite{DBLP:journals/tos/KesavanCDM20,DBLP:journals/tsmc/TamhankarR98,fritchey2012fragmentation,andriyani2024maintaining}.  The issue is that when a leaf of a B-tree fills, it must be split (evenly or otherwise), and the average fill of the resulting leaves is 50\%, no matter how the leaves are split. Underfull  leaves cause a number of well-studied performance degradations, including more cache misses~\cite{DBLP:conf/damon/KuhnBHCT23}, space management on NVMe~\cite{DBLP:conf/cidr/AlhomssiL21}, computational slowdown in in-memory databases~\cite{DBLP:journals/is/KwonLNNPCM23}, tree rebalancing costs~\cite{DBLP:journals/corr/abs-2505-01180}, etc.

Studying the fill of the leaves is called \emph{fringe analysis}~\cite{DBLP:journals/iandc/WoodEZM82,DBLP:journals/csur/Baeza-Yates95}, and for B-trees, the fill of the leaves is the high-order term in the overall fill because all but a $\Theta(1/B)$ fraction of nodes are leaves. 
In this paper, we study the fill of B-tree leaves,
specifically, we define the \defn{block-splitting problem} as follows. Insertions arrive online, where each insertion is at a specified \defn{rank} (between two previously inserted elements, or at the beginning or end). The algorithm must maintain a partition of the elements into \defn{blocks} of size at most $B$ each, where each block is a contiguous set of elements. That is, once a block reaches size $B+1$, the algorithm must irrevocably split it into two blocks. The \defn{fill} achieved by an algorithm (after sufficiently many insertions) is the average over all blocks of the number of elements in the block divided by $B$.

A critical special case for the block-splitting problem  is when insertions arrive one at a time at uniformly random ranks.  Splitting blocks evenly results in a distributions of fills from 50\% full to completely full.  If this distribution were even, the average fill of a leaf would be 75\%, but fuller blocks are more likely to receive insertions than emptier blocks.  Thus, the true answer should be less than 75\% but greater than 50\%.   

A seminal result of Yao from 1978~\cite{DBLP:journals/acta/Yao78,DBLP:journals/corr/abs-2202-04185} shows that evenly splitting the leaves of a B-tree under uniform insertions  results in an average fill of $\ln 2 \approx 69\%$.    This model of insertions is quite common in real databases, as has been reported by Facebook~\cite{DBLP:conf/cidr/DongCGBSS17}, InnoDB~\cite{dubois2013mysql}, Oracle~\cite{devCheckFragmentation}, and Microsoft SQL Server~\cite{sqlservercentralUnderstandingCRUD}.  

For many databases, however, uniformly random insertions do not accurately model typical usage; instead, databases often witness \defn{batches} of sequential insertions~\cite{DBLP:journals/corr/abs-2003-01064,DBLP:journals/pvldb/ChaHWZAY23,DBLP:journals/isci/Kim02,DBLP:journals/pvldb/AchakeevS13,DBLP:conf/sigmod/WangPLLZKA18}.  In this case, an insertion operation involves inserting $r$ items,  one after another, sequentially in key space.  
Many databases and key-value stores have specific heuristics for increasing the fill of leaves in the case of batch insertions~\cite{perconaPerconaServer,microsoftBehindScenes,DBLP:journals/ftdb/Graefe24}.

In this paper, we generalize Yao's approach to the batch insertion case: with a fixed batch size $r$, a uniformly random rank is (repeatedly) selected and $r$ sequential insertions are performed at that rank.   It might seem that as $r$ grows, the problem of achieving high fill would get easier.  After all, we known more information about the insertion pattern into the database as more items are inserted in each location. However, we find that as least for even splitting and the closely related \defn{deferred even splitting}, which we define below, the picture isn't so clear.  Before we explore the complexity of the varying $r$, we describe a few splitting algorithms:

\begin{itemize}
   \item \textbf{Even splitting.}
    We process the \(r\) new elements one by one. Whenever a block grows from size \(B\) to size \(B+1\), we split it into two blocks of sizes  \(\big\lfloor (B+1)/2 \big\rfloor\) and \(\big\lceil (B+1)/2 \big\rceil\). 
    We will then keep inserting to the block with size \(\big\lceil (B+1)/2 \big\rceil\).

    \item \textbf{Deferred even splitting.}
    Suppose $r$ elements are inserted into a block that starts out with with $\ell$ elements, and that $r+\ell>B$.
    The deferred even split strategy replaces this block with the minimum number of blocks possible so that all resulting blocks differ in fill by at most 1.  That is, we break the block into $\lceil (r+\ell)/B \rceil$ blocks with the same fill, up to rounding.\footnote{The splitting in this strategy is not literally deferred. It can be implemented so that the splitting happens as items get inserted, rather than requiring a block to get overfull before splitting.}

\end{itemize}

\pgfplotsset{
  randomhammer/.style={
    small,
    width=\linewidth,
    height=0.7\linewidth,
    grid=both,
    xlabel={insertion length $r$},
    ylabel={mean fullness},
    legend cell align={left},
    legend pos=south east,
    ymin = 0.5,
    ymax = 1,
    xmin = 0,
    y tick label style={rotate=90},
    ytick distance = 0.1,
  }
}

\pgfmathsetmacro{\B}{240}

\newcommand{\Hdiff}[1]{%
  \ifcase#1
    0
  \or 0.5000000000
  \or 0.5833333333
  \or 0.6166666667
  \or 0.6345238095
  \or 0.6456349206
  \or 0.6532106782
  \or 0.6587051837
  \or 0.6628718504
  \or 0.6661398242
  \or 0.6687714032
  \or 0.6709359053
  \or 0.6727474995
  \or 0.6742859611
  \or 0.6756087124
  \or 0.6767581377
  \or 0.6777662022
  \or 0.6786574678
  \or 0.6794511186
  \or 0.6801623562
  \or 0.6808033818
  \else 0.6808033818
  \fi
}

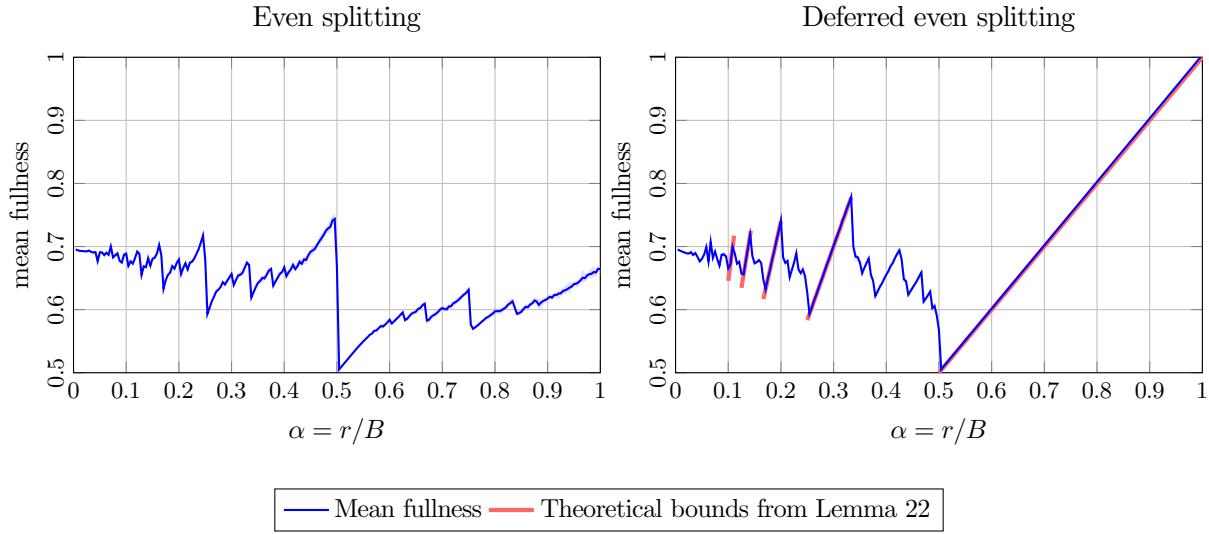
\begin{figure}[t!]
\centering

\begin{tikzpicture}
\begin{groupplot}[
  group style={
    group size=2 by 1,
    horizontal sep=1cm,
    vertical sep=1cm,
  },
  randomhammer,
  width=0.52\textwidth,
  height=0.35\textwidth,
  xmin=0, xmax=1,                
  xtick distance=0.1,
  xlabel={$\alpha = r/B$},
]

\nextgroupplot[
  title={Even splitting},
  legend to name=sharedlegend,
  legend style={nodes={scale=0.9, transform shape}},
  legend columns=2,
]

\addplot[ thick, blue ] coordinates {(0,0)}; 
\addlegendentry{Mean fullness}

\addplot[ ultra thick, red!60 ] coordinates {(0,0)};
\addlegendentry{Theoretical bounds from Lemma~\ref{lem:DeferredEvenSplitting}}

\addplot[name path=upper, draw=none]
  table[x expr=\thisrow{hammer_h}/240, y=max_fullness, col sep=comma]
        {figures/immed-even-240-small-r.csv};

\addplot[name path=lower, draw=none]
  table[x expr=\thisrow{hammer_h}/240, y=min_fullness, col sep=comma]
        {figures/immed-even-240-small-r.csv};

\addplot[fill=blue!30, fill opacity=0.4] fill between[of=upper and lower];

\addplot[thick, blue]
  table[x expr=\thisrow{hammer_h}/240, y=mean_fullness, col sep=comma]
        {figures/immed-even-240-small-r.csv};

\nextgroupplot[
  title={Deferred even splitting}
]

\foreach \i in {1,...,5} {%
  \addplot[
    ultra thick,
    red!60,
    domain={1/(2*\i)}:{1/(2*\i - 1)},   
    samples=2,
    forget plot,
  ]
  { (2*\i/240) * \Hdiff{\i} * x * 240 }; 
}

\addplot[name path=upper, draw=none]
  table[x expr=\thisrow{hammer_h}/240, y=max_fullness, col sep=comma]
        {figures/defer-even-240-small-r.csv};

\addplot[name path=lower, draw=none]
  table[x expr=\thisrow{hammer_h}/240, y=min_fullness, col sep=comma]
        {figures/defer-even-240-small-r.csv};

\addplot[fill=blue!30, fill opacity=0.4] fill between[of=upper and lower];

\addplot[thick, blue]
  table[x expr=\thisrow{hammer_h}/240, y=mean_fullness, col sep=comma]
        {figures/defer-even-240-small-r.csv};

\end{groupplot}
\end{tikzpicture}

\begin{center}
\ref{sharedlegend}
\end{center}

\caption{Experimental fullness of deferred and even splitting on batch insertions on blocks of size $B=240$. 200,000 insertions are made from empty using batch insertions of varying length ($r \in [1,B]$). The mean fullness across 10 independent runs is shown. Bounds from \Cref{lem:DeferredEvenSplitting} are shown in red for comparison.}
\label{fig:intro-exp-small}
\end{figure}

\vspace{1em}
Before stating our results, we demonstrate the surprisingly erratic behavior of these simple algorithms for various values of the batch size $r$ with data from simulations.
Figure~\ref{fig:intro-exp-small}
shows the measured fill achieved by even splitting and deferred even splitting when $r < B$.  In the latter case, we also prove tight bounds on the fill in  Lemma~\ref{lem:DeferredEvenSplitting}, and we see that these bounds align with our experiments. 

We observe two interesting phenomena in Figure ~\ref{fig:intro-exp-small}. 
The first is that the average fill varies quite unpredictably with $r$. Nonetheless,  even splitting  performs well in our simulations, as long as $r$ is relatively small. 

The second observation is that,  unfortunately, according to the simulations, these algorithms do not always get better than the trivial 50\% fill. In particular, for $r=B/2$, both algorithms get exactly 50\% fill. However, we show that this undesirable behavior is not inherent to batch insertions, and interestingly, we can achieve better fill by employing \emph{uneven} splitting.

\begin{itemize}
    \item \textbf{(Deferred) Uneven splitting.}  
        Suppose we are inserting $r$ elements into a block that starts out with $\ell$ elements, and that $r+\ell \in (B,2B)$.
     We split the block into two blocks of sizes     \(\big\lfloor \delta (r+\ell)/2 \big\rfloor\) and \( r+\ell - \big\lfloor \delta (\ell+r)/2 \big\rfloor\), where $\delta \in (0,1)$ is the \defn{splitting factor} for this split.\footnote{In general, $\delta$ can be different at each split.} Note that this rule is only defined when both resulting blocks have at most $B$ items.   
\end{itemize}

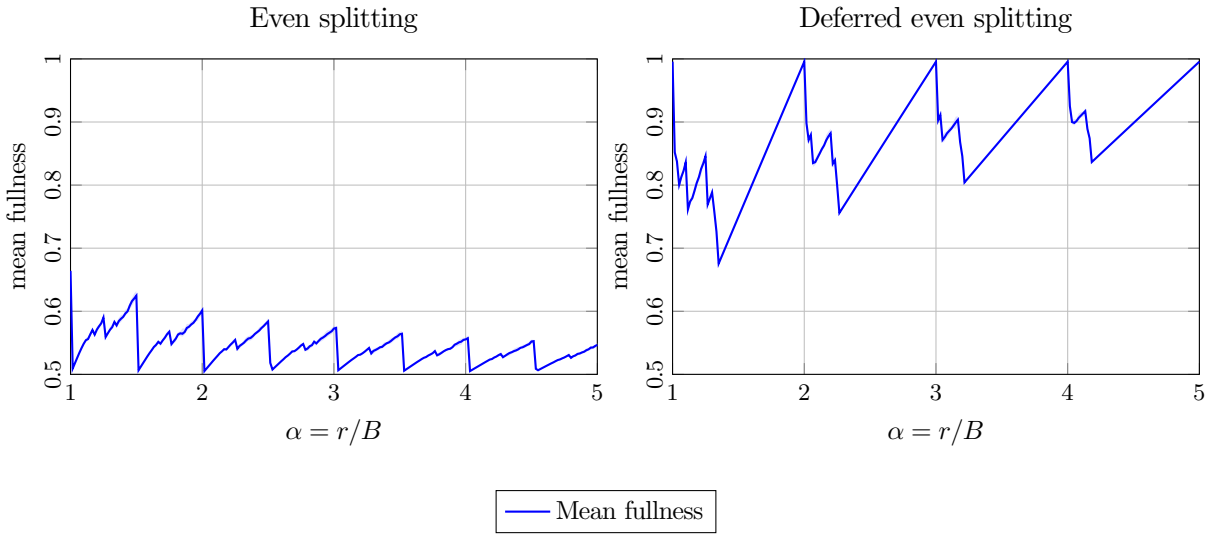
\begin{figure}[h]
\centering

\begin{tikzpicture}
\begin{groupplot}[
  group style={
    group size=2 by 1,
    horizontal sep=1cm,
    vertical sep=1cm,
  },
  randomhammer,
  width=0.52\textwidth,
  height=0.35\textwidth,
  xmin=1,
  xmax=5,
  xtick={1,2,3,4,5},
  xlabel={$\alpha = r/B$},
]

\nextgroupplot[
  title={Even splitting},
  legend to name=sharedlegend,
  legend style={nodes={scale=0.9, transform shape}},
  legend columns=2,
]

\addplot[
  thick,
  blue,
] coordinates {(0,0)};
\addlegendentry{Mean fullness}

\addplot[
  ultra thick,
  red!60,
] coordinates {(0,0)};

\addplot[name path=upper, draw=none]
  table[x expr=\thisrow{hammer_h}/240, y=max_fullness, col sep=comma]
        {figures/immed-even-240-large-r.csv};
\addplot[name path=lower, draw=none]
  table[x expr=\thisrow{hammer_h}/240, y=min_fullness, col sep=comma]
        {figures/immed-even-240-large-r.csv};
\addplot[fill=blue!30, fill opacity=0.4] fill between[of=upper and lower];

\addplot[thick, blue]
  table[x expr=\thisrow{hammer_h}/240, y=mean_fullness, col sep=comma]
        {figures/immed-even-240-large-r.csv};

\nextgroupplot[title={Deferred even splitting}]


\addplot[name path=upper, draw=none]
  table[x expr=\thisrow{hammer_h}/240, y=max_fullness, col sep=comma]
        {figures/defer-even-240-large-r.csv};
\addplot[name path=lower, draw=none]
  table[x expr=\thisrow{hammer_h}/240, y=min_fullness, col sep=comma]
        {figures/defer-even-240-large-r.csv};
\addplot[fill=blue!30, fill opacity=0.4] fill between[of=upper and lower];

\addplot[thick, blue]
  table[x expr=\thisrow{hammer_h}/240, y=mean_fullness, col sep=comma]
        {figures/defer-even-240-large-r.csv};

\end{groupplot}
\end{tikzpicture}

\begin{center}
\ref{sharedlegend}
\end{center}

\caption{Experimental fullness of deferred and even splitting on batch insertions on blocks of size $B=240$. 200,000 insertions are made from empty using batch insertions of varying length ($r \in [B, 5B]$). The mean fullness across 10 independent runs is shown, as a function of $\alpha = r/B$.}
\label{fig:intro-exp-large}
\end{figure}

\phantom{x}

 In Figure~\ref{fig:intro-exp-large}, we consider large values of $r$.  For even splitting, the fullness regularly hits 50\% and trends downward on average as $r$ grows.  For deferred even splitting, the fullness is never low, hits 100\% regularly, and trends upward.  

We can conclude that, at least empirically, deferred even splitting is superior to even splitting for large $r$.  

\phantom{x}

 We see these trends in our simulations.  What about in theory?

\paragraph{Results.} 
Our goal is to find a splitting algorithm for every value of $r$ that achieves fill bounded away from 50\%.
\Cref{tab:fills} and \Cref{fig:results} summarize our results. We are able to prove bounds for different parameter regimes and find that different splitting algorithms are better in different ranges of $r$.
Specifically, our best theoretical results for small $r\leq 7B/18$ alternate between two algorithms: even splitting and deferred even splitting. Then, around the value of $B/2$ where these algorithms only achieve $50\%$ fill, we use a form of \emph{uneven} splitting defined in \Cref{sec:LargeBatchSizes}. Finally, for larger batch sizes $r\geq 2B/3$, we use deferred even splitting.

\paragraph{Remark.} We briefly elaborate on the alternation between even splitting and deferred even splitting for small $r$. For deferred even splitting, for some ranges of $r$ we can use a black box reduction to Yao's setting. This occurs when the divisibility properties of the batch size enable us to treat a chunk of insertions as a single insertion in Yao's setting. However, for much of the parameter range, $r$ does not have such nice divisibility properties. In these cases, we analyze even splitting from scratch by extending the analysis of Yao. This analysis constitutes the technical bulk of the paper.

\begin{table}\label{tab:FillTable}
    \centering
    \begin{tabular}{l|c|c|c|c}
        $r/B$ & Fill & Fill value range &  Splitting Algorithm & Reference\\
        \hline
        \hline
        $\bigl(1/(2i),\,1/(2i - 1)\bigr]$, $i\geq 1$ & $\frac{2ir}{B}\,(H_{2i} - H_{i}) $ & $[0.5,1]$ & Deferred even split & \Cref{Sec:EvenDeferredAnalysis}
        \\[10pt]
         \hline
         $[1/B,0.0058]$& $\ln(2)-\frac{5r}{B}$  & $[0.664,0.693]$ & Even split & \Cref{subsec:LowerBounds}\\[10pt]
         \hline
         
         $(0.0058,0.21]$& $\frac{2\left(B+1\right)}{3B+1+2r}$  & $[0.583,0.664]$ & Even split & \Cref{subsec:LowerBounds}\\[10pt]
         \hline
         
         $(0.21,7/18]$& $0.583$  & $0.583$ & Even split & \Cref{subsec:LowerBounds}\\[10pt]
         \hline
         
         $(7/18,0.5]$& $\frac{3r}{2B}$ & $[0.583,0.750]$ & Uneven split & \Cref{subsec:UnevenSplit}\\[10pt]
         \hline
         $(0.5,2/3]$& $\frac{10r}{9B}$ & $[0.555, 0.740]$ & Uneven split &\Cref{subsec:UnevenSplit} \\[10pt]
         \hline
         $(2/3,1]$& $\frac{r}{B}$ & $[0.666,1]$ & Deferred even split & \Cref{subsec:LargeRegime}
         \\[10pt]
         \hline
         $(1,\infty)$& $\max\left\{\frac{0.5+r/B}{\lceil 1+r/B \rceil },2/3\right\}$ & $[0.666,1]$ & Deferred even split & \Cref{subsec:LargeRegime}
         \\[5pt]
         \hline
    \end{tabular}
    \caption{Fill achieved by our algorithms for different batch sizes}
    \label{tab:fills}
\end{table}

\begin{figure}
    \centering
    \includegraphics[width=0.9\linewidth]{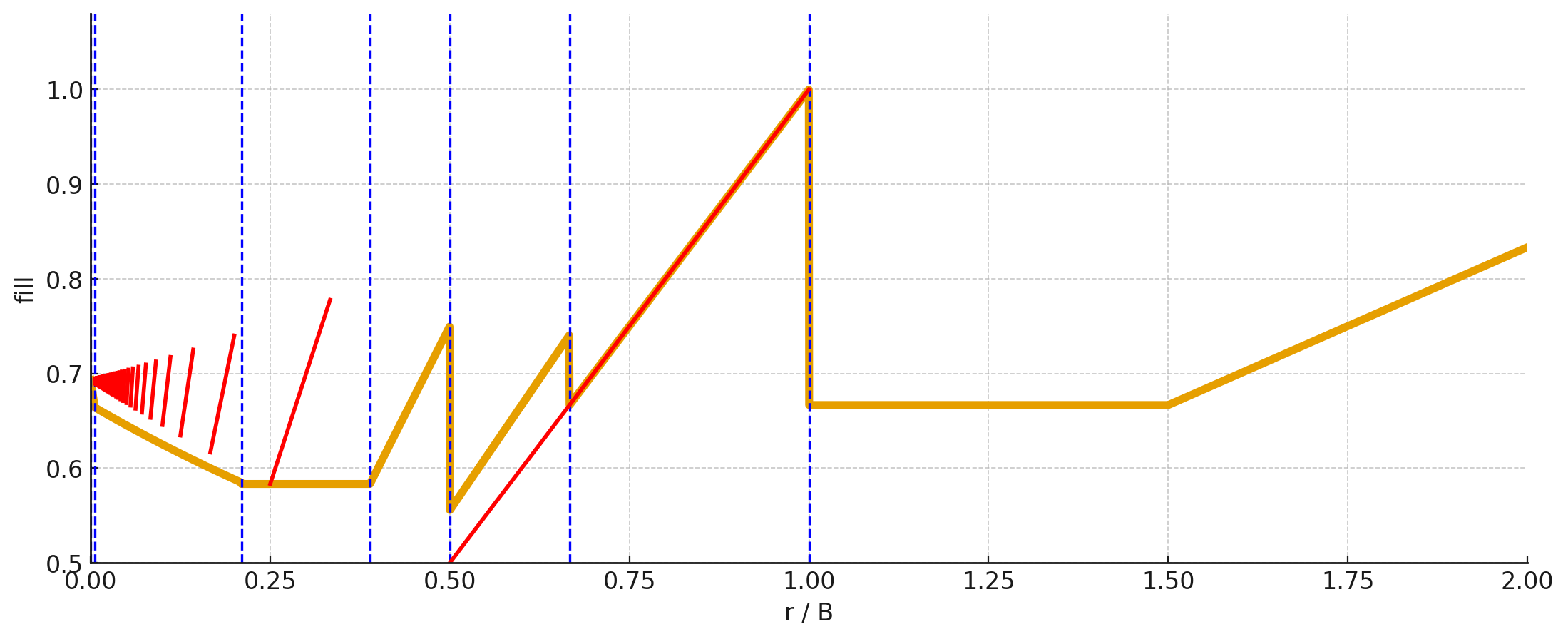}
    \caption{Our theoretical bounds, presented in \Cref{tab:fills}, plotted as a function of $r/B$. The red segments represents the first line in \Cref{tab:fills}, while the continuous orange plot represents the remaining entries.}
    \label{fig:results}
\end{figure}

\paragraph{Open Problems.} Our results leave several open problems:
\begin{enumerate}
    \item Can we get tight bounds for even splitting for $r>1$? 
    \item The exact analytical bounds for deferred even splitting on small $r$ have some gaps.  Can we extend our approach for analyzing even splitting on small $r$ to achieve bounds for deferred even splitting for values of $r$ to fill the gaps between our bounds (the red lines in Figure~\ref{fig:intro-exp-small})? 
    \item Is there a splitting algorithm that achieves greater than 50\% fill without knowing $r$?
    \item Is there a splitting algorithm that achieves greater than 50\% fill when $r$ can vary between insertions?  What about when $r$ is a lower bound on the length of the insertion runs, that is, where the length of the run can vary from insertion run to insertion run, but in each case has at least $r$ items?
\end{enumerate}

\paragraph{Paper organization.}
To achieve our bounds, the main technical lifting is in achieving bounds for even splitting when $r$ is relatively small (rows 2, 3, and 4 of \Cref{tab:fills}). 
For this reason, Sections~\ref{sec:Preliminaries}, ~\ref{sec:technical-overview}, and ~\ref{Sec:EvenSplittingAnalysis} are mostly dedicated to this regime. 
In \Cref{sec:Preliminaries}, we formally define the problem and show how to recast it as analyzing the behavior of a certain recurrence.
In \Cref{sec:technical-overview}, we present a high-level overview of our techniques,
including a discussion of the significant challenges involved in extending Yao's analysis for uniformly random insertions to work for batched random insertions.
In \Cref{Sec:EvenSplittingAnalysis}, we prove our results for the small-$r$ regime. 
In \Cref{subsec:UnevenSplit}, we turn to the medium-$r$ regime and show our analysis of uneven splitting in this regime, which uses the tools we developed for even splitting but is technically simpler. In \Cref{subsec:LargeRegime}, we analyze the regime of large $r$, which is technically simple and does not require any machinery. 
Finally, in \Cref{Sec:EvenDeferredAnalysis}, we present a closed form for the fullness of the deferred even splitting strategy for a range of batch sizes.

\section{Preliminaries} \label{sec:Preliminaries}
As previously discussed, we study the problem of measuring the internal fragmentation of B-trees subject to batches of random insertions under different splitting algorithms.
We use the standard formulation of $B+$ trees \cite{DBLP:books/daglib/0011128} in which all keys are stored in the leaves. This affords a simplified model of the problem in which all keys are stored in sorted order across a collection of memory blocks, and the algorithm is specified by two central choices. 
The first is how to split a memory block. In other words, when an insertion occurs into a block that is full, the algorithm splits it into two memory blocks and chooses how to distribute the keys between the new blocks.
The second is how to break ties. Specifically, when an insertion occurs in the interval between the maximum of one memory block and the minimum of the next, the algorithm needs to decide which memory block to insert the key into.  
Our measure of interest is the average block size in the structure (i.e., total number of keys divided by the number of blocks) as the number of keys inserted into the tree grows.

The rest of this section is primarily dedicated to reviewing the modeling we use for the purpose of analyzing even splitting, as this analysis is the more technically involved part of this work. We formalize and analyze our other splitting procedures, (i.e. uneven and deferred even splitting) in \Cref{sec:LargeBatchSizes}. 

\subsection{Uniformly random insertions, batched random insertions}
In the model studied by Yao~\cite{DBLP:journals/acta/Yao78,DBLP:journals/corr/abs-2202-04185}, at each time step, a key is inserted into a uniformly random position. More precisely, assume the keys inserted thus far into the structure are $k_1,...,k_n$. These keys induce $n+1$ disjoint intervals: 
\[
(-\infty,k_1),(k_1,k_2),...,(k_{n-1},k_n),(k_n,\infty).
\]
At each step, the adversary inserts a key into one of these intervals, where each interval has probability $\frac{1}{n+1}$ of receiving the insertion, independently of any prior insertions.
As discussed in \Cref{sec:intro}, we study a generalization of this workload in which at each step, the adversary does not insert a single key into the interval chosen uniformly at random, but rather $r$ keys provided in ascending order. We proceed to formally describe the even splitting algorithm and to summarize Yao's analysis of even splitting against uniformly random insertions. We then explain how this analysis naturally generalizes to part of our batched random insertion setting.

\paragraph{Even splitting algorithm:}
In an even splitting algorithm, when an insertion occurs into a full block, i.e. one that contained $B$ keys prior to the insertion, we split the block into two new blocks and disperse the keys evenly between the two. More precisely, we place the smallest $\lfloor\frac{B+1}{2}\rfloor$ keys into the left block resulting from the split, and the largest $\lceil\frac{B+1}{2}\rceil$ keys into the right block. For simplicity of presentation, whenever we discuss even splitting, we assume $B$ is odd. For tie breaking, i.e., when an insertion occurs in the interval between the maximum of one memory block and the minimum of the next, we always choose to insert the key into the left block. Note that this choice is arbitrary as the analysis still holds with minor modifications if we make the opposite choice.

\subsection{Even splitting against uniformly random insertions}
\label{subsec:UniformModeling}
We begin by reviewing Yao's analysis for even splitting against uniformly random insertions. To compute the average block size, we must analyze how the composition of blocks changes in our structure. 
Observe that after the first split occurs, no blocks of size less than $d:=\frac{B+1}{2}$ can exist in the data structure for the remainder of time, and thus it suffices to analyze the expected number of blocks of size $i$ for each $i\in\{d,d+1,...,2d-1=B\}$.

For all $n$ and $i\in\left\{ d,d+1,...,2d-1\right\}, $ we let $x_{i}^{n}$ denote the number of blocks in the data structure which contain exactly $i$ keys after a total of $n$ keys have been inserted. We refer to this type of block as an `$i$-block'.

We let $v_{n}$ denote the vector of the expectation values of each of these random variables, i.e., 
\[
v_{n}=\left(\mathbb{E}\left(x_{d}^{n}\right),\mathbb{E}\left(x_{d+1}^{n}\right),...,\mathbb{E}\left(x_{2d-1}^{n}\right),\right)
\]

Each component of this vector can be represented by a scalar recurrence. 
\[
\mathbb{E}\left(x_{i}^{n+1}\right)=\mathbb{E}\left(x_{i}^{n}+\Delta_{i}^{n}\right)=\mathbb{E}\left(x_{i}^{n}\right)+\mathbb{E}\left(\Delta_{i}^{n}\right)
\]
where $\Delta_{i}^{n}$ is the difference in the number of blocks of size $i$ from $v_{n}$ to $v_{n+1}$. 

To completely specify a recurrence we must characterize $\Delta_{i}^{n}$.
Observe that if we assume before the sequence of insertions begins that $-\infty$ is inserted into the structure (i.e. a dummy key smaller than all other keys is inserted), then the probability of hitting a particular block of size $i$ when $n$ keys are present in the structure neatly becomes exactly $\frac{i}{n}$. Thus, if a total of $n$ keys have been inserted (including the dummy key), then with probability exactly $i\cdot \frac{x_{i}^{n}}{n}$ the next insertion occurs in a block of size $i$.
Thus, for every $i>d$, we have:
\[
\mathbb{E}(\Delta_{i}^{n})=-i\cdot \frac{\mathbb E(x_{i}^{n})}{n}+(i-1)\cdot\frac{\mathbb E(x_{i-1}^{n})}{n} 
\]
This is because if the next insertion occurs in a block of size $i$, it then transforms into a block of size $i+1$ (or split into two blocks of size $d$), reducing the number of blocks of size $i$ by 1, while if the insertion occurs in a block of size $i-1$, then the number of blocks of size $i$ increases by 1.
For the case of $i=d$, if the insertion occurs in a block of size $d$, then the number of blocks of size $d$ is decreased by $1$. If the insertion occurs in a block of size $2d-1$, then a split is triggered and the number of blocks of size $d$ increases by $2$. So, we obtain:
\[
\mathbb{E}\left(x_{d}^{n+1}\right)=\mathbb{E}\left(x_{d}^{n}\right)-d\cdot \frac{\mathbb{E}\left(x_{d}^{n}\right)}{n}+2\cdot (2d-1)\cdot \frac{\mathbb{E}\left(x_{2d-1}^{n}\right)}{n} \]
We thus obtain the following recurrence for $v_n$: $\forall n\geq n_{0}:v_{n+1}=\left(I+\frac{1}{n}\cdot A\right)v_{n}$, where:
\[
A=\begin{bmatrix}-d & 0 & \cdots & \cdots & 0 & 2B\\
d & -\left(d+1\right) & \cdots & \cdots & 0 & 0\\
0 & d+1 & \cdots & \cdots & 0 & 0\\
\vdots & \vdots & \ddots & \ddots & \vdots & \vdots\\
0 & 0 & \cdots & \cdots & -\left(B-1\right) & 0\\
0 & 0 & \cdots & \cdots & B-1 & -B
\end{bmatrix}
\]

\subsection{Modeling for even splitting subject to batched insertions}\label{subsec:ModelingBatched}
We now consider the case of batched random insertions. As mentioned above, at each time step the adversary inserts a batch of $1\leq r<\frac{B}{2}$ consecutive keys into a randomly sampled interval.
We demonstrate how the modeling method reviewed in \Cref{subsec:UniformModeling} naturally generalizes to this case.
Like \Cref{subsec:UniformModeling}, we consider the vector describing the expected composition of our data structure, or more precisely the expected quantities for each type of block:
\[
v_{n}=\left(\mathbb{E}\left(x_{d}^{n}\right),\mathbb{E}\left(x_{d+1}^{n}\right),...,\mathbb{E}\left(x_{2d-1}^{n}\right)\right).
\]
We can similarly obtain scalar recurrence for each component, but it is convenient for analysis to take a step size of $r$:
$\mathbb{E}\left(x_{i}^{n+r}\right)=\mathbb{E}\left(x_{i}^{n}\right)+\mathbb{E}\left(\Delta_{i}^{n}\right)$, where now $\Delta_{i}^{n}$ denotes the expected change in the quantity of blocks of size $i$ between $v_n$ and $v_{n+r}$. For $i>d$, if the first key in a batch is inserted into a block of size $i$, then by the end of the batch the number of blocks of size $i$ has decreased by $1$, while if instead it hits a block of size $d+(i-d-r)\bmod d$, then by the end of the batch the number of blocks of size $i$ has increased by $1$. 
We can therefore obtain, for every $i>d$, and letting $w_i$ denote $d+(i-d-r) \bmod d) $:
\[
\mathbb{E}(\Delta_{i}^{n})=-i\cdot \frac{\mathbb E(x_{i}^{n})}{n}+w_i\cdot\frac{\mathbb E(x_{w_i}^{n})}{n}. 
\]
For the case of $i=d$ - if the batch begins in a block of size $d$, then the number of blocks of size $d$ is decreased by $1$. If the batch begins in a block of size $B+1-r$, then the last key in the batch triggers a split, and thus the number of blocks of size $d$ is increased by $2$. If the batch begins in a block of size $k>B+1-r$, then a split occurs during the batch, and following the split additional keys are inserted into one of the created siblings. Thus, the number of blocks of size $d$ is increased by $1$. We therefore obtain:
\begin{equation}\label{eq:FirstComponentRecurrence}
    \mathbb{E}(\Delta_{d}^{n})=-d\cdot \frac{\mathbb E(x_{d}^{n})}{n}+2\cdot (B+1-r)\cdot\frac{\mathbb E(x_{B+1-r}^{n})}{n}
+\sum_{B+1-r<k\leq B}k\cdot \frac{\mathbb E(x_{k}^{n})}{n}.
\end{equation}

We can therefore obtain a matrix recurrence of a similar form to \Cref{subsec:UniformModeling}: $\forall n\geq n_{0}:v_{n+r}=\left(I+\frac{1}{n}\cdot A(B,r)\right)v_{n}$.
We use $A(B,r)$ to denote the transition matrix as it reflects blocks of capacity $B$ and batches of size $r$ (observe that we've specified $A(B,1)$ in \Cref{subsec:UniformModeling}).
We now turn to fully specify $A(B,r)$, based on the scalar recurrences above. Before we do so, a non-standard indexing method that will be convenient to use for later analysis.

\subsubsection{Indexing method}
 As mentioned, following the first split, all blocks in our structure have size at least $d$, which is why it suffices to track blocks of at least this size. This is reflected in the fact, that the first entry in $v_n$ corresponds to the number of blocks of size $d$, and the last entry corresponds to the number of blocks of size $B$. Similarly, the first row in $A(B,1)$ (specified in \Cref{subsec:UniformModeling}) corresponds to the recurrence relation for expected the number of blocks of size $d$.
It is significantly convenient for the purpose of later analysis to use indexing that reflects this - i.e to index entries in $d$-dimensional vectors, and $d\times d$-dimensional matrices by the block sizes they corresponds to. 
\\ To describe this formally: We first define $f:\left\{ 0,...,d-1\right\} \to\left\{ d,...,2d-1\right\} $ by $f\left(i\right)=d+i$, note that $f$ is a bijective mapping.

Let $B$ be any $d\times d$ matrix, and $w$ be a $d$-dimensional vector. We will use $f\left(i\right)$ to index the $i^{th}$ row/column in $B$ and the $i^{th}$ component of $w$. For example, $B_{i,j}$ refers to $B_{i-d,j-d}$ in standard indexing.

For a principle submatrix $P$ of a $d\times d$ matrix $B$, we will use $f\left(i\right)$ to index the $j^{th}$ row/column in $P$, where $i$ is the corresponding index to $j$ in the greater matrix $B$. Similarly, for $u$, a subvector of a $d$-dimensional vector $w$, we will use $f\left(i\right)$ to index the $j^{th}$ entry, where $i$ is the corresponding index to $j$ in $w$.

\subsubsection{Specifying the general transition matrix}
We now specify the general transition matrix $A(B,r)$, using the scalar recurrences presented in \Cref{subsec:ModelingBatched} and our non-standard indexing method. The first row needs to be presented separately, as it has a unique structure. Further, it is convenient to split the remaining rows into two categories so we can explicitly specify the entries without using the modulo operation. Note that all entries not explicitly mentioned below are zero. Below, $A$ always denotes $A(B,r)$; recall that we are assuming that $r < B/2$.

\textbf{First row: } 
\begin{align*}
&A_{d,d}=-d,\\& A_{d,B+1-r}=2\cdot\left(B+1-r\right),\\ &\text{for every } B+1-r<k\leq B:A_{d,k}=k.
\end{align*}

\textbf{Wraparound rows}: \begin{align*}
\forall d&<k\leq d+r-1:\\&A_{k,k}=-k,\\&A_{k,d+k-r}=d+k-r\,\, 
\end{align*}

\textbf{Remaining rows: }
\begin{align*}
\forall d+r&\leq k\leq B:\\&A_{k,k}=-k,\\&A_{k,k-r}=k-r.
\end{align*}

Observe that $A_{i,j}$ can be described as follows: $\frac{A_{i,j}}{j}$ is the signed difference that occurs in the number of blocks of size $i$, if a batch of insertions begins in a block of size $j$.

 For a concrete example that shows how we obtain this matrix directly from the recurrence relations we specified, we demonstrate how the first row for $A$ is obtained.
As mentioned in \Cref{subsec:ModelingBatched}:
\[
\mathbb{E}\left(x_{d}^{n+r}\right)=\mathbb{E}\left(x_{d}^{n}\right)+\mathbb{E}\left(\Delta_{i}^{n}\right)
\]
Therefore, if we denote the first row of $A$ by $\vec{a}$, for our desired recurrence to be fulfilled, it needs to satisfy  (letting $\langle,\rangle$ denote the standard inner product):
\begin{equation}\label{eq:requirements}
    \langle e_1+\frac{1}{n}\cdot \vec a\space , v_n\rangle =\mathbb{E}\left(x_{d}^{n}\right)+\mathbb{E}\left(\Delta_{d}^{n}\right)\iff \langle \vec a\cdot v_n\rangle =n\cdot \mathbb{E}\left(\Delta_{d}^{n}\right)
\end{equation}
Now, by \Cref{eq:FirstComponentRecurrence}:
\[
n\cdot \mathbb{E}\left(\Delta_{d}^{n}\right)=\mathbb{E}(\Delta_{d}^{n})=-d\cdot \mathbb E(x_{d}^{n})+2\cdot (B+1-r)\cdot\mathbb E(x_{B+1-r}^{n})
+\sum_{B+1-r<k\leq B}k\cdot \mathbb E(x_{k}^{n})=\langle u,v_n \rangle
\]
where $u=(-d,0,...,0,2(B+1-r),B+2-r,... B)$.
So, the first row of $A$, $\vec a$, needs to satisfy:
\[
\langle \vec a,v_n\rangle= \langle u,v_n\rangle.  
\]
So, setting $\vec a=u$ satisfies \Cref{eq:requirements}.

\section{Technical Overview}
\label{sec:technical-overview}
We begin by providing a technical overview of our contributions. We spend the majority of this section giving a technical overview of the analysis for even splitting in B-trees against uniformly random batched insertions, as it is the most technically challenging. As outlined in \Cref{sec:Preliminaries}, the recurrence describing the behavior of even splitting subject to batched random insertions of size $r$ is of the form $v_{n+r}=\left(I+\frac{1}{n}\cdot A(B,r)\right)v_{n}$, and $A(B,r)$ is termed the \defn{transition matrix}.
For $r=1$ we get the exact recurrence analyzed by Yao \cite{DBLP:journals/acta/Yao78}.

\paragraph{Main objective}
Our goal is to compute the asymptotic average block size in the data structure. Observe that in fact $v_n/n$ can be used to characterize the average block size (see \Cref{subsec:AverageBlockSize} for more details). This is intuitive since the ratio between different entries in $v_n/n$ tells us the ratio between the quantities of blocks of different sizes in our structure. So, our goal from now on is to analyze $\lim _{n\to\infty }v_n/n$.
It is not hard to verify that if $v_n/n$ admitted a steady state (i.e. $v_{n+r}/(n+r)=v_n/n$) then that steady state must be an eigenvector of $A(B,r)$ corresponding to eigenvalue $r$. This makes it intuitive to expect that $v_n/n$ should converge to an eigenvector of $A(B,r)$. The key technical challenges have to do with proving this intuition, and as we'll see, analyzing the structure of an eigenvector of $A(B,r)$.

\paragraph{Yao's apparoach}
The analysis of random B-trees provided in \cite{DBLP:journals/acta/Yao78} can be interpreted as consisting of two primary stages:
\begin{enumerate}
    \item The first step involves proving that $\frac{v_n}{n}$ converges to an eigenvector of $A(B,1):=A$. As mentioned, this is valuable as $\frac{v_n}{n}$ can be used to represent the expected average block size.
    \item Then, the eigenvector is computed exactly and is used to obtain the average block size.
\end{enumerate}
\paragraph{Generalizing Yao's apparoach}
Much like Yao, our general approach is to prove that each recurrence $v_{n+r}=\left(I+\frac{1}{n}\cdot A(B,r)\right)v_{n}$ converges to an eigenvector $A(B,r)$, and then rely on analyzing this eigenvector. However, applying Yao's methods directly is difficult and we thus deviate significantly in how we go about these tasks. 
The difficulties in generalizing from Yao's analysis are as follows:

\phantom{x}

  \underline{First Challenge:}
 Proving that $v_n/n$ converges to an eigenvector of $A(B,r)$ for every $r$ is a significantly harder task. Yao's approach for $A = A(B,1)$ relied on a low-level analysis that explicitly computed very specific information about this particular matrix $A$; for example, his analysis required computing the characteristic polynomial of $A$ and factoring it according to a specific form. But the moment we turn to $r > 1$, such an approach seems quite difficult to execute for two reasons.

    Firstly, even if we could repeat Yao's factorization of the characteristic polynomial for a specific value of $r$, it is unclear that there should exist a general formula for the characteristic polynomial that can be parameterized on $r$. As suggested by \Cref{fig:intro-exp-small}, the characteristics of $A(B,r)$ seem to vary in a non-monotonic way with $r$, so there is no reason to assume a single formulation can work for all values of $r$ simultaneously. 
    
    Secondly, even if we could achieve a general formulation for the characteristic polynomial and obtain complete spectral characterization of transition matrix for general $r$, arguing about convergence for general $r>1$ requires different techniques from $r=1$. Specifically, $A(B,1)$ is a diagonalizable matrix, which affords decomposing the initial state into a basis of eigenvectors and computing limits on each component separately. $A(B,r)$ on the other hand is generally not diagonalizable; for example $A(15,4)$ is not a diagonalizable matrix.
    
   \phantom{x} 
   
   \underline{Second Challenge:} Even if we know the (normalized) recurrence converges to an eigenvector, the analysis of this eigenvector in order to derive bounds is much more involved when $r>1$. For $r=1$, the eigenvector admits a neat closed form, and thus the asymptotic average block size can be computed precisely.
    By contrast, for $r>1$ we encounter two similar difficulties to the previous step. 
    Firstly, considering the non-smooth nature of the curve in \Cref{fig:intro-exp-small} and the fact that the eigenvector can be used to express the average block size (see \Cref{subsec:AverageBlockSize} for details), we should not expect a closed form that can work for every value of $r$ simultaneously. Secondly, even if we set a specific $r>1$ and try to characterize the eigenvector for the particular $A(B,r)$, the system of equations for the eigenvector that is generated by $A(B,r)$ is significantly more complicated than that of $A(B,1)$ due to its different structure, and there is thus no reason to assume it admits a neat closed form (as far as we can tell it does not).

\paragraph{Overcoming the challenges for generalization}

 \underline{First Challenge:} To prove $v_n/n$ converges, we first avoid any precise analysis of the characteristic polynomial, and instead rely on simple structural characteristics that hold for all $A(B,r)$ in order to characterize their spectrum. Then, we use this characterization to argue about convergence through the complex Jordan form of $A(B,r)$. See \Cref{sec:ConverganceTheorem} for the details.

\phantom{x}

 \underline{Second Challenge:} We avoid the need to compute the eigenvector precisely, and provide looser characterizations in order to derive lower bounds. We observe that while the eigenvector does not admit a neat closed form, certain subvectors of the eigenvector are much more structured (although the way in which the values of these subvectors relate to each other is unclear). We then analyze these subvectors in order to provide lower bounds for the average block size. See \Cref{subsec:LowerBounds} for details.

\subsection{Algorithms for larger $r$} \label{prelims:LargeBatches}
The discussion so far has pertained to the even splitting algorithm, which produces favorable bounds for $r<0.388B$ (see \Cref{tab:FillTable} for more details). 
As discussed in the introduction, our bounds for even splitting decay as $r$ approaches $B/2$, and we thus appeal to a different set of algorithms for larger values of $r$. These algorithms are technically simpler to analyze; we now provide a brief overview.

\paragraph{{\boldmath Moderately large $r$: $\frac{7/18} B \leq r \leq \frac{2}{3}B$}}

 In this regime, we obtain favorable bounds by using uneven splitting algorithms (see \Cref{sec:LargeBatchSizes}).
These algorithms are easier to analyze because choosing specific appropriate splits, we are able to confine the possible sizes of blocks our structure contains at the end of each insertion batch to a small predetermined set. In particular, we are able to ensure that there are at most three possible blocks sizes; for example, for $r = B/2$, our algorithms will ensure that all blocks have size $B/4, B/2$ or $3B/4$. We can then readily analyze the recurrence using tools we developed for the analysis of even splitting. These tools are simple to apply because our transition matrix is at most $3\times 3$, as we have at most $3$ different possible block sizes.
See \Cref{sec:LargeBatchSizes} for more details.

\paragraph{{\boldmath Large values of $r$: $r > \frac{2}{3}B$}}

 This is by far the easiest case, No linear algebra is required to conclude that deferred even splitting ensures good space utilization. See \Cref{sec:LargeBatchSizes} for details.

\section{Analysis of even splitting against batched insertions}
\label{Sec:EvenSplittingAnalysis}
Our goal in this section is to establish the bounds presented in \Cref{tab:fills} for even splitting.
The organization of the analysis is as follows: \Cref{subsec:SpectralCharacterization} provides a spectral characterization of our transition matrices, $A(B,r)$ (see \Cref{sec:Preliminaries} for a specification of these matrices). Then, in \Cref{sec:ConverganceTheorem}, we use this characterization to prove the normalized recurrence relations presented in \Cref{sec:Preliminaries} converge to an eigenvector of the transition matrix. \Cref{subsec:AverageBlockSize} demonstrates how any eigenvector of the transition matrix can be used to compute the limit of the average block size. Then, we spend \Cref{subsec:LowerBounds} analyzing the structure of an eigenvector to derive lower bounds for the asymptotic average block size.
\subsection{Spectral characterization of the transition matrix} \label{subsec:SpectralCharacterization}
\subsubsection{Spectral properties of irreducible Metzler matrices}
Our analysis uses a well known characterization of the spectrum of irreducible Metzler matrices. For completeness, we begin with a self-contained review of the definition of Metzler matrices and a standard proof. For further reading, refer to \cite{0521386322}.
\begin{definition}

A square matrix $A\in\mathbb{R}^{d\times d}$ is a Metzler
matrix, if and only if all off-diagonal entries in $A$ are non-negative.
 
\end{definition}
\begin{definition}

For a square matrix $A\in\mathbb{R}^{d\times d}$, we define the underlying
directed graph of $A$, denoted by $G\left[A\right]$, as a directed
graph on $V=\left\{ 0,...,d-1\right\} $ in which for every $i,j\in\left\{ 0,...,d-1\right\} $,
there exists a directed edge from $i$ to $j$ in $G\left[A\right]$
if and only if $A_{i,j}\neq0$.

\end{definition}
\begin{definition}

We say that a square matrix $A$ is irreducible, if $G\left[A\right]$
is a strongly connected graph.

\end{definition}
\begin{lemma}\label{lem:MetzlerMatrices}

Let $A\in\mathbb{R}^{d\times d}$ be an irreducible Metzler matrix,
and assume there exists a positive (entry-wise) $w$ such that $w^{T}\cdot A=r\cdot w^{T}$
for some $r>0$, then: 

\begin{enumerate}
    \item $r$ is a simple eigenvalue of $A$, and there exists a positive eigenvector corresponding to it.
    \item For every other eigenvalue $\lambda\in\sigma\left(A\right)$, we have $\mathcal{R}e\left(\lambda\right)<r$, where $\mathcal{R}e(\cdot)$ denotes the real part of a complex number.
\end{enumerate}
\end{lemma}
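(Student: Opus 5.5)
The plan is to reduce to the Perron--Frobenius theorem for nonnegative irreducible matrices by a diagonal shift. Choose $c>0$ large enough that $M:=A+cI$ is entrywise nonnegative; for instance $c>\max_i |A_{i,i}|$. Since the shift only changes the diagonal, $G[M]$ and $G[A]$ agree off the diagonal, so $M$ is an irreducible nonnegative matrix. Its spectrum is $\sigma(A)+c$, with the same eigenvectors and the same algebraic multiplicities. The hypothesis becomes $w^T M=(r+c)w^T$ with $w>0$.

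\textbf{Step 1: identify the Perron root.} We show $r+c=\rho(M)$. For any eigenpair $Mx=\mu x$ with $x\neq 0$, take absolute values to get $M|x|\ge |\mu|\,|x|$ entrywise. Pairing with $w^T$ gives
\[
(r+c)\,w^T|x| = w^T M|x| \ge |\mu|\, w^T|x|.
\]
Since $w^T|x|>0$, this yields $|\mu|\le r+c$, so $\rho(M)\le r+c$. Conversely, $r+c$ is an eigenvalue of $M^T$, hence of $M$, so $\rho(M)=r+c$.

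\textbf{Step 2: simplicity and positivity.} The Perron--Frobenius theorem for irreducible nonnegative matrices now says that $\rho(M)$ is an algebraically simple eigenvalue with an entrywise positive right eigenvector $v$. Then $Av=rv$, which proves item 1.

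If I wanted to stay self-contained rather than cite the theorem, I would prove these directly. First, irreducibility gives $(I+M)^{d-1}>0$. Any nonnegative eigenvector $x$ of $M$ for $\rho(M)$ then satisfies $(1+\rho)^{d-1}x=(I+M)^{d-1}x>0$, so it is strictly positive. Geometric simplicity follows because, given two positive eigenvectors, subtracting a suitable multiple of one from the other produces a nonnegative eigenvector with a zero entry, which forces it to be zero. Algebraic simplicity follows by pairing a putative generalized eigenvector $y$, with $(M-\rho)y=v$, against $w$: this gives $0=w^T(M-\rho)y=w^Tv>0$, a contradiction.

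\textbf{Step 3: the strict real-part bound.} Let $\lambda\neq r$ be another eigenvalue of $A$, and set $\mu=\lambda+c$. Step 1 gives $|\mu|\le r+c$, and therefore
\[
\mathcal{R}e(\lambda)=\mathcal{R}e(\mu)-c\le |\mu|-c\le r.
\]
Equality in this chain forces $\mu$ to be real and positive with $\mu=r+c$, which means $\lambda=r$, contradicting the choice of $\lambda$. Hence $\mathcal{R}e(\lambda)<r$.

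This step is the one that needs care. Perron--Frobenius alone allows other eigenvalues of modulus $\rho(M)$ when $M$ is periodic, but those lie off the positive real axis, so they still have real part strictly less than $\rho(M)$. The shift handles this automatically: the inequality $\mathcal{R}e(\mu)\le|\mu|$ is strict unless $\mu$ is a nonnegative real. So the main thing to check is that the equality case reduces exactly to $\mu=\rho(M)$, which is excluded by the simplicity from Step 2 together with $\lambda\neq r$.
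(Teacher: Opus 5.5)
Your proof is correct and follows the same route as the paper: shift by $cI$ to get a nonnegative irreducible matrix with the positive left eigenvector $w$ for $r+c$, invoke Perron--Frobenius, and shift back. The difference is that you make explicit two steps the paper leaves implicit, namely that $r+c=\rho(M)$ (by pairing $M|x|\ge|\mu|\,|x|$ with $w$) and that equality in $\mathcal{R}e(\mu)\le|\mu|\le r+c$ forces $\mu=r+c$, which is what justifies the paper's one-line claim that $\mathcal{R}e(\lambda)<r'$ for the other eigenvalues. In your optional self-contained aside you would also need the existence of a nonnegative eigenvector for $\rho$; the same pairing gives it, since $w^T(M|x|-\rho|x|)=0$ with $M|x|-\rho|x|\ge 0$ forces $M|x|=\rho|x|$.
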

\begin{proof}

Let $A\in\mathbb{R}^{d\times d}$ be irreducible Metzler, and assume
$w^{T}\cdot A=r\cdot w^{T}$ for a positive $w$. Note that this already
means $r$ is an eigenvalue of $A$. 
Now, let $B=A+c\cdot I$ where
\[
c=max\left\{ \left|A_{i,i}\right|:i\in\left\{ 0,...,d-1\right\} \right\}
\]

Observe that $B$ is a non-negative, irreducible matrix. Further:
\[
w^{T}\cdot B=w^{T}\cdot\left(A+c\cdot I\right)=r\cdot w^{T}+c\cdot w^{T}=\left(r+c\right)\cdot w^{T}
\]
So, $w^{T}$ is a positive left-eigenvector of $B$ corresponding
to eigenvalue $r^{\prime}=r+c>0$. By applying Perron-Frobenius theorems (\cite{0521386322} for further reading) for non-negative irreducible matrices we can conclude the following: 

1. $r^{\prime}$ is the spectral radius of $B$, this implies that
for any other eigenvalue $\lambda$ of $B$, we have $\mathcal{R}e\left(\lambda\right)<r^{\prime}$.

2. $r^{\prime}$ is a simple eigenvalue, and there exists a positive
eigenvector of $B$ corresponding to it.

 Now, since the eigenspace of $A$ corresponding to $r$ is equal to
the eigenspace of $B$ corresponding to $r^{\prime}$, we can conclude
$r$ is simple, and that there exists a positive eigenvector of $A$
corresponding to it. Further, let $\lambda\in\sigma\left(A\right)$,
then $\lambda+c\in\sigma\left(B\right)$, and $\mathcal{R}e\left(\lambda+c\right)<r^{\prime}\implies\mathcal{R}e\left(\lambda\right)+c<r+c\implies\mathcal{R}e\left(\lambda\right)<r$,
as needed.
\end{proof}
\subsubsection{Analyzing the spectral properties of $A(B,r)$}
\label{subsec:MatrixCharacterization}

For every $k\in\left\{ d,...,2d-1\right\} $, let $\Delta_k\in\left\{ -1,0,1,2\right\} ^{d}$ be the vector describing the change seen in $v_{n}$ when a batch of insertions hits a block of size $k$, in other words, $\Delta_{k}=v_{n+r}-v_{n}$ (coordinate-wise subtraction) provided that the next batch occurs on a block of size $k$.
As we prove next, for each $k$, the $k$'th column of the transition matrix is equal to $\Delta _k\cdot k$. To illustrate why this is intuitive - Observe that the $k^{th}$ column has values in $\left\{ -1,0,1,2\right\} $ when normalized by $k$. $A_{k,k}=-1$, which corresponds to decreasing the number of blocks of size $k$ when we hit one. Further, for all the remaining entries, $A_{i,k}\neq0$ if and only if hitting a block of size $k$ increases the number of blocks of size $i$ by $1$ or $2$. This is due to our basic modeling; the $i^{th}$ row reflects how the number of blocks of size $i$ changes. We proceed to state and prove this formally. 

\begin{lemma}\label{lem:ColumnsOfTransitionMatrix}

The $k^{th}$ column of $A\left(B,r\right)$ is equal to $\Delta{_k}\cdot k$.

\end{lemma}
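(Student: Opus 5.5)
The plan is to verify the identity column by column, directly from the explicit row-by-row specification of $A(B,r)$ in \Cref{subsec:ModelingBatched}. We fix a block size $k\in\{d,\dots,B\}$ and compute the vector $\Delta_k$ from the dynamics of even splitting. Then we check that, for every $i$, the entry $A_{i,k}$ as specified equals $k\cdot(\Delta_k)_i$. Equivalently, we show that every nonzero entry of column $k$ has the form $k$ times the signed change in the number of $i$-blocks, and that every other entry is zero.

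First we describe $\Delta_k$ by cases on $k$, using $r<B/2$. A batch of $r$ keys that lands in a $k$-block always removes that $k$-block, so $(\Delta_k)_k=-1$.
\begin{itemize}
\item If $k\le B-r$, no split occurs. The block becomes a $(k+r)$-block, so $(\Delta_k)_{k+r}=+1$.
\item If $k=B+1-r$, the last key triggers a split into two $d$-blocks, so $(\Delta_k)_d=+2$.
\item If $B+1-r<k\le B$, a split into a $d$-block and a $d$-block occurs after $B+1-k$ keys, and the remaining $k+r-B-1<d$ keys go into the right block (the tie-breaking and split convention). This yields one $d$-block and one block of size $d+k+r-B-1=k+r-d$, so $(\Delta_k)_d=+1$ and $(\Delta_k)_{k+r-d}=+1$.
\end{itemize}
Because $r<d$, at most one split happens per batch, and every other coordinate of $\Delta_k$ is $0$.

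Second, we read off column $k$ of $A$ from the three row families. The diagonal gives $A_{k,k}=-k$. In the remaining rows, $A_{k',k'-r}=k'-r$, so column $k$ has entry $k$ in row $k+r$ whenever $d+r\le k+r\le B$, i.e.\ $d\le k\le B-r$. In the wraparound rows, $A_{k',d+k'-r}=d+k'-r$, so column $k$ gets entry $k$ in row $k'=k+r-d$ whenever $d<k+r-d\le d+r-1$, i.e.\ $B+1-r< k\le B$ (using $B=2d-1$). In the first row, $A_{d,B+1-r}=2(B+1-r)$ and $A_{d,k}=k$ for $k>B+1-r$.

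Matching the two lists case by case gives the identity in each case. The only point needing care is the boundary $k=B+1-r$, where the wraparound row would be $k'=d$. There the contribution is absorbed into the first row's coefficient $2$ rather than a separate $+1$, consistent with $\Delta_k$ having a single $+2$ in coordinate $d$. We must also confirm that the first row has no extra entries when $k=d$, i.e.\ that $A_{d,d}=-d$ is the only contribution there. This holds because $d\le B-r$, so the $d$-block just grows and no split is triggered. These index checks are the main obstacle: verifying that the ranges of the row families partition the columns exactly as the cases for $\Delta_k$ do.
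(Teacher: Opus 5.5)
Your proposal is correct and takes essentially the same route as the paper. Both split into the same three cases on $k$: $k\le B-r$, $k=B+1-r$, and $B+1-r<k\le B$, noting that $2d-r=B+1-r$. In each case both compare $\Delta_k$, computed from the splitting dynamics, with the nonzero entries of column $k$. Your explicit index bookkeeping (which row family contributes to which column, and that no entry collides with the diagonal) is a more careful version of what the paper simply asserts.
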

\begin{proof}

For $d\leq k<2d-r$, we have $A_{k,k}=-k$ and $a_{k+r,k}=k$, and
indeed $\Delta_{k}$ describes a reduction in the number of blocks
of size $k$ and an increase in the number of blocks of size $k+r$.

For $k=2d-r$ we have $A_{k,k}=-k$, and $A_{d,k}=2\cdot k$, and
indeed $\Delta_{k}$ describes a reduction in the number of blocks
of size $k$ and an increase by $2$ in the number of blocks of size
$d$.

For $2d-r<k\leq2d-1$ we have $A_{k,k}=-k$, $A_{d,k}=k$, and $A_{k+r-\left(B+1\right)+d,k}=k$,
and indeed $\Delta_{k}$ describes a reduction in the number of blocks
of size $k$ and an increase by $1$ in both the number of blocks
of size $d$, and the number of blocks of size $k+r-\left(B+1\right)+d$. 
\end{proof}
\begin{lemma}\label{lem:eigenvalueOfTransitionMatrix}

$r$ is an eigenvalue
of the transition matrix $A(B,r)$, and the matrix has a positive
(entry-wise) left-eigenvector corresponding to $r$. 
\end{lemma}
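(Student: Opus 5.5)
The plan is to exhibit the left eigenvector explicitly. Take $w$ to be the vector of block sizes, $w_k = k$ for $k\in\{d,\dots,2d-1\}$ (in the paper's indexing), which is entrywise positive. The intuition is that $w^T v_n$ counts the total number of keys, and each batch adds exactly $r$ keys. So $w^T A$ should equal $r\,w^T$.

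To verify this, I will compute $w^T A$ column by column using Lemma~\ref{lem:ColumnsOfTransitionMatrix}. The $k$-th entry of $w^T A$ is $\langle w, k\,\Delta_k\rangle = k\,\langle w,\Delta_k\rangle$. The quantity $\langle w,\Delta_k\rangle$ is the net change in the total number of keys when a batch lands on a block of size $k$, and this equals $r$. It remains to check this in the three cases of that lemma's proof.
\begin{itemize}
\item For $d\le k<2d-r$: $\langle w,\Delta_k\rangle=-k+(k+r)=r$.
\item For $k=2d-r$: $\langle w,\Delta_k\rangle=-k+2d=-(2d-r)+2d=r$.
\item For $2d-r<k\le 2d-1$: $\langle w,\Delta_k\rangle=-k+d+(k+r-(B+1)+d)=2d+r-(B+1)=r$, using $B+1=2d$.
\end{itemize}
Hence $(w^T A)_k = k r = r\,w_k$ for every $k$, so $w^T A = r\,w^T$. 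Since $w\neq 0$, $r$ is an eigenvalue of $A(B,r)$, and $w$ is a positive left-eigenvector for it.

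There is no real obstacle here. The only care needed is the index bookkeeping in the wraparound case, namely confirming that the target block size $k+r-(B+1)+d$ lies in range and that $B+1=2d$ is what makes the arithmetic close. That step is already handled by Lemma~\ref{lem:ColumnsOfTransitionMatrix}. Combined with irreducibility, this sets up the application of Lemma~\ref{lem:MetzlerMatrices}.
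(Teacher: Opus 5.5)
Your proof is correct and matches the paper's argument: take $w_k=k$, use Lemma~\ref{lem:ColumnsOfTransitionMatrix} to write $(w^TA)_k=k\langle w,\Delta_k\rangle$, and note that $\langle w,\Delta_k\rangle=r$ because every batch adds exactly $r$ keys (your three-case computation just makes this conservation step explicit). One small caveat on your closing remark: $A(B,r)$ itself is generally not irreducible (when $\gcd(r,d)>1$, no vertex outside $S$ can reach $d$ in $G[A(B,r)]$), which is why the paper applies Lemma~\ref{lem:MetzlerMatrices} only to the principal submatrix $A(B,r)_S$ with the restricted vector $w_S$; this does not affect the lemma itself.
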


\begin{proof}
 We demonstrate a positive left eigenvector corresponding to eigenvalue
$r$: let $w=\left(d,...,2d-1\right)$, note that by
our indexing method $w_{k}=k$. By \Cref{lem:ColumnsOfTransitionMatrix},$\left[w^{T}\cdot A\left(B,r\right)\right]_{k}$
(where $\left[\cdot\right]_{k}$ is used to denote the $k^{th}$ column)
is equal to $w^{T}\cdot k\cdot\Delta_{k}=k\cdot\left(w^{T}\cdot\Delta_{k}\right)$. Observe that $\text{\ensuremath{w^{T}\cdot\Delta{k}}}$ is exactly
equal to the net change in the number of elements when inserting $r$
elements if a block of size $k$ is hit, which is invariably equal
to $r$, so overall: $\left[w^{T}\cdot A\left(B,r\right)\right]_{k}=k\cdot r=r\cdot\left[w^{T}\right]_{k}$.
\end{proof}

\begin{definition}

Let $\left\langle r\right\rangle $ be the cyclic subgroup of $\mathbb{Z}_{d}$
generated by $r$, and we let $S$ denote $d+\left\langle r\right\rangle =\left\{ d+x:x\in\left\langle r\right\rangle \right\} $.

\end{definition}
\begin{lemma}

$S$ is a cycle in $G\left[A\left(B,r\right)\right]$.

\end{lemma}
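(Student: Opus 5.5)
The plan is to read the relevant edges of $G[A(B,r)]$ directly off the explicit specification of $A(B,r)$. Throughout we use the paper's indexing: vertices are the block sizes $d,\dots,2d-1$, and there is an edge $i\to j$ iff $A_{i,j}\neq 0$. We then pass to offset coordinates $i\mapsto i-d\in\mathbb{Z}_d$. The key claim is that every vertex $k$ has an out-edge to the vertex whose offset is $(k-d)-r \bmod d$. This means the map $\varphi(x)=x-r \pmod d$ on $\mathbb{Z}_d$ is realized by edges of the graph.

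First, we verify this claim case by case from the three families of rows, recalling that $r<B/2<d$.
\begin{itemize}
\item \emph{First row.} $A_{d,B+1-r}=2(B+1-r)\neq 0$ and $B+1-r=2d-r$. So there is an edge $d\to 2d-r$, i.e.\ offset $0\mapsto d-r\equiv -r$.
\item \emph{Wraparound rows}, $d<k\le d+r-1$. Here $A_{k,d+k-r}=d+k-r\neq 0$, so offset $x=k-d\in\{1,\dots,r-1\}$ maps to $x-r+d\equiv x-r$. The target $d+k-r$ lies in $\{d+1,\dots,2d-1\}$ because $1\le k-d\le r-1<d$.
\item \emph{Remaining rows}, $d+r\le k\le B$. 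Here $A_{k,k-r}=k-r\neq0$, so offset $x\ge r$ maps to $x-r$.
\end{itemize}
All these entries are strictly positive, since they are positive block sizes. Hence in every case there is an edge from $d+x$ to $d+\varphi(x)$.

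Next, we identify $S$ with an orbit of $\varphi$. Starting from offset $0$ and iterating $\varphi$ gives $0,-r,-2r,\dots$, which runs exactly through the cyclic subgroup $\langle r\rangle$ of $\mathbb{Z}_d$. Because $\varphi$ is a translation, hence a bijection, the iterates $-jr$ for $j=0,\dots,m-1$ are pairwise distinct, where $m=|\langle r\rangle|=d/\gcd(r,d)$. The iterate $-mr$ returns to $0$.

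Translating back, the vertices $d,\ d+\varphi(0),\ \dots,\ d+\varphi^{m-1}(0)$ are distinct, consecutive ones are joined by edges, and the last has an edge back to $d$. These vertices are exactly the elements of $S=d+\langle r\rangle$, so $S$ forms a directed cycle in $G[A(B,r)]$. Since $1\le r<d$, we have $r\not\equiv 0 \pmod d$, so $m\ge 2$ and the cycle is nondegenerate: no self-loop is being counted, even though the diagonal entries are nonzero.

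There is no real obstacle here. The only step needing care is the wraparound and first-row bookkeeping. It must be checked that the column indices $2d-r$ and $d+k-r$ fall inside $\{d,\dots,2d-1\}$ and agree with subtraction of $r$ modulo $d$. This is where the standing assumption $r<B/2$, and hence $r<d$, is used.
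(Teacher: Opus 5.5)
Your proof is correct and is essentially the paper's argument. The paper finds the same edges by reading them off the columns, i.e.\ from the effect of a batch hitting a block of size $k$: this gives $d+x\to d+((x+r)\bmod d)$ in $G[A(B,r)^T]$, which is your translation by $-r$ in $G[A(B,r)]$ with the orientation reversed. Your row-by-row range checks and the explicit orbit-length argument (including $m\ge 2$) spell out details the paper leaves implicit.
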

\begin{proof}
 It suffices to demonstrate $S$ is a cycle in $G\left[A\left(B,r\right)^{T}\right]$.
For this purpose, it would suffice to show that for each $d+x\in S$,
there exists an edge from $d+x$ to $d+\left(x+r\right)\bmod d$
in $G\left[A\left(B,r\right)^{T}\right]$. Let $k=d+x\in S$, we know
that the $k^{th}$ column in $A\left(B,r\right)$ is $\Delta^{k}$,
i.e. the change we see when a batch of insertions hits a block of size
$k$. Hitting a block of size $k$ increases the number of blocks
of size $d+\left(x+r\right)\bmod d\in S$, and thus the
$\left(d+\left(x+r\right)\bmod d\right)^{th}$ component of
the $k^{th}$ column is non-zero, and therefore we have an edge from
$d+x$ to $d+\left(x+r\right)\bmod d$ in $G\left[A\left(B,r\right)^{T}\right]$,
as needed.
\end{proof}
\begin{corollary}\label{cor:strongConnectivity}

Let $A\left(B,r\right)_{S}$ be the principal submatrix of $A\left(B,r\right)$
corresponding to $S$. Then $G\left[A\left(B,r\right)_{S}\right]$
is strongly connected.

\end{corollary}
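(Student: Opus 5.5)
The plan is to derive the corollary directly from the preceding lemma, which says that $S$ forms a cycle in $G\left[A\left(B,r\right)\right]$. The statement concerns the principal submatrix $A\left(B,r\right)_{S}$. Its underlying graph $G\left[A\left(B,r\right)_{S}\right]$ is exactly the subgraph of $G\left[A\left(B,r\right)\right]$ induced on $S$, once we use the paper's indexing convention: rows and columns of the submatrix keep the block-size labels they carry in the full matrix. This holds because an entry $(A_S)_{i,j}$ with $i,j\in S$ is literally the entry $A_{i,j}$ of the full matrix. So every edge of $G[A]$ between two vertices of $S$ is also an edge of $G[A_S]$.

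First, I will list the cycle edges. From the previous lemma (more precisely, from its proof), for each $d+x\in S$ the entry $A_{d+(x+r)\bmod d,\;d+x}$ is nonzero. This gives an edge between $d+(x+r)\bmod d$ and $d+x$ in $G[A]$, and both endpoints lie in $S$. The direction depends on whether one reads edges off $A$ or $A^T$. The paper's proof phrased it for $A^T$; reversing all edges preserves strong connectivity, so the direction does not matter.

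Second, I will check that these edges really form a single cycle through all of $S$. The map $x\mapsto x+r \pmod d$ permutes $\langle r\rangle$. Starting from $0$, repeated addition of $r$ visits every element of $\langle r\rangle$ before returning to $0$, since $\langle r\rangle$ is cyclic and generated by $r$. So from any vertex $d+x$ we reach any other $d+y$ with $y=x+mr$ for some $m$, using only the edges above, all of which stay inside $S$. The degenerate case $|S|=1$, which occurs only if $r\equiv 0 \pmod d$ and so is excluded here since $r<B/2<d$, would be trivially strongly connected anyway.

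The only real subtlety, and the step I expect to need care with, is making sure the edges used lie within the principal submatrix. One might worry that the edge in the wraparound case (when $d+x+r>B$) lands at index $d+x+r-d$, while the column also has a nonzero entry in row $d$. What matters is that the target $d+(x+r)\bmod d$ is itself in $S$, which holds by closure of the subgroup. Extra entries outside $S$ are simply discarded and do no harm.
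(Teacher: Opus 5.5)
Your proof is correct and follows the paper's route: identify $G[A(B,r)_S]$ with the subgraph of $G[A(B,r)]$ induced on $S$, then invoke the cycle on $S$ from the preceding lemma. Your extra checks simply make explicit what the paper leaves implicit, namely that reversing edge direction is harmless and that the orbit of $x\mapsto x+r$ covers all of $\langle r\rangle$, including the wraparound and exact-split columns.
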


This follows from the fact that $G\left[A\left(B,r\right)_{S}\right]$ is the induced subgraph of $G\left[A\left(B,r\right)\right]$ on the vertex set $S$, and we've proven that $G\left[A\left(B,r\right)\right]$ has a cycle on vertex set $S$. Therefore $G\left[A\left(B,r\right)_{S}\right]$ is strongly connected (however, in general it is not simply a cycle graph). 

\begin{lemma}

$r\in\sigma\left(A\left(B,r\right)_{S}\right)$, and further, $A\left(B,r\right)_{S}$
has a positive (entry-wise) left eigenvector corresponding to $r$.

\end{lemma}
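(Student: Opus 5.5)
The plan is to reuse the left eigenvector from \Cref{lem:eigenvalueOfTransitionMatrix}, restricted to $S$. Let $w_S$ be the subvector of $w=(d,\dots,2d-1)$ indexed by $S$, so that $[w_S]_k=k$ for every $k\in S$. This vector is clearly entry-wise positive. The whole argument then reduces to a single structural observation: for every $k\in S$, all nonzero entries of the $k^{th}$ column of $A(B,r)$ lie in rows indexed by $S$. In words, hitting a block whose size lies in $S$ only ever creates blocks whose sizes lie in $S$.

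To verify this, write $k=d+x$ with $x\in\langle r\rangle$ and use the case analysis from the proof of \Cref{lem:ColumnsOfTransitionMatrix}.
\begin{itemize}
\item The diagonal entry sits in row $k\in S$.
\item If $d\le k<2d-r$, the only other nonzero entry is in row $k+r=d+(x+r)$. Here $x+r<d$, so $x+r=(x+r)\bmod d\in\langle r\rangle$.
\item If $k=2d-r$, the other nonzero entry is in row $d=d+0$, and $0\in\langle r\rangle$.
\item If $2d-r<k\le 2d-1$, the other nonzero entries are in row $d$ and row $k+r-(B+1)+d=d+(x+r-d)$. Here $0\le x+r-d<d$, so $x+r-d=(x+r)\bmod d\in\langle r\rangle$.
\end{itemize}
In every case the rows involved are in $S$, because $\langle r\rangle$ is a subgroup of $\mathbb{Z}_d$ and is therefore closed under adding $r$ modulo $d$.

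Given this closure property, fix a column $k\in S$ of $A(B,r)_S$. We have
\[
[w_S^T A(B,r)_S]_k=\sum_{i\in S} i\,A_{i,k}=\sum_{i} i\,A_{i,k}=[w^TA(B,r)]_k=r\cdot k .
\]
The middle equality holds because $A_{i,k}=0$ for all $i\notin S$. The last equality is exactly the computation in \Cref{lem:eigenvalueOfTransitionMatrix}: each batch adds $r$ elements, so the net change in element count is $r$. Hence $w_S^TA(B,r)_S=r\,w_S^T$. This shows at once that $r\in\sigma(A(B,r)_S)$ and that $w_S$ is a positive left eigenvector for it.

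The only step needing care is the closure check. In particular, the wraparound index must be identified with $(x+r)\bmod d$; this is where the hypothesis $r<B/2$ (so $r<d$) keeps $x+r-d$ within $\{0,\dots,d-1\}$. Everything else is bookkeeping.
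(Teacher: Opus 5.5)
Your proposal is correct and follows essentially the same route as the paper. Both arguments restrict $w=(d,\dots,2d-1)$ to $S$, observe that each column of $A(B,r)$ indexed by $k=d+x\in S$ is supported only on rows in $S$ (namely $k$, $d+(x+r)\bmod d$, and possibly $d$), and then inherit $w_S^T A(B,r)_S=r\,w_S^T$ from the full-matrix identity; your explicit case check, including the use of $r<d$ for the wraparound index, is simply a more detailed version of the paper's one-line support observation.
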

\begin{proof}

We know that $\Delta^{k}$ reflects the change in blocks we see when
an insertion hits a block of size $k$. Observe that when a block
of size $k\in S$ is hit, we only see changes in the quantities of
blocks which sizes are in $S$. More specifically, if $k=d+x\in S$,
the only non-zero entries are $k$, $d+\left(x+r\right)\bmod d\in S$,
and possibly $d\in S$. Overall in this case $\Delta^{k}$ only has
support on $S$.

Therefore, for every $k\in S$, and for $w=\left(d,...,2d-1\right)$,
we have:

\[
\left(w_{S}\right)^{T}\left[A\left(B,r\right)_{S}\right]_{k}=w^{T}\left[A\left(B,r\right)\right]_{k}
\] 
Now, according to the proof of \Cref{lem:eigenvalueOfTransitionMatrix} $w$ is a left-eigenvector of $A(B,r)$, thus:
\[
\left(w_{S}\right)^{T}\left[A\left(B,r\right)_{S}\right]_{k}=r\cdot\left(w^{T}\right)_{k}=r\cdot\left(\left(w_{S}\right)^{T}\right)_{k} 
\]

 Overall, $\left(w_{S}\right)^{T}$ is a positive left eigenvector
of $A\left(B,r\right)_{S}$ corresponding to eigenvalue $r$, as needed.
\end{proof}

\begin{lemma} \label{lem:FinalSpectralCharacterization}

$r$ is a simple eigenvalue of $A\left(B,r\right)_{S}$, for which
there exists a positive (right) eigenvector, and for every other eigenvalue
$\lambda$ we have $\mathcal{R}e\left(\lambda\right)<r$.

\end{lemma}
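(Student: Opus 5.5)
This follows by applying \Cref{lem:MetzlerMatrices} to the principal submatrix $M:=A(B,r)_S$. We need to check three hypotheses: $M$ is Metzler, $M$ is irreducible, and $M$ has a positive left eigenvector for some positive eigenvalue.

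\textbf{Metzler property.} From the explicit description of $A(B,r)$ (equivalently, \Cref{lem:ColumnsOfTransitionMatrix}), the only negative entries of $A(B,r)$ are the diagonal entries $A_{k,k}=-k$. Every off-diagonal entry is one of $k$, $2k$, $d+k-r$, $k-r$ or $0$, and all of these are nonnegative. A principal submatrix keeps exactly the off-diagonal entries of $A(B,r)$ indexed by pairs in $S\times S$. Hence $M$ is Metzler.

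\textbf{Irreducibility and the eigenvector.} Irreducibility of $M$ is \Cref{cor:strongConnectivity}: $G[M]$ is strongly connected. The preceding lemma shows that $w_S^T$, the vector with entries $(w_S)_k=k$ for $k\in S$, is an entrywise positive left eigenvector of $M$ with eigenvalue $r>0$.

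\textbf{Conclusion.} With all three hypotheses in place, \Cref{lem:MetzlerMatrices} gives exactly the two claims. First, $r$ is a simple eigenvalue of $M$ with a positive right eigenvector (in that lemma, ``eigenvector'' means an ordinary right eigenvector, via Perron--Frobenius applied to $M+cI$). Second, every other eigenvalue $\lambda$ of $M$ satisfies $\mathcal{R}e(\lambda)<r$.

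The only point needing care is the gap between two graph conventions. \Cref{cor:strongConnectivity} was obtained from a cycle found in $G[A^T]$, while irreducibility asks for strong connectivity of $G[M]$. These agree, because reversing every edge of a directed graph preserves strong connectivity, and restricting to $S$ commutes with transposition. Similarly, the Perron--Frobenius input in \Cref{lem:MetzlerMatrices} is applied to $M+cI$ with $c=\max_k|M_{k,k}|$. This matrix is nonnegative and irreducible because adding $cI$ changes only diagonal entries, and a nonzero diagonal only adds self-loops, which do not affect strong connectivity. No other obstacle is expected, since the substantive work has already been done in the preceding lemmas.
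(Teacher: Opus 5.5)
Your proposal is correct and matches the paper's proof: you check that $A(B,r)_S$ is Metzler, get irreducibility from \Cref{cor:strongConnectivity}, use $w_S$ as the positive left eigenvector for eigenvalue $r$, and then apply \Cref{lem:MetzlerMatrices}. You also verify the sign pattern of the entries and note that edge reversal preserves strong connectivity, two details the paper leaves implicit.
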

\begin{proof}
 As a principal submatrix of a Metzler matrix, $A\left(B,r\right)_{S}$
is a Metzler matrix. According to \Cref{cor:strongConnectivity}, $G\left[A\left(B,r\right)_{S}\right]$
is strongly connected, and therefore $A\left(B,r\right)_{S}$ is irreducible
Metzler. We've demonstrated an entry-wise positive $w$ such that
$w^{T}\cdot\left(A\left(B,r\right)_{S}\right)=r\cdot w^{T}$. Therefore
by \Cref{lem:MetzlerMatrices} we can conclude
$r$ is simple, has a positive eigenvector, and for every other
eigenvalue $\lambda$ we have $\mathcal{R}e\left(\lambda\right)<r$.
\end{proof}

\subsection{Proving our recurrences converge} 
\label{sec:ConverganceTheorem}

As presented in \Cref{sec:Preliminaries}, we model the behavior of even splitting through recurrences of the form $v_{n+r}=\left(I+\frac{1}{n}\cdot A(B,r)\right)\cdot v_{n}$. Having provided spectral characterizations for $A(B,r)$ in \Cref{subsec:SpectralCharacterization}, we spend this section proving that under these characterization our  recurrence relations converge (with some normalization).

\begin{lemma} \label{lem:boundForConverganceI}
Let $n_{0},r$ be positive integers, and let $\lambda\in\mathbb{C}$. Then: 
\[
\left|\prod_{k=\lceil\left|\lambda\right|\rceil}^{m-1}\left(1+\frac{\lambda}{n_{0}+k\cdot r}\right)\right|\leq m^{\frac{\mathcal{R}e\left(\lambda\right)}{r}}\cdot\alpha_{\lambda,n_{0},r}
\]
 where $\alpha_{\lambda,n_{0},r}$ is a constant depending only on $\lambda,n_{0},r$.    
\end{lemma}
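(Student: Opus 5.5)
The plan is to pass to logarithms and compare the resulting sum with a harmonic-type sum. Set $k_0=\lceil|\lambda|\rceil$. For $k\geq k_0$ we have $n_0+kr\geq n_0+|\lambda| r>|\lambda|$, because $n_0\geq 1$ and $r\geq 1$. Hence $z_k:=\lambda/(n_0+kr)$ satisfies $|z_k|<1$; if $k_0=0$ this holds trivially since $|\lambda|<1\le n_0$. This is exactly why the product starts at $k_0$: no factor vanishes, and the principal logarithm $\log(1+z_k)$ is defined by its power series. Since $|1+z|=e^{\mathcal{R}e\log(1+z)}$, the task reduces to bounding $\sum_{k=k_0}^{m-1}\mathcal{R}e\log(1+z_k)$ from above by $\frac{\mathcal{R}e(\lambda)}{r}\ln m+C$.

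First I will use the elementary estimate $\mathcal{R}e\log(1+z)\leq \mathcal{R}e(z)+E(z)$ with an error term $E$ that is summable along $z_k$. One route is $|1+z|^2=1+2\mathcal{R}e(z)+|z|^2\leq e^{2\mathcal{R}e(z)+|z|^2}$, which gives $\ln|1+z|\leq \mathcal{R}e(z)+|z|^2/2$ for every $z$. This is cleaner than the power series and needs no restriction on $|z|$, though the restriction is still what makes the product nonzero. Summing gives
\[
\ln\left|\prod_{k=k_0}^{m-1}(1+z_k)\right|\leq \mathcal{R}e(\lambda)\sum_{k=k_0}^{m-1}\frac{1}{n_0+kr}+\frac{|\lambda|^2}{2}\sum_{k\geq k_0}\frac{1}{(n_0+kr)^2}.
\]
The second sum converges to a constant depending only on $\lambda,n_0,r$.

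Next I will compare $\sum_{k=k_0}^{m-1}\frac{1}{n_0+kr}$ with $\frac1r\ln m$. Writing $\frac{1}{n_0+kr}=\frac{1}{r}\cdot\frac{1}{k+n_0/r}$, the sum equals $\frac1r\bigl(\ln m+O(1)\bigr)$. The $O(1)$ depends only on $n_0,r,k_0$ and can be seen by an integral comparison, or via $\frac{1}{k+c}-\frac1k=O(1/k^2)$ together with $H_{m}=\ln m+O(1)$. The subtle point is that the sign of $\mathcal{R}e(\lambda)$ is arbitrary. So I need a two-sided bound $\bigl|\sum_{k=k_0}^{m-1}\frac{1}{n_0+kr}-\frac{1}{r}\ln m\bigr|\leq C'$, not just an upper bound; this handles both signs, giving $\mathcal{R}e(\lambda)\cdot\text{sum}\leq \frac{\mathcal{R}e(\lambda)}{r}\ln m+|\mathcal{R}e(\lambda)|C'$.

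Exponentiating yields the claim with $\alpha_{\lambda,n_0,r}=\exp\bigl(|\lambda|C'+\tfrac{|\lambda|^2}{2}\sum_{k\ge k_0}(n_0+kr)^{-2}\bigr)$. The small cases $m\le k_0$ (empty product) are absorbed by enlarging the constant, since $m^{\mathcal{R}e(\lambda)/r}$ is bounded below for $1\le m\le k_0$. I expect the only real care to be the two-sided harmonic comparison, with constants uniform in $m$. None of it looks like a genuine obstacle.
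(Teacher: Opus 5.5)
Your proposal is correct and follows essentially the same route as the paper. Both use $\ln|1+t|\le \mathcal{R}e(t)+\tfrac12|t|^2$ and bound the squared terms by a convergent sum. Both handle the sign of $\mathcal{R}e(\lambda)$ with a two-sided comparison of $\sum_{k}\frac{1}{n_0+kr}$ against $\frac1r\ln m$; the paper does this with its $\beta_{n_0,\lambda,r}$ case split between integral lower and upper bounds.
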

\begin{proof}
Observe that for any complex $t$ we have: 
\[
\left|1+t\right|=\sqrt{1+2\mathcal{R}e\left(t\right)+\left|t\right|^{2}}\leq\sqrt{\exp\left(2\mathcal{R}e\left(t\right)+\left|t\right|^{2}\right)}\leq \exp\left(\mathcal{R}e\left(t\right)+\frac{1}{2}\cdot\left|t^{2}\right|\right).
\]
where $\exp(x):=e^x$ and $\mathcal{R}e(\lambda)$ denotes the real part of $\lambda$.

Therefore, for $t=\frac{\lambda}{n_{0}+k\cdot r}$ we have:
\begin{equation} \label{inq:1}
\ln\left(\left|\left(1+\frac{\lambda}{n_{0}+k\cdot r}\right)\right|\right)\leq\ln\left(\exp\left(\mathcal{R}e\left(t\right)+\frac{1}{2}\cdot\left|t^{2}\right|\right)\right)    
\end{equation}

Now observe: 
\[
\left|\prod_{k=\lceil\left|\lambda\right|\rceil}^{m-1}\left(1+\frac{\lambda}{n_{0}+k\cdot r}\right)\right|=\prod_{k=\lceil\left|\lambda\right|\rceil}^{m-1}\left|\left(1+\frac{\lambda}{n_{0}+k\cdot r}\right)\right|=\exp\left(\sum_{k=\lceil\left|\lambda\right|\rceil}^{m-1}\ln\left(\left|\left(1+\frac{\lambda}{n_{0}+k\cdot r}\right)\right|\right)\right)
\]
Therefore, according to \Cref{inq:1}, we get:
\[
\left|\prod_{k=\lceil\left|\lambda\right|\rceil}^{m-1}\left(1+\frac{\lambda}{n_{0}+k\cdot r}\right)\right|\leq  \exp\left(\sum_{k=\lceil\left|\lambda\right|\rceil}^{m-1}\mathcal{R}e\left(t\right)+\frac{1}{2}\cdot\sum_{k=1}^{m-1}\left|t^{2}\right|\right)
\]

We bound both sums in the exponent separately, begining with $\sum_{k=\lceil\left|\lambda\right|\rceil}^{m-1}\mathcal{R}e\left(t\right)$:

\[
\sum_{k=\lceil\left|\lambda\right|\rceil}^{m-1}\mathcal{R}e\left(t\right)=\mathcal{R}e\left(\lambda\right)\cdot\sum_{k=\lceil\left|\lambda\right|\rceil}^{m-1}\frac{1}{n_{0}+k\cdot r}:=\mathcal{R}e\left(\lambda\right)\cdot \varphi\left(m\right).
\]
\[
\int_{\lceil\left|\lambda\right|\rceil}^{m}\frac{dx}{n_{0}+x\cdot r}\leq \varphi\left(m\right)\leq\frac{1}{n_{0}+L\cdot r}+\int_{\lceil\left|\lambda\right|\rceil}^{m-1}\frac{dx}{n_{0}+x\cdot r}.
\]

To evaluate the lower bound:

\[
\int_{\lceil\left|\lambda\right|\rceil}^{m}\frac{dx}{n_{0}+x\cdot r}=\frac{1}{r}\cdot\ln\left(\frac{n_{0}+m\cdot r}{n_{0}+\lceil\left|\lambda\right|\rceil\cdot r}\right)\geq\frac{1}{r}\cdot\ln\left(m\right)+\gamma_{n_{0},\lambda,r}.
\]
 where $\gamma_{n_{0},\lambda,r}$ only depends on $n_{0},\lambda,r$.

To evaluate the upper bound:

\[
\frac{1}{n_{0}+L\cdot r}+\int_{\lceil\left|\lambda\right|\rceil}^{m-1}\frac{dx}{n_{0}+x\cdot r}=\frac{1}{n_{0}+\lceil\left|\lambda\right|\rceil\cdot r}+\frac{1}{r}\cdot\ln\left(\frac{n_{0}+\left(m-1\right)\cdot r}{n_{0}+\lceil\left|\lambda\right|\rceil\cdot r}\right)\leq\frac{1}{r}\cdot\ln\left(m\right)+\gamma_{n_{0},\lambda,r}^{\prime}.
\]
where $\gamma_{n_{0},\lambda,r}^{\prime}$ only depends on $n_{0},\lambda,r$.

This implies:
\[
\frac{1}{r}\cdot\ln\left(m\right)+\gamma_{n_{0},\lambda,r}\leq \varphi\left(m\right)\leq\frac{1}{r}\cdot\ln\left(m\right)+\gamma_{n_{0},\lambda,r}^{\prime}.
\]
So, overall we have:

\begin{equation}\label{eq:firstBound}
\sum_{k=\lceil\left|\lambda\right|\rceil}^{m-1}\mathcal{R}e\left(t\right)=\mathcal{R}e\left(\lambda\right)\cdot \varphi\left(m\right)\leq\mathcal{R}e\left(\lambda\right)\left(\frac{1}{r}\cdot\ln\left(m\right)+\beta_{n_{0},\lambda,r}\right)    
\end{equation}
where
\[
\beta_{n_{0},\lambda,r}=\begin{cases}
\gamma_{n_{0},\lambda,r}^{\prime} & \mathcal{R}e\left(\lambda\right)\geq0\\
\gamma_{n_{0},\lambda,r} & \mathcal{R}e\left(\lambda\right)<0.
\end{cases}
\]

Note $\beta_{n_{0},\lambda,r}$ only depends on $n_{0},\lambda$ and $r$ and is in particular independent of $m$.

Now to bound $\frac{1}{2}\cdot\sum_{k=1}^{m-1}\left|t^{2}\right|$:

\begin{equation}\label{secondBound}
\frac{1}{2}\cdot\sum_{k=\lceil\left|\lambda\right|\rceil}^{m-1}\left|t^{2}\right|=\frac{1}{2}\cdot\left|\lambda\right|^{2}\cdot\sum_{k=\lceil\left|\lambda\right|\rceil}^{m-1}\frac{1}{\left(n_{0}+k\cdot r\right)^{2}}\leq\frac{1}{2}\cdot\left|\lambda\right|^{2}\cdot\sum_{i=1}^{\infty}\frac{1}{i^{2}}\leq\left|\lambda\right|^{2}.    
\end{equation}

Overall, applying \Cref{eq:firstBound} and \Cref{secondBound} we achieve:

\[ \exp\left(\sum_{k=\lceil\left|\lambda\right|\rceil}^{m-1}\mathcal{R}e\left(t\right)+\frac{1}{2}\cdot\sum_{k=\lceil\left|\lambda\right|\rceil}^{m-1}\left|t^{2}\right|\right)\leq \exp\left(\frac{\mathcal{R}e\left(\lambda\right)}{r}\cdot\ln\left(m\right)+\mathcal{R}e\left(\lambda\right)\cdot\beta_{n_{0},\lambda,r}+\left|\lambda\right|^{2}\right)=m^{\frac{\mathcal{R}e\left(\lambda\right)}{r}}\cdot\alpha_{n_{0},\lambda,r}.\]
\end{proof}

Lemma~\ref{lem:boundForConverganceI} tells us that, up to a constant, the contribution of an eigenvalue $\lambda$ behaves like a power of $m$ with exponent $\mathcal{R}e(\lambda)/r$. To handle full Jordan blocks, we next need a companion bound for the nilpotent part, which turns out to contribute only polylogarithmic growth.

\begin{lemma} \label{lem:boundForConverganceII}
Let $n_{0},r$ be positive integers, let $\lambda\in\mathbb{C}$, and let $N_{\lambda}$ be a nilpotent matrix ($N_{\lambda}^{d}=0$). Then: 
\[
\left|\prod_{k=\lceil\left|\lambda\right|\rceil}^{m-1}\left(I+\frac{N_{\lambda}}{n_{0}+k\cdot r+\lambda}\right)\right|\leq C_{N_{\lambda}}\cdot\left(\ln\left(m\right)\right)^{d}
\]
for sufficiently large $m$, where $C_{N_{\lambda}}$ is a constant depending only on $N_{\lambda}$ and independent of $m$ (where $\left|\cdot\right|$ refers to the standard operator norm over complex matrices).
\end{lemma}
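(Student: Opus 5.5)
The plan is to expand the product as a polynomial in the nilpotent matrix $N_{\lambda}$. The factors $I+\frac{N_{\lambda}}{n_{0}+k r+\lambda}$ all commute, since each is a polynomial in $N_{\lambda}$. Writing $a_{k}=\frac{1}{n_{0}+k r+\lambda}$ and $K_{0}=\lceil|\lambda|\rceil$, we therefore get
\[
\prod_{k=K_{0}}^{m-1}\left(I+a_{k}N_{\lambda}\right)=\sum_{j=0}^{d-1}e_{j}\left(a_{K_{0}},\ldots,a_{m-1}\right)N_{\lambda}^{j},
\]
where $e_{j}$ is the $j$-th elementary symmetric polynomial. Every term with $j\geq d$ vanishes because $N_{\lambda}^{d}=0$, so the expansion has at most $d$ terms regardless of $m$. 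This is the key structural point.

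\textbf{Step 1: lower bound on the denominators.} For $k\geq K_{0}\geq|\lambda|$ and $r\geq1$,
\[
|n_{0}+k r+\lambda|\geq n_{0}+k r-|\lambda|\geq n_{0}+k(r-1)\geq n_{0}>0,
\]
and more usefully $|n_{0}+kr+\lambda|\geq c\,(n_{0}+kr)$ for a constant $c>0$ depending only on $\lambda,n_{0},r$. This is the reason the product starts at $\lceil|\lambda|\rceil$: no denominator vanishes or becomes small. With $c$ in hand, $|a_{k}|\leq\frac{C'}{n_{0}+kr}$, and hence
\[
\sum_{k=K_{0}}^{m-1}|a_{k}|\leq C''\ln m
\]
for $m\geq2$, by the same integral comparison used in the proof of \Cref{lem:boundForConverganceI}.

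\textbf{Step 2: bound each term.} Bound each elementary symmetric polynomial by
\[
|e_{j}(a)|\leq\frac{1}{j!}\left(\sum_{k}|a_{k}|\right)^{j}\leq\frac{(C''\ln m)^{j}}{j!}.
\]
Submultiplicativity of the operator norm then gives $|N_{\lambda}^{j}|\leq|N_{\lambda}|^{j}$.

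\textbf{Step 3: sum the terms.} Summing over $j=0,\ldots,d-1$ gives
\[
\left|\prod_{k=K_{0}}^{m-1}\left(I+a_{k}N_{\lambda}\right)\right|\leq\sum_{j=0}^{d-1}\frac{\left(C''|N_{\lambda}|\ln m\right)^{j}}{j!}.
\]
Once $\ln m\geq1$, each $(\ln m)^{j}\leq(\ln m)^{d}$, so the whole bound is at most $C_{N_{\lambda}}(\ln m)^{d}$ with
\[
C_{N_{\lambda}}=\sum_{j<d}\frac{\left(C''|N_{\lambda}|\right)^{j}}{j!}.
\]
This constant is independent of $m$.

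Strictly speaking, $C_{N_{\lambda}}$ also depends on $\lambda,n_{0},r$ through $C''$; the statement's "depending only on $N_{\lambda}$" should be read with those parameters fixed. The only mildly delicate step is Step 1, namely ensuring the denominators stay comparable to $n_{0}+kr$ uniformly in $k$. Everything else follows directly from nilpotency truncating the expansion after $d$ terms.
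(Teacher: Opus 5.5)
Your proposal is correct and follows essentially the same route as the paper: expand the commuting product into elementary symmetric polynomials in the $a_k$ times $N_\lambda^j$, truncate at $j\le d-1$ by nilpotency, bound $|e_j|\le\frac{1}{j!}\bigl(\sum_k|a_k|\bigr)^j$, and show $\sum_k|a_k|=O(\ln m)$. The only difference is in that last bound. The paper observes that the denominators $n_0+kr-\lceil|\lambda|\rceil$ are distinct positive integers at most $n_0+mr$, so $\sum_k|a_k|\le H_{n_0+mr}\le 2\ln m$ for large $m$. This moves the dependence on $\lambda,n_0,r$ into the threshold on $m$ and leaves $C_{N_\lambda}$ depending only on $|N_\lambda|$ and $d$, which resolves the caveat you raised.
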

\begin{proof}

First if we let $a_{k}=\frac{1}{n_{0}+k\cdot r+\lambda}$, then:
\[
\left|\prod_{k=\lceil\left|\lambda\right|\rceil}^{m-1}\left(I+\frac{N_{\lambda}}{n_{0}+k\cdot r+\lambda}\right)\right|=\left|\prod_{k=\lceil\left|\lambda\right|\text{\ensuremath{\rceil}}}^{m-1}\left(I+a_{k}\cdot N_{\lambda}\right)\right|
\]

Observe that all matrices in the product commute and thus:
\begin{equation}\label{inq:someInq}
\left|\prod_{k=\lceil\left|\lambda\right|\text{\ensuremath{\rceil}}}^{m-1}\left(I+a_{k}\cdot N_{\lambda}\right)\right|\leq 
\left|I+\sum_{j=1}^{d-1}p_{j}\left(\vec{a}\right)\cdot\left(N_{\lambda}\right)^{j}\right| 
\end{equation}
where
\[
\forall j\geq 1: p_{j}\left(\vec{a}\right)=\sum_{\lceil\left|\lambda\right|\text{\ensuremath{\rceil}}\leq k_{1}<...<k_{j}\leq m-1}a_{k_{1}}\cdot...\cdot a_{k_{j}}
\]
It would thus suffice to provide the desired upper bound for the right hand side of \Cref{inq:someInq}, which we will focus on for the remainder of the proof.
By the triangle inequality and submultiplicativity of the operator norm:

\begin{equation}\label{eq:Upperbound2}
\left|I+\sum_{j=1}^{d-1}p_{j}\left(\vec{a}\right)\cdot\left(N_{\lambda}\right)^{j}\right|\leq 1+\sum_{j=1}^{d-1}\left|p_{j}\left(\vec{a}\right)\right|\cdot\left|N_{\lambda}\right|^{j}
\end{equation}
Using a standard bound (counting argument): 
\[
\left|p_{j}\left(\vec{a}\right)\right|=\left|\sum_{\lceil\left|\lambda\right|\text{\ensuremath{\rceil}}\leq k_{1}<...<k_{j}\leq m-1}a_{k_{1}}\cdot...\cdot a_{k_{j}}\right|\leq_{\triangle}\sum_{\lceil\left|\lambda\right|\text{\ensuremath{\rceil}}\leq k_{1}<...<k_{j}\leq m-1}\left|a_{k_{1}}\right|\cdot...\cdot\left|a_{k_{j}}\right|\leq\frac{1}{j!}\cdot\left(\sum_{k=\lceil\left|\lambda\right|\text{\ensuremath{\rceil}}}^{m-1}\left|a_{k}\right|\right)^{j}.
\]

Therefore, 
\begin{equation}
\label{eq:UpperBound3}
1+\sum_{j=1}^{d-1}\left|p_{j}\left(\vec{a}\right)\right|\cdot\left|N_{\lambda}\right|^{j}\leq1+\sum_{j=1}^{d-1}\frac{1}{j!}\cdot\left(\sum_{k=\lceil\left|\lambda\right|\text{\ensuremath{\rceil}}}^{m-1}\left|a_{k}\right|\right)^{j}\cdot\left|N_{\lambda}\right|^{j}\,
\end{equation}

Now to bound $\sum_{k=\lceil\left|\lambda\right|\text{\ensuremath{\rceil}}}^{m-1}\left|a_{k}\right|$: 
\[
\sum_{k=\lceil\left|\lambda\right|\text{\ensuremath{\rceil}}}^{m-1}\left|a_{k}\right|=\sum_{k=\lceil\left|\lambda\right|\text{\ensuremath{\rceil}}}^{m-1}\frac{1}{\left|n_{0}+k\cdot r+\lambda\right|}\leq\sum_{k=\lceil\left|\lambda\right|\text{\ensuremath{\rceil}}}^{m-1}\frac{1}{n_{0}+k\cdot r-\lceil\left|\lambda\right|\rceil}\leq\sum_{i=1}^{n_{0}+m\cdot r}\frac{1}{i}\leq\ln\left(n_{0}+m\cdot r\right)+2\leq2\cdot\ln\left(m\right).
\]
Therefore:
\[
1+\sum_{j=1}^{d-1}\frac{1}{j!}\cdot\left(\sum_{k=\lceil\left|\lambda\right|\text{\ensuremath{\rceil}}}^{m-1}\left|a_{k}\right|\right)^{j}\cdot\left|N_{\lambda}\right|^{j}\,\leq 1+\sum_{j=1}^{d-1}\frac{1}{j!}\cdot\left(2\ln\left(m\right)\right)^{j}\cdot\left|N_{\lambda}\right|^{j}
\]
This is at most $\left(\ln\left(m\right)\right)^{d}\cdot C_{N_{\lambda}}$ where $C_{N_{\lambda}}$ is a constant depending on $N_{\lambda}$ (specifically on its operator norm and $d$). Thus, chaining the inequalities from \Cref{eq:Upperbound2} and \Cref{eq:UpperBound3} completes the proof.
\end{proof}

Our convergence theorem uses the notion of a spectral projection. In short, a spectral projection is a projection onto a generalized eigenspace of a matrix. In our recurrences, as we've demonstrated in \Cref{subsec:SpectralCharacterization}, $r$ is a simple eigenvalue of the matrix. And thus, the generalized eigenspace corresponding to $r$ is just equal to the eigenspace corresponding to $r$. We use the fact that the spectral projection for eigenvalue $\lambda$ can be represented by taking the Jordan form of the matrix: $SJ S^{-1}$, replacing all Jordan blocks corresponding to $\lambda$ by the identity, and setting all other blocks to the zero matrix. For further reading, refer to \cite{0521386322}.
\begin{theorem}\label{thm:MatrixRecurrenceConvergance}
Assume a matrix recurrence: $\forall n\geq n_{0}:v_{n+r}=\left(I+\frac{1}{n}\cdot A\right)v_{n}$ where $r\in\sigma\left(A\right)$ is a simple positive eigenvalue, and all other eigenvalues $\lambda$ satisfy $\mathcal{R}e\left(\lambda\right)<r$. Then $\lim_{m\to\infty}\frac{v_{n_{0}+m\cdot r}}{n_{0}+m\cdot r}=P_{r}^{A}\left(\frac{v_{n_{0}}}{n_{0}}\right)$ where $P_{r}^{A}$ is the spectral projection operator onto $A$'s generalized eigenspace corresponding to eigenvalue $r$.
\end{theorem}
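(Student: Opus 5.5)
We unroll the recurrence along the arithmetic progression $n_0, n_0+r, n_0+2r,\dots$ and use the Jordan form to decouple eigenvalues. Iterating gives
\[
v_{n_0+mr}=\prod_{k=0}^{m-1}\Bigl(I+\tfrac{1}{n_0+kr}A\Bigr)v_{n_0}.
\]
All factors are polynomials in $A$, so they commute. Write $A=SJS^{-1}$ with $J$ in complex Jordan form, so that
\[
v_{n_0+mr}=S\prod_{k}\Bigl(I+\tfrac{1}{n_0+kr}J\Bigr)S^{-1}v_{n_0}.
\]
The product is block diagonal, so it suffices to understand each Jordan block $J_\lambda=\lambda I+N_\lambda$ separately, and then divide by $n_0+mr$.

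\textbf{The block of the eigenvalue $r$.} Since $r$ is simple, this block is the $1\times 1$ block $(r)$, and the product is the scalar
\[
\prod_{k=0}^{m-1}\Bigl(1+\tfrac{r}{n_0+kr}\Bigr)=\prod_{k=0}^{m-1}\frac{n_0+(k+1)r}{n_0+kr}=\frac{n_0+mr}{n_0},
\]
which telescopes exactly. After dividing by $n_0+mr$ this coordinate becomes exactly $1/n_0$ times the corresponding coordinate of $S^{-1}v_{n_0}$, for every $m$. Conjugating back by $S$, the $r$-component of $v_{n_0+mr}/(n_0+mr)$ is identically $P^A_r(v_{n_0}/n_0)$, using the stated description of the spectral projection as ``replace the $r$-block by $I$, zero out the rest.''

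\textbf{The remaining blocks.} For any other block we must show that $\frac{1}{n_0+mr}\prod_k(I+\frac{1}{n_0+kr}J_\lambda)\to 0$. Factor each term as
\[
I+\tfrac{\lambda I+N_\lambda}{n_0+kr}=\Bigl(1+\tfrac{\lambda}{n_0+kr}\Bigr)\Bigl(I+\tfrac{N_\lambda}{n_0+kr+\lambda}\Bigr).
\]
This is valid whenever $n_0+kr+\lambda\neq 0$, which holds for $k\ge\lceil|\lambda|\rceil$ because then $n_0+kr>|\lambda|$. The finitely many initial factors with $k<\lceil|\lambda|\rceil$ form a fixed matrix independent of $m$, so they only contribute a constant. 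For the rest, the scalar factors and the unipotent factors commute. \Cref{lem:boundForConverganceI} bounds the scalar product by $\alpha\, m^{\mathcal{R}e(\lambda)/r}$, and \Cref{lem:boundForConverganceII} bounds the operator norm of the nilpotent product by $C(\ln m)^d$. Hence the norm of the block is $O\bigl(m^{\mathcal{R}e(\lambda)/r}(\ln m)^d\bigr)$. Dividing by $n_0+mr=\Theta(m)$ gives $O\bigl(m^{(\mathcal{R}e(\lambda)-r)/r}(\ln m)^d\bigr)$. Since $\mathcal{R}e(\lambda)<r$, the exponent is negative and the polylogarithmic factor cannot compensate, so this tends to $0$.

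\textbf{Conclusion.} Because $S$ and $S^{-1}v_{n_0}$ are fixed, the non-$r$ blocks contribute a vanishing term. The limit is therefore $S E_r S^{-1}(v_{n_0}/n_0)=P^A_r(v_{n_0}/n_0)$, where $E_r$ keeps only the $r$-coordinate. The main obstacle is the bookkeeping in the factorization step: justifying that the scalar and nilpotent parts commute, handling the early indices $k<\lceil|\lambda|\rceil$ separately, and confirming that the index ranges match those in \Cref{lem:boundForConverganceI,lem:boundForConverganceII}. I would check these details first, then verify that the polynomial-versus-log comparison closes the argument.
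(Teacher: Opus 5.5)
Your proposal is correct and follows the paper's proof essentially step for step. You unroll along $n_0+kr$ and pass to the complex Jordan form, and you get the exact telescoping factor $(n_0+mr)/n_0$ on the $1\times 1$ block of the simple eigenvalue $r$. For every other block you peel off the first $\lceil|\lambda|\rceil$ factors, split the remaining ones into scalar and unipotent parts, and apply \Cref{lem:boundForConverganceI} and \Cref{lem:boundForConverganceII} to get decay like $m^{(\mathcal{R}e(\lambda)-r)/r}(\ln m)^d$. Your treatment of those initial factors as a fixed constant, bounded via submultiplicativity of the norm, is slightly more careful than the equality the paper writes at that step.
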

\begin{proof}
Observe that iterating the recurrence we get 
\[ v_{n_{0}+m\cdot r}=\left(\prod_{k=0}^{m-1}\left(I+\frac{1}{n_{0}+k\cdot r}\cdot A\right)\right)\cdot v_{n_{0}} \]
\[ \implies\frac{v_{n_{0}+m\cdot r}}{n_{0}+m\cdot r}=\left(\frac{1}{n_{0}+m\cdot r}\cdot\prod_{k=0}^{m-1}\left(I+\frac{1}{n_{0}+k\cdot r}\cdot A\right)\right)\cdot v_{n_{0}}.\]    
\end{proof}

We turn our focus to analyzing the matrix product $P_{m}=\frac{1}{n_{0}+m\cdot r}\cdot\prod_{k=0}^{m-1}\left(I+\frac{1}{n_{0}+k\cdot r}\cdot A\right)$.

Denote $\sigma\left(A\right)=\left\{ \lambda_{1}=r,\lambda_{2},...,\lambda_{\ell}\right\} $ and let $A=S\cdot J\cdot S^{-1}$ be the complex Jordan form for $A$ where $J$ is a block diagonal matrix with blocks $\left(J_{\lambda_{1}},...,J_{\lambda_{\ell}}\right)$ where $J_{\lambda_{i}}=\lambda_{i}\cdot I+N_{i}$ is the block corresponding to eigenvalue $\lambda_{i}$.

Then: 
\[
P_{m}=\frac{1}{n_{0}+m\cdot r}\cdot\prod_{k=0}^{m-1}\left(I+\frac{1}{n_{0}+k\cdot r}\cdot S\cdot J\cdot S^{-1}\right)=\frac{1}{n_{0}+m\cdot r}\cdot\prod_{k=0}^{m-1}\left(S\cdot\left(I+\frac{1}{n_{0}+k\cdot r}\cdot J\right)\cdot S^{-1}\right)
\]
\[
=S\cdot\left(\frac{1}{n_{0}+m\cdot r}\cdot\prod_{k=0}^{m-1}\left(I+\frac{1}{n_{0}+k\cdot r}\cdot J\right)\right)\cdot S^{-1}.
\]

Now, since $J$ is a block-diagonal matrix, so is $I+\frac{1}{n_{0}+k\cdot r}\cdot J$ for each $k\geq0$. This means that to analyze the middle product we can exponentiate each diagonal block separately.

First, for $J_{\lambda_{1}}=J_{r}=\left[r\right]$ (a $1\times1$ matrix since $r$ is a simple eigenvalue), we obtain: 
\[
\frac{1}{n_{0}+m\cdot r}\cdot\prod_{k=0}^{m-1}1+\frac{1}{n_{0}+k\cdot r}\cdot r=\frac{1}{n_{0}+m\cdot r}\cdot\frac{n_{0}+m\cdot r}{n_{0}}=\frac{1}{n_{0}}.
\]

Now, for $\lambda\neq\lambda_{1}$: Our goal is to show that $\frac{1}{n_{0}+m\cdot r}\cdot\prod_{k=0}^{m-1}\left(I+\frac{1}{n_{0}+k\cdot r}\cdot J\right)$ converges to the zero matrix as $m$ tends to infinity. For this it suffices to prove its operator norm approaches $0$.

First, let $\gamma$ denote $\left|\prod_{k=0}^{\lceil\left|\lambda\right|\text{\ensuremath{\rceil}}-1}\left(I+\frac{1}{n_{0}+k\cdot r}\cdot J\right)\right|$, and observe this is a quantity depending only on $n_{0},r,\lambda$ and $j$, i.e. it is independent of $m$.

Thus: 
\[
\left|\frac{1}{n_{0}+m\cdot r}\cdot\prod_{k=0}^{m-1}\left(I+\frac{1}{n_{0}+k\cdot r}\cdot J\right)\right|=\gamma\cdot\left|\frac{1}{n_{0}+m\cdot r}\cdot\prod_{k=\lceil\left|\lambda\right|\text{\ensuremath{\rceil}}}^{m-1}\left(I+\frac{1}{n_{0}+k\cdot r}\cdot J\right)\right|.
\]

It therefore suffices to show that $\left|\frac{1}{n_{0}+m\cdot r}\cdot\prod_{k=\lceil\left|\lambda\right|\text{\ensuremath{\rceil}}}^{m-1}\left(I+\frac{1}{n_{0}+k\cdot r}\cdot J\right)\right|$ converges to $0$ to conclude that the norm converges to $0$.

Note that for every $k\geq\lceil\left|\lambda\right|\text{\ensuremath{\rceil}}$: 
\[
I+\frac{1}{n_{0}+k\cdot r}\cdot J_{\lambda}=I+\frac{1}{n_{0}+k\cdot r}\cdot\left(\lambda\cdot I+N_{\lambda}\right)=I+\frac{\lambda}{n_{0}+k\cdot r}\cdot I+\frac{1}{n_{0}+k\cdot r}\cdot N_{\lambda}
\]
\[
=\left(1+\frac{\lambda}{n_{0}+k\cdot r}\right)\cdot\left(I+\frac{N_{\lambda}}{n_{0}+k\cdot r+\lambda}\right).
\]

Thus: 
\[
\left|\frac{1}{n_{0}+m\cdot r}\cdot\prod_{k=\lceil\left|\lambda\right|\text{\ensuremath{\rceil}}}^{m-1}\left(I+\frac{1}{n_{0}+k\cdot r}\cdot J\right)\right|=\left|\frac{1}{n_{0}+m\cdot r}\cdot\prod_{k=\lceil\left|\lambda\right|\text{\ensuremath{\rceil}}}^{m-1}\left(1+\frac{\lambda}{n_{0}+k\cdot r}\right)\cdot\left(I+\frac{N_{\lambda}}{n_{0}+k\cdot r+\lambda}\right)\right|
\]

\[
=\left|\frac{1}{n_{0}+m\cdot r}\cdot\prod_{k=\lceil\left|\lambda\right|\text{\ensuremath{\rceil}}}^{m-1}\left(1+\frac{\lambda}{n_{0}+k\cdot r}\right)\cdot\prod_{k=\lceil\left|\lambda\right|\text{\ensuremath{\rceil}}}^{m-1}\left(I+\frac{N_{\lambda}}{n_{0}+k\cdot r+\lambda}\right)\right|
\]

\[
=\frac{1}{n_{0}+m\cdot r}\cdot\underbrace{\mbox{\ensuremath{\left|\prod_{k=\lceil\left|\lambda\right|\text{\ensuremath{\rceil}}}^{m-1}\left(1+\frac{\lambda}{n_{0}+k\cdot r}\right)\right|}}}_{\mbox{A}}\cdot\underbrace{\mbox{\ensuremath{\left|\prod_{k=\lceil\left|\lambda\right|\text{\ensuremath{\rceil}}}^{m-1}\left(I+\frac{N_{\lambda}}{n_{0}+k\cdot r+\lambda}\right)\right|}}}_{\mbox{B}}.
\]

According to \Cref{lem:boundForConverganceI} and \Cref{lem:boundForConverganceII}, for sufficiently large $m$ we have: $A\leq m^{\frac{\mathcal{R}e\left(\lambda\right)}{r}}\cdot\alpha_{\lambda,n_{0},r}$, and $B\leq C_{N_{\lambda}}\cdot\left(\ln\left(m\right)\right)^{d}$ where $\alpha_{\lambda,n_{0},r}$ and $C_{N_{\lambda}}$ are independent of $m$.

Observe that 
\[
\lim_{m\to\infty}\frac{1}{n_{0}+m\cdot r}\cdot C_{N_{\lambda}}\cdot\left(\ln\left(m\right)\right)^{d}\cdot m^{\frac{\mathcal{R}e\left(\lambda\right)}{r}}\cdot\beta_{\lambda,n_{0},r}=0,
\]
and thus applying the squeeze rule we can conclude $\lim_{m\to\infty}\frac{1}{n_{0}+m\cdot r}\cdot A\cdot B=0$, as needed.

To summarize:
\[
\lim_{m\to\infty }\left(\frac{1}{n_{0}+m\cdot r}\cdot\prod_{k=0}^{m-1}\left(I+\frac{1}{n_{0}+k\cdot r}\cdot A\right)\right)\cdot v_{n_{0}}
\]
\[
= S\cdot diag\left(\frac{1}{n_{0}}\cdot I,\boldsymbol{0},...,\boldsymbol{0}\right)\cdot S^{-1}\cdot v_{n_{0}}=\frac{1}{n_{0}}\cdot P_{r}^{A}\cdot v_{n_{0}}=P_{r}^{A}\left(\frac{v_{n_{0}}}{n_{0}}\right).
\]

\subsection{Characterizing the average block size}\label{subsec:AverageBlockSize}

As mentioned in \Cref{subsec:UniformModeling}, inserting $(-\infty)$ into the structure prior to the workload affords having a neat recurrence. An additional assumption we make for the purpose of a cleaner analysis for even splitting is that the first batch has size exactly $\frac{B-1}{2}$.
Under this additional assumption $v_{n}$ only ever has support on $S=d+\left\langle r\right\rangle $. We are thus only interested in the behavior of $\frac{\left(v_{n}\right)_{S}}{n}$ as $n$ tends to infinity. We let $z_n$ denote $\left(v_{n}\right)_{S}$.

Observe that the recurrence for $z_{n}$ can be represented by:
\[
\forall n\geq d:z_{n+r}=\left(I+\frac{1}{n}\cdot A\left(B,r\right)_{S}\right)\cdot z_{n}
\]
 where $v_d=e_1$.

Combining \Cref{thm:MatrixRecurrenceConvergance} and \Cref{lem:FinalSpectralCharacterization}, we obtain 
\[
\lim_{m\to\infty}\frac{z_{n_{0}+m\cdot r}}{n_{0}+m\cdot r}=P_{r}^{A\left(B,r\right)_{S}}\left(\frac{z_{n_{0}}}{n_{0}}\right),
\]
where $P_{r}^{A\left(B,r\right)_{S}}(\cdot)$ is the spectral projection onto the generalized eigenspace of $A(B,r)$ corresponding to $r$ (for further reading on spectral projections, refer to \cite{0521386322}). 
According to \Cref{lem:FinalSpectralCharacterization}, $r$ is a simple eigenvalue of  $A\left(B,r\right)_{S}$, and thus the spectral projection is a projection onto the eigenspace corresponding to $r$.
Overall, we conclude that $\frac{z_{n_{0}+m\cdot r}}{n_{0}+m\cdot r}$ converges to some vector in the eigenspace corresponding to $r$. \Cref{lem:evensplitaveragebocksizeconverges} uses this fact to characterize the average block size using any eigenvector of $A(B,r)_S$.

\begin{lemma}
\label{lem:evensplitaveragebocksizeconverges}

Let $w=\left(d,d+1,...,B\right)$, and let $u$ be any eigenvector corresponding to eigenvalue $r$ for $A\left(B,r\right)_{S}$. 
Then the expected average block size converges to at least $\frac{\left\langle u,w_{S}\right\rangle }{\left\langle u,\vec{1}\right\rangle}$, where $\langle \cdot \rangle $ denotes the standard inner product.

\end{lemma}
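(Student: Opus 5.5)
Since $z_n=(v_n)_S$ records the expected number of blocks of each size in $S$, and no blocks of other sizes exist, the expected number of keys is $\langle z_n, w_S\rangle$ and the expected number of blocks is $\langle z_n,\vec 1\rangle$. The plan is to apply the convergence statement above, $\frac{z_{n_0+mr}}{n_0+mr}\to P_r^{A(B,r)_S}\bigl(\frac{z_{n_0}}{n_0}\bigr)=:z^\ast$, and then pass to the limit in the ratio $\langle z_n,w_S\rangle/\langle z_n,\vec 1\rangle$. Because $r$ is a simple eigenvalue of $A(B,r)_S$ (\Cref{lem:FinalSpectralCharacterization}), the eigenspace is one-dimensional, so $z^\ast=c\,u$ for some scalar $c$. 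If $c\neq 0$, the limit ratio is $\frac{\langle cu,w_S\rangle}{\langle cu,\vec 1\rangle}=\frac{\langle u,w_S\rangle}{\langle u,\vec 1\rangle}$, which is independent of the choice of $u$ and of the normalization. This is well defined because the positive eigenvector from \Cref{lem:FinalSpectralCharacterization} spans the eigenspace, so every $u$ is a multiple of a positive vector and $\langle u,\vec 1\rangle\neq 0$.

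\textbf{Step 1: show $c\neq 0$.} Use the positive left eigenvector $w_S$ of $A(B,r)_S$ (from the lemma preceding \Cref{lem:FinalSpectralCharacterization}). The spectral projection satisfies $w_S^T P_r = w_S^T$, since left eigenvectors for $r$ annihilate the other generalized eigenspaces. Hence
\[
\langle w_S, z^\ast\rangle=\Bigl\langle w_S,\tfrac{z_{n_0}}{n_0}\Bigr\rangle=1 .
\]
The last equality holds because $\langle w_S,z_n\rangle$ is exactly the number of keys $n$; this is also visible directly, since $w_S^T(I+\frac1n A_S)z_n=(1+\frac rn)\langle w_S,z_n\rangle$. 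So $z^\ast\neq 0$, and $c>0$ when $u$ is taken positive.

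\textbf{Step 2: pass to the limit in the ratio.} Dividing numerator and denominator by $n$,
\[
\frac{\langle z_n,w_S\rangle}{\langle z_n,\vec 1\rangle}=\frac{\langle z_n/n,\,w_S\rangle}{\langle z_n/n,\,\vec 1\rangle}\longrightarrow \frac{\langle z^\ast,w_S\rangle}{\langle z^\ast,\vec 1\rangle},
\]
along $n=n_0+mr$, using continuity of the inner products and $\langle z^\ast,\vec 1\rangle=c\langle u,\vec 1\rangle>0$. Within a batch the sizes move by at most $r$ keys and $O(1)$ blocks per block touched, so the intermediate $n$ give the same limit.

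\textbf{Main obstacle: interpreting ``expected average block size''.} The recurrence controls expectations, so what converges directly is $\mathbb E[\#\text{keys}]/\mathbb E[\#\text{blocks}]$. That is also the quantity computed by the paper's ``total keys divided by total blocks'' measure, and it gives the claimed limit with equality, which in particular yields the stated ``at least''. If instead one wanted $\mathbb E[\text{keys}/\text{blocks}]$, then since the number of keys is deterministic ($=n$), we have $\mathbb E[n/X]\ge n/\mathbb E[X]$ by Jensen applied to the convex map $x\mapsto n/x$, where $X$ is the number of blocks. This again gives the lower bound $\frac{\langle u,w_S\rangle}{\langle u,\vec 1\rangle}$. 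I would present the Jensen remark as the justification for the phrase ``at least''.
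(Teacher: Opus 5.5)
Your proposal is correct and follows the paper's argument. Both pass to the limit in $\langle z_n,w_S\rangle/\langle z_n,\vec 1\rangle$ using the convergence theorem and the one-dimensionality of the $r$-eigenspace, and use Jensen ($n/\mathbb{E}[X^n]\le \mathbb{E}[n/X^n]$) to justify the ``at least''. Your Step~1 fills in a point the paper's proof takes for granted: you use the left eigenvector $w_S$ and the conservation law $\langle w_S,z_n\rangle=n$ to show that the limit $z^\ast$ is a nonzero, indeed positive, multiple of $u$, so that the limiting ratio is well defined.
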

\begin{proof}

Note that at any time step $n$ the identity $n=\sum_{k\in S}k\cdot x_{k}^{n}$ holds as an invariant.
Further, the average size of a block at time step $n$, is $\frac{n}{X^{n}}$
where $X^{n}=\sum_{k\in S}x_{k}^{n}$.
Observe: 
\[
\frac{\left\langle z_{n},w_{S}\right\rangle }{\left\langle z_{n},\vec{1}\right\rangle }=\frac{\sum_{k\in S}k\cdot\mathbb{E}\left(x_{k}^{n}\right)}{\sum_{k\in S}\mathbb{E}\left(x_{k}^{n}\right)}=\frac{\mathbb{E}\left(\sum_{k\in S}k\cdot x_{k}^{n}\right)}{\mathbb{E}\left(\sum_{k\in S}x_{k}^{n}\right)}=n\cdot\frac{1}{\mathbb{E}\left(\sum_{k\in S}x_{k}^{n}\right)}
\]

\[
\leq_{\text{Jensen}}n\cdot\mathbb{E}\left(\frac{1}{\sum_{k\in S}x_{k}^{n}}\right)=\mathbb{E}\left(\frac{n}{\sum_{k\in S}x_{k}^{n}}\right)=\mathbb{E}\left(\frac{n}{X^{n}}\right).
\]

Thus, the average blocks size converges to: 
\[
\lim_{m\to\infty}\frac{\left\langle z_{n_{0}+m\cdot r},w_{S}\right\rangle }{\left\langle z_{n_{0}+m\cdot r},\vec{1}\right\rangle }=\lim_{m\to\infty}\frac{\left\langle \frac{z_{n_{0}+m\cdot r}}{n_{0}+m\cdot r},w_{S}\right\rangle }{\left\langle \frac{z_{n_{0}+m\cdot r}}{n_{0}+m\cdot r},\vec{1}\right\rangle }.
\]

We know by the preceding discussion in \Cref{subsec:AverageBlockSize} that $\frac{z_{n_{0}+m\cdot r}}{n_{0}+m\cdot r}$ converges
to some vector in the eigenspace of $A\left(B,r\right)_{S}$ corresponding
to $r$. Denote this vector by $u^{\prime}$, and we thus have: 
\[
\lim_{m\to\infty}\frac{\left\langle \frac{z_{n_{0}+m\cdot r}}{n_{0}+m\cdot r},w_{S}\right\rangle }{\left\langle \frac{z_{n_{0}+m\cdot r}}{n_{0}+m\cdot r},\vec{1}\right\rangle }=\frac{\left\langle u^{\prime},w_{S}\right\rangle }{\left\langle u^{\prime},\vec{1}\right\rangle }
\]
Since $r$ is a simple eigenvalue, the eigenspace corresponding to
it is one-dimensional. This means there exists some scalar $c$ such
that $u^{\prime}=c\cdot u$, therefore the limit equals:
\[\frac{\left\langle c\cdot u,w_{S}\right\rangle }{\left\langle c\cdot u,\vec{1}\right\rangle }=\frac{\left\langle u,w_{S}\right\rangle }{\left\langle u,\vec{1}\right\rangle }. \]
\end{proof}

Our goal is thus now to characterize an arbitrary eigenvector in the eigenspace corresponding to $r$ to in order to provide a lower bound for: 
\[
\frac{\left\langle u,w_{S}\right\rangle }{\left\langle u,\vec{1}\right\rangle }=\frac{\sum_{k\in S}k\cdot u_{k}}{\sum_{k\in S}u_{k}}.
\]

\subsection{Lower bounding the average block size}\label{subsec:LowerBounds}\label{subsec:LowerBounds}

We've established the average block size converges to $\frac{\sum_{k\in S}k\cdot u_{k}}{\sum_{k\in S}u_{k}}$ for any eigenvector $u$ of $A\left(B,r\right)_{S}$.

Let $u$ be a positive eigenvector of $A\left(B,r\right)_{S}$.

\begin{lemma}
\label{lem:step-r-recurrence}

For every $k\in S$ such that $k\geq d+r$, we have $u_{k}=\frac{k-r}{k+r}\cdot u_{k-r}$.

\end{lemma}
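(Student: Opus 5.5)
Read off row $k$ of the eigenvector equation $A(B,r)_S\,u = r\,u$, using the explicit description of $A(B,r)$ from Section~\ref{subsec:ModelingBatched}. Fix $k\in S$ with $k\ge d+r$. Then $k$ falls under the ``remaining rows'' case $d+r\le k\le B$, and the only nonzero entries in row $k$ of $A(B,r)$ are
\[
A_{k,k}=-k \quad\text{and}\quad A_{k,k-r}=k-r .
\]

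The first step is to check that both of these entries survive in the principal submatrix $A(B,r)_S$. The diagonal entry is trivially present. For the other entry we need $k-r\in S$. Write $k=d+x$ with $x\in\langle r\rangle$. Since $k\ge d+r$, we have $x\ge r$, so $k-r=d+(x-r)$ with $x-r\in\{0,\dots,d-1\}$. Because $\langle r\rangle$ is a subgroup of $\mathbb{Z}_d$, it contains $x-r \bmod d$, so $k-r\in S$. Restricting row $k$ to the columns in $S$ therefore discards only zero entries. Hence row $k$ of $A(B,r)_S$ consists of exactly these two nonzero entries.

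The second step uses the eigen-equation in coordinate $k$:
\[
-k\,u_k + (k-r)\,u_{k-r} = r\,u_k ,
\]
which rearranges to $(k+r)\,u_k=(k-r)\,u_{k-r}$. Since $k+r>0$, dividing gives the claimed identity $u_k=\frac{k-r}{k+r}\,u_{k-r}$.

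The only point needing care is the bookkeeping: the paper's nonstandard indexing must be matched correctly to the row categories. In particular, one must confirm that $k\ge d+r$ excludes both the first row $k=d$ and the wraparound rows $d<k\le d+r-1$, so that the first-row entries such as $A_{d,k}$ do not appear in row $k$. Once that is verified, the argument is a one-line computation.
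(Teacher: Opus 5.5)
Your proposal is correct and follows the paper's proof: both read off row $k$ of the eigen-equation, use the fact that the only nonzero entries in that row are $A_{k,k}=-k$ and $A_{k,k-r}=k-r$, and rearrange $r\,u_k=-k\,u_k+(k-r)\,u_{k-r}$. The paper routes this through the zero-padded vector $\hat u$ and implicitly assumes $k-r\in S$ when it writes $u_{k-r}$; you check that membership explicitly using the subgroup property of $\langle r\rangle$, which the paper leaves unstated.
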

\begin{proof}

Let $k\in S$ such that $k\geq d+r$, let $\hat{u}$ be the expansion
of $u$ into a $d$-dimensional vector such that $\forall k\in S:\hat{u}_{k}=u_{k}$,
and in all other position $\hat{u}$ is $0$. Observe that $A_{k,*}\cdot\hat{u}=\left(A\left(B,r\right)_{S}\right)_{k,*}\cdot u$
where $A_{k,*}$ denotes the $k^{th}$ row of $A$. 

Observe that for every such $k$, $A_{k,k}=-k,A_{k,k-r}=k-r$ and
all other entries are $0$.

Therefore, $\left(A\left(B,r\right)_{S}\right)_{k,*}\cdot u=A_{k,*}\cdot\hat{u}=-k\cdot u_{k}+\left(k-r\right)u_{k-r}$.

Since $u$ is an eigenvector of $A\left(B,r\right)_{S}$ we know that
$\left(A\left(B,r\right)_{S}\right)_{k,*}\cdot u=r\cdot u_{k}$,
so overall: $r\cdot u_{k}=-k\cdot u_{k}+\left(k-r\right)u_{k-r}\implies u_{k}=\frac{k-r}{k+r}\cdot u_{k-r}$
and this holds for every $k\in S$ such that $k\geq d+r$.
\end{proof}

Now observe that it suffices to lower bound the term of interest, i.e. $\frac{\sum_{k\in S}k\cdot u_{k}}{\sum_{k\in S}u_{k}}$, when restricting the sums to some class in a partition of $S$, as formulated by the following lemma:
\begin{lemma}\label{lem:AveragePartitioning}

Let $C_{1},...,C_{\ell}$ be some partition of $S$. Then 
\[
\frac{\sum_{k\in S}k\cdot u_{k}}{\sum_{k\in S}u_{k}}\geq min\left\{ \frac{\sum_{k\in C_{i}}k\cdot u_{k}}{\sum_{k\in C_{i}}u_{k}}:i=1,...,\ell\right\}. 
\]

\end{lemma}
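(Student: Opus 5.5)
This is the mediant inequality. The plan uses that $u$ is a positive eigenvector, so every $u_k>0$ and every denominator $U_i:=\sum_{k\in C_i}u_k$ is strictly positive. We may assume each class $C_i$ is nonempty, discarding any empty classes. For each $i$ set $W_i:=\sum_{k\in C_i}k\cdot u_k$, and let
\[
\mu:=\min_{1\le i\le \ell}\frac{W_i}{U_i}.
\]

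Since the $C_i$ partition $S$, the sums split as
\[
\sum_{k\in S}k\cdot u_k=\sum_{i=1}^{\ell}W_i
\quad\text{and}\quad
\sum_{k\in S}u_k=\sum_{i=1}^{\ell}U_i .
\]
By the definition of $\mu$, each class satisfies $W_i\ge \mu\cdot U_i$, where the direction of the inequality is preserved because $U_i>0$. Summing over $i$ gives
\[
\sum_{k\in S}k\cdot u_k\ \ge\ \mu\sum_{k\in S}u_k .
\]
Dividing by the positive quantity $\sum_{k\in S}u_k$ yields exactly the claimed bound.

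The only point that needs care is the positivity of the denominators. This is guaranteed because $u$ is taken to be a positive eigenvector, which exists by \Cref{lem:FinalSpectralCharacterization}. Beyond that, the argument is a routine weighted-average inequality.
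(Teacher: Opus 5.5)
Your proof is correct and is essentially the paper's argument. The paper writes the overall ratio as a convex combination of the class ratios with weights $\sum_{k\in C_i}u_k/\sum_{k\in S}u_k$, while you sum the inequalities $W_i\ge \mu U_i$; these are the same weighted-average argument. You are also a bit more explicit than the paper that the denominators are positive because $u$ is positive.
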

\begin{proof}

Note: 
\[
\frac{\sum_{k\in S}k\cdot u_{k}}{\sum_{k\in S}u_{k}}=\sum_{C_{j}}\lambda_{j}\cdot\frac{\sum_{k\in C_{j}}k\cdot u_{k}}{\sum_{k\in C_{j}}u_{k}}
\]
 where $\lambda_{j}=\frac{\sum_{k\in C_{j}}u_{k}}{\sum_{k\in S}u_{k}}$.
Now observe that $\sum_{C_{j}}\lambda_{j}=1$, and thus $\frac{\sum_{k\in S}k\cdot u_{k}}{\sum_{k\in S}u_{k}}$
is a weighted average of $\left\{ \frac{\sum_{k\in C_{1}}k\cdot u_{k}}{\sum_{k\in C_{1}}u_{k}},...,\frac{\sum_{k\in C_{\ell}}k\cdot u_{k}}{\sum_{k\in C_{\ell}}u_{k}}\right\} $
so is therefore at least the minimum term.
\end{proof}

In the partition of $S$ we will use, each class $C$ will be of the form $\left\{ j_{0},j_{1},...,j_{L}\right\} $ where $d\leq j_{0}\leq d+r-1$, $\forall i:j_{i}=j_{0}+r\cdot i$, and $L$ is the maximum integer for which $j_{0}+L\cdot i\leq B$.

According to \Cref{lem:AveragePartitioning}, it suffices to set such a class $C$ and provide a lower bound for $\frac{\sum_{i=0}^{L}j_{i}\cdot u_{j_{i}}}{\sum_{i=0}^{L}u_{j_{i}}}$.

By \Cref{lem:step-r-recurrence},  $u_{j_{i}}=\frac{j_{i-1}}{j_{i+1}}\cdot u_{j_{i-1}}$ for every $1\leq i\leq L$, letting  $j_{L+1}$ denote $j_{L}+r$.

Thus, for every $\ell\geq2:$ 
\[
u_{j_{\ell}}=u_{j_{0}}\cdot\prod_{i=1}^{\ell}\frac{j_{i-1}}{j_{i+1}}=u_{j_{0}}\cdot\frac{j_{0}\cdot j_{1}}{j_{\ell}\cdot j_{\ell+1}}.
\]

Observe that the above equality also holds for $\ell=0,1$.

Overall we get: for every $\ell\geq0:u_{j_{\ell}}=\frac{j_{0}\cdot\left(j_{0}+r\right)}{j_{\ell}\cdot\left(j_{\ell}+r\right)}\cdot u_{j_{0}}$.

So: 
\[
\frac{\sum_{i=0}^{L}j_{i}\cdot u_{j_{i}}}{\sum_{i=0}^{L}u_{j_{i}}}=\frac{\sum_{i=0}^{L}j_{i}\cdot\frac{j_{0}\cdot\left(j_{0}+r\right)}{j_{i}\cdot\left(j_{i}+r\right)}\cdot u_{j_{0}}}{\sum_{i=0}^{L}\frac{j_{0}\cdot\left(j_{0}+r\right)}{j_{i}\cdot\left(j_{i}+r\right)}\cdot u_{j_{0}}}
\]

\begin{equation}\label{eq:TermToBoundForAverageFill}
=\frac{j_{0}\cdot\left(j_{0}+r\right)\cdot u_{j_{0}}\cdot\sum_{i=0}^{L}\frac{1}{j_{i}+r}}{j_{0}\cdot\left(j_{0}+r\right)\cdot u_{j_{0}}\cdot\sum_{i=0}^{L}\frac{1}{j_{i}\cdot\left(j_{i}+r\right)}}=\frac{\sum_{i=0}^{L}\frac{1}{j_{i}+r}}{\sum_{i=0}^{L}\frac{1}{j_{i}\cdot\left(j_{i}+r\right)}}
\end{equation}

We now provide three lower bounds for the average block size, that reflect different parameter regimes for $r$. The first lower bound is favorable to the others when $r<0.0058B$. The second lower bound is favorable when $r<0.21B$, and the third lower bound is favorable in the rest of the regime we analyze for even splitting, i.e. $0.21B\leq r<0.5B$.
The first two lower bounds are obtained, by lower bounding the right hand side of \Cref{eq:TermToBoundForAverageFill}. Then, we consider a refinement to the partition we described for $S$ in order to obtain a third lower bound.

\subsubsection{First lower bound - small values of $r$}\label{subsubsec:SmallLowerBound}

Our goal in this part is to prove that for every $r\leq0.0058B$ we have:
\[
\frac{\sum_{i=0}^{L}\frac{1}{j_{i}+r}}{\sum_{i=0}^{L}\frac{1}{j_{i}\cdot\left(j_{i}+r\right)}}\geq B\left(\ln(2)-\frac{5r}{B}\right)
\]
Begin by observing:
\[
\sum_{i=0}^{L}\frac{1}{j_{i}\left(j_{i}+r\right)}=\frac{1}{r}\cdot\left(\frac{1}{j_{0}}-\frac{1}{j_{L+1}}\right).
\]
Thus: 
\[
\frac{\sum_{i=0}^{L}\frac{1}{j_{i}+r}}{\sum_{i=0}^{L}\frac{1}{j_{i}\cdot\left(j_{i}+r\right)}}=\frac{r\cdot\sum_{i=0}^{L}\frac{1}{j_{i}+r}}{\frac{1}{j_{0}}-\frac{1}{j_{L+1}}}\geq\frac{r\cdot\sum_{i=0}^{L}\frac{1}{j_{i}+r}}{\frac{1}{d}-\frac{1}{B+r}}.
\]
Now note: 
\[
\sum_{i=0}^{L}\frac{1}{j_{i}+r}=\sum_{i=1}^{L+1}\frac{1}{j_{0}+i\cdot r}\geq\int_{j_{0}+r}^{j_{0}+\left(L+1\right)\cdot r}\frac{dx}{x}\geq\ln\left(B+1\right)-\ln\left(j_{0}+r\right)\geq\frac{1}{r}\cdot\ln\left(\frac{B+1}{d+2r}\right).
\]
So: 
\[
\frac{\sum_{i=0}^{L}\frac{1}{j_{i}+r}}{\sum_{i=0}^{L}\frac{1}{j_{i}\cdot\left(j_{i}+r\right)}}\geq\frac{r\cdot\sum_{i=0}^{L}\frac{1}{j_{i}+r}}{\frac{1}{d}-\frac{1}{B+r}}\geq\frac{\ln\left(\frac{B+1}{d+2r}\right)}{\frac{1}{d}-\frac{1}{B+r}}\geq\frac{\ln\left(\frac{2\left(B+1\right)}{B+1+4r}\right)}{\frac{2}{B+1}-\frac{1}{B+r}}=\ln\left(\frac{2}{1+\frac{4r}{B+1}}\right)\cdot\frac{\left(B+1\right)\cdot\left(B+r\right)}{B+2r-1}.
\]
Observe that $\frac{\left(B+1\right)\cdot\left(B+r\right)}{B+2r-1}\geq B\cdot\left(1-\frac{r}{B}\right)$, and $\ln\left(\frac{2}{1+\frac{4r}{B+1}}\right)\geq\ln\left(2\right)-\frac{4r}{B+1}$ (as $\ln\left(\frac{2}{1+x}\right)\geq\ln\left(2\right)-x$ for every $x\geq0$).
So:
\[
\frac{\sum_{i=0}^{L}\frac{1}{j_{i}+r}}{\sum_{i=0}^{L}\frac{1}{j_{i}\cdot\left(j_{i}+r\right)}}\geq B\cdot\left(\ln\left(2\right)-\frac{4r}{B+1}\right)\cdot\left(1-\frac{r}{B}\right)\geq B\cdot\left(\ln\left(2\right)-\frac{r\cdot\left(4+\ln\left(2\right)\right)}{B}\right)\geq B\cdot\left(\ln\left(2\right)-\frac{5r}{B}\right).
\]

\subsubsection{Second lower bound - moderate values of $r$}\label{subsubsec:ModerateLowerBound}

Our goal in this part is to prove that for every $r\leq 0.21B$ we have:
\[
\frac{\sum_{i=0}^{L}\frac{1}{j_{i}+r}}{\sum_{i=0}^{L}\frac{1}{j_{i}\cdot\left(j_{i}+r\right)}}\geq \frac{2B\cdot\left(B+1\right)}{3B+1+2r}
\]
Begin by observing:
\[
\sum_{i=0}^{L}\frac{1}{j_{i}\left(j_{i}+r\right)}=\frac{1}{r}\cdot\left(\frac{1}{j_{0}}-\frac{1}{j_{L+1}}\right)=\frac{j_{L+1}-j_{0}}{r\cdot j_{0}\cdot j_{L+1}}
\]
Thus: 
\[
\frac{\sum_{i=0}^{L}\frac{1}{j_{i}+r}}{\sum_{i=0}^{L}\frac{1}{j_{i}\cdot\left(j_{i}+r\right)}}=r\cdot\frac{j_{0}\cdot j_{L+1}}{j_{L+1}-j_{0}}\cdot\sum_{i=0}^{L}\frac{1}{j_{i}+r}
\]

\[
=r\cdot\frac{j_{0}\cdot j_{L+1}}{j_{0}+r\cdot\left(L+1\right)-j_{0}}\cdot\sum_{i=0}^{L}\frac{1}{j_{i}+r}=j_{0}\cdot j_{L+1}\cdot\left(\frac{1}{L+1}\cdot\sum_{i=1}^{L+1}\frac{1}{j_{i}}\right).
\]
According to Jensen's inequality the average of reciprocals is at least the reciprocal of the average, and observe: $\frac{1}{L+1}\cdot\sum_{i=1}^{L+1}j_{i}=\frac{j_{1}+j_{L+1}}{2}$.
Thus: 
\[
j_{0}\cdot j_{L+1}\cdot\left(\frac{1}{L+1}\cdot\sum_{i=1}^{L+1}\frac{1}{j_{i}}\right)\geq j_{0}\cdot j_{L+1}\cdot\frac{2}{j_{1}+j_{L+1}}=\frac{2\cdot j_{0}\cdot j_{L+1}}{j_{0}+j_{L+1}+r}.
\]
Observe that the 2D function $f\left(x,y\right)=\frac{2\cdot x\cdot y}{x+y+r}$ is increasing in both $x,y$ for $x>0,y>0$ since both partial derivatives are positive. Thus, if we observe the function in an axis-aligned rectangle contained in the first quadrant with bottom left corner $\left(x_{0},y_{0}\right)$, the function is minimized in $\left(x_{0},y_{0}\right)$. Observe the function in the axis-aligned rectangle given by $d\leq x\leq d+r-1$ and $B\leq y\leq B+2r$; then $f\left(x,y\right)$ is minimized at $\frac{2\cdot d\cdot B}{d+B+r}=\frac{2\left(B+1\right)\cdot B}{3B+1+2r}$. Note that $\left(j_{0},j_{L+1}\right)$ is within this rectangle and therefore: 
\[
\frac{2\cdot j_{0}j_{L+1}}{j_{0}+j_{L+1}+r}\geq\frac{2\left(B+1\right)\cdot B}{3B+1+2r}.
\]

\subsubsection{Third lower bound - larger values of $r$} \label{subsubsec:LargeLowerBound}
Our goal in this section is to show that for every $1\leq r\leq \frac{5}{12}B$ (see \footnote{We note that as $r$ grows, some of the classes we define may become singletons. But this only occurs if $j_0+r\geq B\iff j_0\geq B-r$. Under the assumption that $r\leq \frac{5}{12}B$ this would imply $j_0\geq \frac{7}{12}B$. So, the contribution of such a class for the weighted average without any further refinement is $\frac{\sum_{i=0}^{L}\frac{1}{j_{i}+r}}{\sum_{i=0}^{L}\frac{1}{j_{i}\cdot\left(j_{i}+r\right)}}=j_0\geq \frac{7}{12}B$, respecting the stated bound. The refinement is used to argue about the remaining classes.}) the average block size converges to a value that is at least $\frac{7}{12}B$. 
Towards this end, we consider a refinement of the partition on $S$ we considered in \Cref{subsec:AverageBlockSize}. Recall each class $C$ is of the form $\{j_0,j_1,...,j_\ell \}$ where $d\leq j_0\leq d+r-1$, and for every $i:j_i=j_0+r\cdot i$. Further, recall that 
\[
\forall \ell\geq 0:u_{j_\ell}=\frac{j_0\cdot (j_0+r)}{j_\ell\cdot (j_\ell +r)}\cdot u_{j_0}:=\frac{1}{j_\ell \cdot (j_\ell +r)}\cdot \kappa _C.\hspace{0.3cm} (\star)
\]
Our refinement is obtained by further partitioning each such class $C$ into \[
\{\{j_i,j_{L-i}\}:i\in\{0,1,...,\lfloor L/2\rfloor \}\}
\]
According to \Cref{lem:AveragePartitioning}, it suffices to provide a lower bound on
\[
F(a,b,r)=\frac{a\cdot u_a+b\cdot u_b}{u_a+u_b}
\]
where $a=j_i$ and $b=j_{L-i}$ for some $i\in \{0,...,\lfloor L/2\rfloor\}$.
Plugging in $(\star)$ we obtain:
\begin{align*}
F(a,b,r)
&= \frac{a \cdot \frac{\kappa_C}{a(a+r)} + b \cdot \frac{\kappa_C}{b(b+r)}}
          {\frac{\kappa_C}{a(a+r)} + \frac{\kappa_C}{b(b+r)}}, 
\end{align*}
\[
=\frac{\frac{1}{a+r} + \frac{1}{b+r}}{\frac{1}{a(a+r)} + \frac{1}{b(b+r)}}=\frac{ab(a+b+2r)}{a^2 + b^2 + r(a+b)}.
\]
\begin{lemma}
$F(a,b,r)$ is minimized at $a=j_0$ and $b=j_L$.
\end{lemma}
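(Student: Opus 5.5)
The pairs within a class are $(a,b)=(j_i,j_{L-i})=(j_0+ir,\,j_L-ir)$ for $0\le i\le\lfloor L/2\rfloor$. Along such a pair the sum $s=a+b=j_0+j_L$ is fixed. I will therefore write $F$ in terms of the fixed sum $s$ and the product $p=ab$. Substituting $a^2+b^2=s^2-2p$ gives
\[
F=\frac{p\,(s+2r)}{s^2-2p+rs}.
\]
With $s$ and $r$ fixed, this is an increasing function of $p$: the numerator increases in $p$ and the denominator decreases in $p$. The denominator stays positive, since $s^2-2p\ge s^2/2>0$.

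Next, for fixed $s$ the product $p=ab$ is smallest when $a$ and $b$ are farthest apart. Moving from $i$ to $i+1$ replaces $(a,b)$ by $(a+r,\,b-r)$. As long as $a\le b$, which holds for $i\le\lfloor L/2\rfloor$, the product changes by $r(b-a-r)$. This quantity is $\ge0$ whenever $b-a\ge r$, i.e.\ whenever $i+1\le L-(i+1)$. So $p$ is nondecreasing in $i$ across all the pairs in the refinement. Its minimum is at $i=0$, namely $(a,b)=(j_0,j_L)$, and therefore $F$ is minimized there among the pairs of the class.

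It remains to check the edge cases. When $L$ is even, the middle pair is the singleton $\{j_{L/2}\}$, with $F=j_{L/2}$. This is consistent with the formula, since $F(a,a,r)=a$, and the monotonicity argument still covers it. When $L$ is odd, the last step ends with $b-a=r$, where the increment is $0$, so again there is no problem.

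The only point needing care is to state exactly what "minimized" means. The claim compares pairs within one class $C$, i.e.\ with $s$ fixed. It is not a minimization over $j_0$ across classes. That second step, bounding $F(j_0,j_L,r)$ uniformly by $\tfrac{7}{12}B$, is presumably done afterwards using $d\le j_0\le d+r-1$ and $B-r<j_L\le B$. I expect no real obstacle beyond this.
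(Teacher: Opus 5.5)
Your proof is correct and is essentially the paper's argument: within a class, fix $s=j_0+j_L$ and show that $F(j_i,j_{L-i},r)$ only grows as the pair moves toward the middle, so the minimum is at $(j_0,j_L)$. The paper justifies this monotonicity by asserting, via a derivative it does not compute, that $F(a,s-a,r)$ is increasing in $a$ on $(0,s/2)$; your rewriting $F=p(s+2r)/(s^2-2p+rs)$ with $p=ab$ gets the same fact without calculus. One small slip: $b-a\ge r$ is equivalent to $i+1\le L-i$, not $i+1\le L-(i+1)$. This does not matter, because every step inside the refinement has $i+1\le\lfloor L/2\rfloor$, so its increment $r^2(L-2i-1)$ is at least $r^2>0$ and your conclusion stands.
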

\begin{proof}
Fixing $r>0$ and $s>0$, we can show that 
\[
F(a,s-a,r) = \frac{a(s-a)(s+2r)}{a^2 + (s-a)^2 + rs}.
\]
is strictly increasing in $a$ for $a\in(0,s/2)$ by taking a partial derivative on $a$.
So, fixing $r$ and $s=j_0+j_L$, we have for every $i\in \{0,...,\lfloor L/2\rfloor \}$:
\[
F(j_i,(j_0+j_L)-j_i,r)\geq F(j_0,j_L,r).
\]
Observe that $(j_0+j_L)-j_i=j_{L-i}$ and thus for every $i\in \{0,...,\lfloor L/2\rfloor \}$:
\[
F(j_i,j_{L-i},r)\geq F(j_0,j_L,r).
\]
\end{proof}
 To proceed, we set:
\[
x := \frac{j_0}{B},\quad y := \frac{j_L}{B},\quad \alpha := \frac{r}{B},
\]
Then, we define:
\[
f(x,y,\alpha) := \frac{F(j_0,j_L,r)}{B}
= \frac{x y (x+y+2\alpha)}{x^2 + y^2 + \alpha(x+y)}.
\]
What's left to prove now is that $f(x,y,\alpha)\geq\frac{7}{12}$ in the feasible domain, i.e:
\[
\frac{1}{2} \le x \le 1,\quad 1- \alpha < y < 1,\quad y-x \geq \alpha,\quad 0 < \alpha < \frac{1}{2},
\]
This would imply the average block size is at least $\frac{7}{12}\cdot B$, as needed.
\begin{lemma}\label{lem:f-min}
Let
\[
f(x,y,\alpha) := \frac{x y (x+y+2\alpha)}{x^2 + y^2 + \alpha(x+y)}.
\]
Assume
\[
\frac{1}{2} \le x \le 1,\quad 0 < \alpha < \frac{1}{2},\quad
1-\alpha \leq y \leq 1, \quad y - x \ge \alpha.
\]
Then
\[
f(x,y,\alpha) \;\ge\; \frac{7}{12}.
\]
\end{lemma}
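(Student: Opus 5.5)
The plan is to clear denominators and reduce the claim to polynomial inequalities with few free variables. The denominator $x^2+y^2+\alpha(x+y)$ is positive on the domain, so $f\ge \frac{7}{12}$ is equivalent to
\[
G(x,y,\alpha) := 12xy(x+y+2\alpha) - 7\bigl(x^2+y^2+\alpha(x+y)\bigr) \;\ge\; 0 .
\]
The key observation is that $G$ is affine in $\alpha$:
\[
G = \alpha\,(24xy-7x-7y) + 12xy(x+y) - 7x^2 - 7y^2 .
\]
Fix $(x,y)$. The constraints on $\alpha$ are $1-y\le \alpha \le y-x$, together with $0<\alpha<\frac12$. The upper bound $\alpha\le y-x\le \frac12$ makes $\alpha<\frac12$ redundant up to the boundary. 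The bound $\alpha>0$ is implied by $y\le 1$ except at $y=1$, and there we may allow $\alpha=0$ by continuity. Hence the feasible $\alpha$ form an interval $[1-y,\,y-x]$. An affine function attains its minimum over an interval at an endpoint, so it suffices to check $G\ge 0$ in two cases.

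\textbf{Case $\alpha=1-y$.} Nonemptiness of the interval forces $1-y\le y-x$, i.e. $x\le 2y-1$. The region is therefore $\frac12\le x\le 2y-1$, which forces $\frac34\le y\le 1$. Define
\[
G_1(x,y)=(1-y)(24xy-7x-7y)+12xy(x+y)-7x^2-7y^2 .
\]

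\textbf{Case $\alpha=y-x$.} The region is the same, $\frac12\le x\le 2y-1$. Define
\[
G_2(x,y)=(y-x)(24xy-7x-7y)+12xy(x+y)-7x^2-7y^2 .
\]

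In both cases I expect the minimum at the corner $x=\frac12$, $y=\frac34$, which corresponds to $\alpha=\frac14$. There one computes $f=\frac{0.375\cdot 1.75}{1.125}=\frac{7}{12}$ exactly, so the bound is tight and no slack is available.

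To handle each of $G_1,G_2$ I would first show monotonicity in $x$ on the region. The derivative $\partial_x G_i$ is a quadratic in $x$ with coefficients depending on $y$. I would check that it is nonnegative for $x\ge \frac12$ and $y\ge\frac34$, for instance by bounding its terms using $x\ge\frac12$ and $y\ge \frac34$. This reduces each case to the edge $x=\frac12$. What remains is a univariate cubic in $y$ on $[\frac34,1]$, which vanishes at $y=\frac34$; I would show it is increasing there, or factor out $(y-\frac34)$ and check the sign of the quotient.

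If monotonicity in $x$ fails somewhere, the fallback is to substitute $x=\frac12+s$ and $y=\frac34+t$. Here $s\ge 0$, $t\ge 0$ and $s\le 2t$, so one can write $s=2t\theta$ with $\theta\in[0,1]$. After expanding $G_i$, I would try to show that every monomial has a nonnegative coefficient, or else group any negative terms against positive ones using $s\le 2t$. The main obstacle is exactly this polynomial verification near the tight corner. The linear terms in $s,t$ must have nonnegative coefficients, and any negative cross term has to be dominated; that is where the care lies.
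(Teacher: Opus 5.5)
Your proposal is correct, and it reaches the same two boundary faces as the paper by a different reduction.

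\textbf{The paper's route.} It eliminates $y$ first. It computes $\partial f/\partial y = x(\alpha+x)(2\alpha x+x^2+2xy-y^2)/\bigl(x^2+y^2+\alpha(x+y)\bigr)^2$ and shows the last factor is positive using $y/x\in[1,2]$. This pushes $y$ down to $\max\{x+\alpha,1-\alpha\}$. It then analyzes two rational functions by partial derivatives, with a further split at $\alpha=\tfrac14$: $g(x,\alpha)=\frac{x(3\alpha+2x)}{2(x+\alpha)}$ on the face $y=x+\alpha$, and $h(x,\alpha)$ on the face $y=1-\alpha$.

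\textbf{Your route.} You eliminate $\alpha$ instead, using only that your $G$ is affine in $\alpha$. This step needs no derivative and no sign argument. Your faces $\alpha=y-x$ and $\alpha=1-y$ are exactly the paper's two faces, but you parametrize both over the single triangle $\tfrac12\le x\le 2y-1\le 1$, with no sub-cases.

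\textbf{The remaining checks.} These work as you predict and are easier than you anticipated. First, $G_2=2y(18xy-6x^2-7y)$. The bracket has $x$-derivative $18y-12x>0$ (since $x\le 2y-1\le y$) and equals $2y-\tfrac32\ge 0$ at $x=\tfrac12$. Second, $G_1=12x^2y-7x^2-12xy^2+31xy-7x-7y$ satisfies $\partial_x G_1=(24y-14)x-12y^2+31y-7\ge -12y^2+43y-14>0$ for $y\in[\tfrac34,1]$. On the edge, $G_1(\tfrac12,y)=-6(y-\tfrac34)(y-\tfrac76)\ge 0$. So the $x$-derivatives are linear rather than quadratic, the edge polynomials are quadratics rather than cubics, and your fallback substitution is never needed.

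\textbf{Comparison.} The paper's route yields a threshold-free structural fact about $f$ (monotonicity in $y$). The cost is a rational-function derivative and the side constraint $y\le 2x$. Yours gives a short, case-free polynomial certificate for the constant $\tfrac{7}{12}$. Since $f$ is linear-fractional in $\alpha$ for fixed $(x,y)$, the same endpoint reduction would also apply to $f$ directly.
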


\begin{proof}
We begin by showing $f$ is monotonic in $y$ in the stated regime.
\[
\frac{\partial f}{\partial y}
= \frac{x(\alpha + x)\bigl(2\alpha x + x^2 + 2xy - y^2\bigr)}
{\bigl(x^2 + y^2 + \alpha(x+y)\bigr)^2}.
\]
The sign of the derivative is thus determined by $2\alpha x+x^2+2xy-y^2$.
Observe:
\[
2\alpha x+x^2+2xy-y^2=2\alpha x+x^2\left(1+2\cdot \frac{y}{x}-\left(\frac{y}{x}\right)^2\right).
\]
Now, since $\frac{1}{2}\leq x<y<1$, we have $\frac{y}{x}\in[1,2]$. Observe that $1+2t-t^2\geq1$ on $[1,2]$. Thus:
\[
2\alpha x+x^2\left(1+2\cdot \frac{y}{x}-\left(\frac{y}{x}\right)^2\right)\geq 2\alpha x+x^2>0.
\]
So, $f$ is strictly increasing in $y$ in the stated regime.
Now, given our constraints, i.e. $y\geq 1-\alpha$,$y\geq x+\alpha,y\geq 1-\alpha$, and thus $y$ is minimized at $y_{\min}:=\max\{x+\alpha,1-\alpha\}$. 
Therefore $f$ is minimized at $f(x,y_{\min},\alpha)$, and it is sufficient to show this value is at least $7/12$. We divide into two cases: 
\paragraph{Case I. $x\geq 1-2\alpha$:}
In this case $y_{\min}=x+\alpha$.
Observe that since $x\leq y-\alpha$ and $y<1$, we know that $x<1-\alpha$.
Overall we have:
\[
f(x,y_{\min},\alpha) = f(x,x+\alpha,\alpha)
= \frac{x(3\alpha+2x)}{2(x+\alpha)} =: g(x,\alpha),
\]
with
\[
x \in \bigl[\max\{1/2,1-2\alpha\},1-\alpha\bigr],\qquad 0<\alpha<1/2.
\]
Now, if $\alpha>\frac{1}{4}$, then $x\in \left[\frac{1}{2},1-\alpha\right]$. Computing partial derivatives reveals that $g$ is increasing in both $x$ and $\alpha$, as long as $x,\alpha>0$, and thus $g(x,\alpha)\geq g\left(\frac{1}{2},\frac{1}{4}\right)=\frac{7}{12}$.
Otherwise, $\alpha \leq \frac{1}{4}$, and then $x\in \left[1-2\alpha,1-\alpha\right]$, and since 
$g$ is increasing in $x$:
$g(x,\alpha)\geq g(1-2\alpha ,\alpha)=-\frac{(\alpha-2)(2\alpha-1)}{2\alpha-2}$. Observe $-\frac{(\alpha-2)(2\alpha-1)}{2\alpha-2}\geq\frac{7}{12}$ for every $\alpha\leq\frac{1}{4}$ (equality occurs at $\alpha=
\frac{1}{4}$).
\paragraph{Case II. $x< 1-2\alpha$:}
In this case $y_{\min}=1-\alpha$, and thus:
\[
f(x,y_{\min},\alpha)=f(x,1-\alpha,\alpha):=h(x,\alpha)=\frac{x(1-\alpha)(x+1+\alpha)}{x^2 + 1 - \alpha + \alpha x}.
\]
Taking a partial derivative on $h$ reveals it is increasing in $x$ for $x\in\left[\frac{1}{2},1-2\alpha\right]$, and thus $h(x,\alpha)\geq h(\frac{1}{2},\alpha):=\varphi(\alpha)=\frac{2\alpha^2+\alpha-3}{2\alpha-5}$. 
Now, one can easily verify that $\varphi(\alpha)\geq\frac{7}{12}$ (equality occurs at $\alpha=\frac{1}{4}$).
\end{proof}
\section{Algorithms for larger batch sizes}\label{sec:LargeBatchSizes}
Our analysis for even splitting shows that as long as $r\leq\left(\frac{1}{2}-\varepsilon\right)\cdot B$, for some constant $\varepsilon>0$, even splitting will produce fullness of at least $\frac{1}{2}+\delta$ for a constant $\delta>0$. We now proceed to present algorithms that can beat the trivial $0.5$ fullness bound for larger values of $r$, presenting different algorithms for different regimes. These algorithms are $r$ aware, in the sense that $r$ is used explicitly in the algorithm's description. Note that large batch sizes admit simple algorithms, which are far easier to analyze. Refer to \Cref{prelims:LargeBatches} for a high-level overview.

\subsection{Uneven splitting for moderately large batches} \label{subsec:UnevenSplit}
 We first introduce a subroutine that is a generic description of how our uneven splitting algorithms handle splits that occur during batches of insertions. The subroutine receives as input a memory block $A$ containing $B$ keys in sorted order, i.e. a full memory block, a key $k$ to be inserted into $A$, and targets $f_L$,$f_R$ such that $f_L+f_R>B$. The goal of the subroutine, is to split the block, insert $k$, and then handle the following insertions in the batch such that the two memory blocks created as result of the split, $A_L$ and $A_R$, contain $f_L$ and $f_R$ keys. See \Cref{fig:targetsplit}.

\begin{algorithm}
\caption{\textsc{TargetSplit}($\left(A=k_1<...<k_B\right),k,f_L,f_R$)}
\label{sub:TargetSplit}
\begin{algorithmic}[1]
\State Let $j=\left|\{k_i:i\in[B],k_i<k\}\right|$
\If{$j\geq f_L$}
\State Split $A$ into $A_L=k_1,...,k_{f_L}$ and $A_R=k_{f_{L}+1},....,k_{B}$.
\State Insert $k$ and the following $f_L+f_R-B-1$ keys in the batch into $A_R$.
\ElsIf{$j\leq B-f_R$}
\State Split $A$ into $A_L=k_{1},...,k_{B-f_{R}}$ and $A_R=k_{B-f_{R}+1},....,k_{B}$.
\State Insert $k$ and the following $f_L+f_R-B-1$ keys in the batch into $A_L$.
\Else
\State Split $A$ into $A_L=k_{1},...,k_{j}$ and $A_R=k_{j+1},...,k_{B}$.
\State Insert $k$ and the following $f_L-j-1$ keys in the batch into $A_L$.
\State After that, insert the next $f_R+j-B$ keys in the batch into $A_R$.
\EndIf

\end{algorithmic}
\end{algorithm}
\begin{lemma}\label{lem:TargetSplit}
Assume we call \textsc{TargetSplit} during an insertion batch, and at least $f_L+f_R-B-1$ additional insertions remain in the batch (i.e the next  $f_L+f_R-B-1$ insertions will take place between $k$ and $k_{j+1}$ in ascending order). Then, following the call to \textsc{TargetSplit}, $A_L$ contains $f_L$ keys and $A_R$ contains $f_R$ keys.
\end{lemma}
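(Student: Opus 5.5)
The proof is a direct case analysis mirroring the three branches of \textsc{TargetSplit}. Throughout I use the hypothesis that the batch consists of consecutive keys in ascending order, all lying in the gap between $k_j$ and $k_{j+1}$. Under this hypothesis, inserting $k$ and the next keys of the batch into $A_L$ or $A_R$ never violates sortedness or contiguity. It is therefore enough to count keys in each block after the split, and to check that every index range used in the split is well defined.

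\emph{Case $j\ge f_L$.} After the split, $A_L$ holds $k_1,\dots,k_{f_L}$, which is $f_L$ keys, and $A_R$ holds $B-f_L$ keys. Since $j\ge f_L$, the gap where the batch lands lies at or after $k_{f_L}$, so it is legitimate to insert into $A_R$. We insert $k$ plus $f_L+f_R-B-1$ more keys there, which is possible by the hypothesis on the remaining batch length. This gives $A_R$ exactly $B-f_L+1+(f_L+f_R-B-1)=f_R$ keys.

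\emph{Case $j\le B-f_R$.} This case is symmetric. $A_R$ holds $k_{B-f_R+1},\dots,k_B$, which is $f_R$ keys. Because $j\le B-f_R$, the gap precedes $k_{B-f_R+1}$, so we may append to $A_L$. $A_L$ then ends with $B-f_R+1+(f_L+f_R-B-1)=f_L$ keys.

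\emph{Remaining case $B-f_R<j<f_L$.} The block is split exactly at the insertion gap, so $A_L$ gets $j$ keys and $A_R$ gets $B-j$ keys. We add $1+(f_L-j-1)=f_L-j$ keys to $A_L$, giving $f_L$ keys. We then add $f_R+j-B$ keys to $A_R$, giving $f_R$ keys. Both added counts are nonnegative, precisely because of the case condition: $f_L-j-1\ge0$ since $j<f_L$, and $f_R+j-B>0$ since $j>B-f_R$.

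To finish, I check that the keys inserted into $A_R$ after those inserted into $A_L$ keep the order. They do, since the batch is ascending and all of it lies between $k_j$ and $k_{j+1}$. I also check that the total number of keys consumed is $f_L-j+f_R+j-B=f_L+f_R-B$, that is, $k$ plus $f_L+f_R-B-1$ further keys, which matches the hypothesis.

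The only mildly delicate point is this bookkeeping, namely confirming that each branch uses exactly $f_L+f_R-B-1$ keys beyond $k$ and that every index range is valid. I expect no genuine obstacle beyond that.
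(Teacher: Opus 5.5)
Your proposal is correct and takes the same approach as the paper: a branch-by-branch count of the keys in $A_L$ and $A_R$, using the hypothesis that the remaining batch keys lie, in ascending order, between $k_j$ and $k_{j+1}$. You also write out the two branches the paper only calls ``similar'' and give the correct count $B-f_L+1+(f_L+f_R-B-1)=f_R$, where the paper's displayed arithmetic has typos.
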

\begin{proof}
If the condition at (2) is satisfied, then we make the split such that $A_L$ contains $f_L$ elements. Further, since $j\geq f_L$, $k_{f_L+1}\leq k_{j+1}$, i.e $k_{j+1}$ is in $A_R$. So we can insert $k$ and the next keys in the batch into $A_R$. Following the insertions of $k$ and the following $B-f_L-f_R-1$ keys, $A_L$ contains $f_L$ keys, and $A_R$ contains $B-f_L+B+f_L+f_R-B=f_R$ keys.
We can similarly show that this also holds if the conditions in (5) or (8) are met instead.
\end{proof}

\begin{figure*}[h!]
    \centering
    \resizebox{\textwidth}{!}{
        \begin{tikzpicture}[
    cell/.style={
        draw, 
        minimum width=0.75cm, 
        minimum height=0.8cm, 
        inner sep=0pt, 
        anchor=west,
        font=\small
    },
    dot/.style={
        circle, 
        draw=black, 
        inner sep=0pt, 
        minimum size=4pt
    },
    odot/.style={dot, fill=orange!80!white},
    bdot/.style={dot, fill=cyan!70!black},
    label text/.style={
        font=\bfseries\boldmath\scriptsize, 
        anchor=north,       
        yshift=-0.1cm,      
        text height=1.5ex,  
        text depth=0.5ex    
    }
]

    \coordinate (start) at (-4.125, 0); 

    \node[cell] (t1) at (start) {}; \node[odot] at (t1) {}; \node[label text] at (t1.south) {1};
    \node[cell] (t2) [right=0pt of t1] {$\dots$};
    \node[cell] (t3) [right=0pt of t2] {}; \node[odot] at (t3) {}; \node[label text] at (t3.south) {$j$};
    \node[cell] (t4) [right=0pt of t3] {}; \node[odot] at (t4) {}; \node[label text] at (t4.south) {$j{+}1$};
    \node[cell] (t5) [right=0pt of t4] {$\dots$};
    \node[cell] (t6) [right=0pt of t5] {}; \node[odot] at (t6) {}; 
    \node[cell] (t7) [right=0pt of t6] {$\dots$};
    \node[cell] (t8) [right=0pt of t7] {$\dots$};
    \node[cell] (t9) [right=0pt of t8] {}; \node[odot] at (t9) {}; \node[label text] at (t9.south) {$B$};

    \coordinate (split_point) at (t3.north east);
    \node (k_label) [above=0.8cm of split_point, xshift=-0.8cm, font=\Large\bfseries\boldmath] {$k$};
    \draw[->, thick, >=stealth, color=teal!80!black] (k_label.east) to[out=0, in=90] (split_point);

    \node (dots) [below=0.4cm of t5, font=\Large\bfseries] {$\vdots$};
    
    \coordinate (bl_start) at (-7.5, -3.5);
    \node[cell] (l1) at (bl_start) {}; \node[odot] at (l1) {}; \node[label text] at (l1.south) {1};
    \node[cell] (l2) [right=0pt of l1] {$\dots$};
    \node[cell] (l3) [right=0pt of l2] {}; \node[odot] at (l3) {}; \node[label text] at (l3.south) {$j$};
    \node[cell] (l4) [right=0pt of l3] {}; \node[bdot] at (l4) {}; \node[label text] at (l4.south) {$j{+}1$};
    \node[cell] (l5) [right=0pt of l4] {$\dots$};
    \node[cell] (l6) [right=0pt of l5] {}; \node[bdot] at (l6) {}; \node[label text] at (l6.south) {$f_L$};
    \node[cell] (l7) [right=0pt of l6] {}; 
    \node[cell] (l8) [right=0pt of l7] {}; 

    \coordinate (br_start) at (0.0, -3.5);
    \node[cell] (r1) at (br_start) {}; \node[bdot] at (r1) {}; \node[label text] at (r1.south) {1};
    \node[cell] (r2) [right=0pt of r1] {$\dots$};
    \node[cell] (r3) [right=0pt of r2] {}; \node[bdot] at (r3) {};
    \node[cell] (r4) [right=0pt of r3] {}; \node[odot] at (r4) {}; \node[label text] at (r4.south) {$(f_R{+}j{-}B)$};
    \node[cell] (r5) [right=0pt of r4] {$\dots$};
    \node[cell] (r6) [right=0pt of r5] {$\dots$};
    \node[cell] (r7) [right=0pt of r6] {}; \node[odot] at (r7) {}; \node[label text] at (r7.south) {$f_R$};
    \node[cell] (r8) [right=0pt of r7] {}; 

    \draw[->, thick, >=stealth] (dots.south) -- ++(225:1.55cm);
    \draw[->, thick, >=stealth] (dots.south) -- ++(315:1.55cm);

\end{tikzpicture}
    }
    \caption{TargetSplit}
    \label{fig:targetsplit}
\end{figure*}
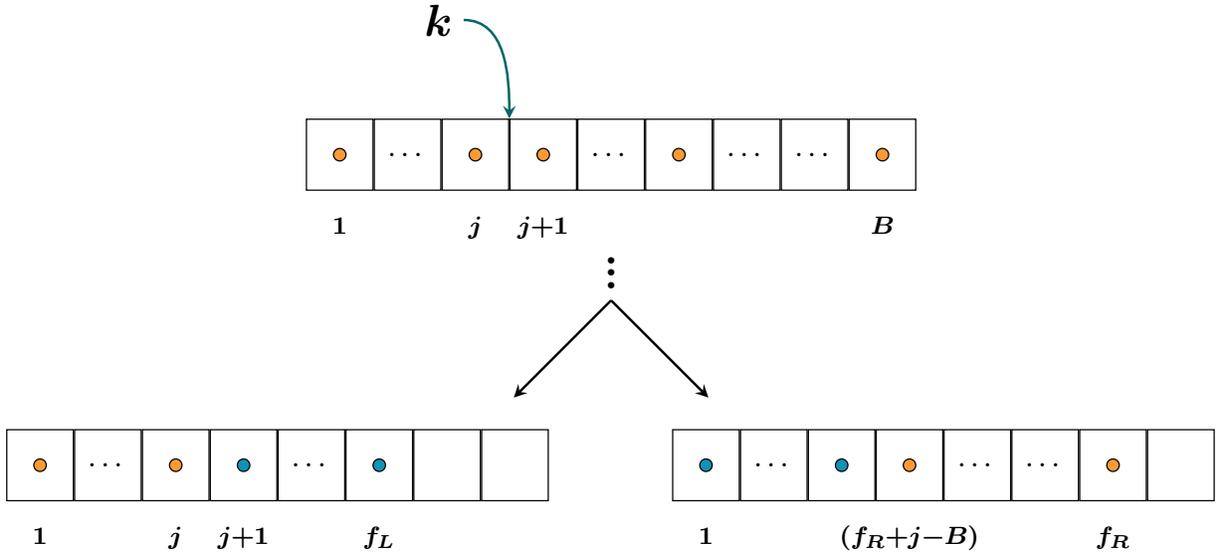

 We now turn to present our algorithms utilizing \textsc{TargetSplit} in \Cref{subsec:FirstLargeRegime} and \Cref{subsec:SecondLargeRegime}. Note that for the purpose of neat recurrences, we assume a dummy $(-\infty)$ insertion prior to the workload, similar to what we described in \cref{subsec:UniformModeling}. Further, for all insertions occurring outside of \textsc{TargetSplit}, these algorithms break ties to the left block.

\subsubsection{Regime I. $\frac{B}{3}<r\leq\frac{B}{2}$} \label{subsec:FirstLargeRegime}

For this regime, we maintain the invariant that immediately after each batch of insertions has completed, every block has either size $r$ or $2r$. 
Following the first batch we have a single block of size $r$. If a batch begins in a block of size $r$, at the end of it the block becomes of size $2r$, as it does not overflow since $r\leq B/2$.
If we insert into a block of size $2r$, we call $\textsc{TargetSplit}(A,k,r,2r)$ when the block becomes full. At this point there are $3r-B-1$ insertions remaining in the batch and we can thus apply \Cref{lem:TargetSplit} to conclude the result is two blocks of size $r$,$2r$, maintaining the invariant.

We can use the tools we saw for the analysis of even splitting to analyze the behavior of this block structure.
Specifically, let $v_{n}=\left(\mathbb{E}\left(x_{1}^{n}\right),\mathbb{E}\left(x_{2}^{n}\right)\right)$, where $x_{1}^{n},x_{2}^{n}$ denote the number of blocks of size $r,2r$ after $n$ insertions have occurred.

Let $x_{1}^{n},x_{2}^{n}$ denote the number of blocks of size $r,2r$ after $n$ insertions have occurred.
Then, similar to \Cref{subsec:UniformModeling}, we can obtain neat scalar recurrences for each component of $v_n$, and thus obtain following matrix recurrence: $v_{n+r}=\left(I+\frac{1}{n}\cdot A\right)\cdot v_{n}$ where $A=\begin{Bmatrix}-r & 2r\\
r & 0
\end{Bmatrix}$.

$A$ has two eigenvalues: $\lambda_{1}=r,\lambda_{2}=-2r$, and we can now use \Cref{thm:MatrixRecurrenceConvergance} and similarly to \Cref{subsec:AverageBlockSize} we can conclude the average block size converges to $\frac{\left\langle \vec{u},w\right\rangle }{\left\langle \vec{u},\vec{1}\right\rangle }$, where $w=\left(r,2r\right)$ and $\vec{u}$ is any eigenvector corresponding to $\lambda_{1}=r$.

We can choose $\vec{u}=\left(1,1\right)$, and get average block size $\frac{\left\langle \left(1,1\right),\left(r,2r\right)\right\rangle }{\left\langle \left(1,1\right),\vec{1}\right\rangle }=\frac{r+2r}{2}=\frac{3r}{2}$.

\subsubsection{Regime II. $\frac{2}{5}\cdot B<r\protect\leq\frac{2}{3}\cdot B$} \label{subsec:SecondLargeRegime}

In this regime we are able to enforce that all blocks have sizes in $\left\{ \frac{r}{2},r,\frac{3r}{2}\right\}$ by employing the following splitting rules. Every time a split needs to occur in a memory block $A$ during an insertion batch, we check the number of keys $A$ had before the batch started. Note that it could not be $\frac{r}{2}$ otherwise a split would not have occurred during the batch. So, assuming the invariant holds, it can only be $r$ or $2r$. If $A$ had $r$ keys before, then we call $\textsc{TargetSplit}(A,k,\frac{r}{2},\frac{3r}{2})$ when $A$ needs to split. Otherwise, we call $\textsc{TargetSplit}(A,k,r,\frac{3r}{2})$. 

 To analyze this block structure - we first prove that the invariant is maintained, i.e that all blocks have sizes in $\{\frac{r}{2},r,\frac{3r}{2}\}$. When the second batch begins we have a single block with $r$ keys, and we'll call $\textsc{TargetSplit}(A,k,\frac{r}{2},\frac{3r}{2})$ when $2r-B-1$ keys are remaining in the batch, and we can thus apply  \Cref{lem:TargetSplit} to argue the result is two blocks of size $\frac{r}{2}$ and $\frac{3r}{2}$ at the end of the batch. If the next batch hits the block of size $\frac{r}{2}$ it becomes a block of size $\frac{3r}{2}$. Otherwise, it hits the block of size $\frac{3r}{2}$, and we call $\textsc{TargetSplit}(A,k,r,\frac{3r}{2})$, which results in creating two blocks of size $r$ and $\frac{3r}{2}$.
Overall, the invariant is that all block sizes are in $\{\frac{r}{2},r,\frac{3r}{2}\}$.

Now, let $x_{1}^{n},x_{2}^{n},x_{3}^{n}$ denote the number of blocks of size $\frac{r}{2},r,\frac{3r}{2}$ after $n$ insertions have occurred. Further, let $v_{n}=\left(\mathbb{E}\left(x_{1}^{n}\right),\mathbb{E}\left(x_{2}^{n}\right),\mathbb{E}\left(x_{3}^{n}\right)\right)$.
Similar to \Cref{subsec:UniformModeling},
we can obtain neat scalar recurrences for each component of $v_n$, which results in the following matrix recurrence: $v_{n+r}=\left(I+\frac{1}{n}\cdot A\right)\cdot v_{n}$ where $A=\begin{Bmatrix}-\frac{r}{2} & r & 0\\
0 & -r & \frac{3}{2}r\\
\frac{r}{2} & r & 0
\end{Bmatrix}$.

$A$ has eigenvalues $\lambda_{1}=r,\lambda_{2}=-r,\lambda_{3}=-1.5r$, and we can employ \Cref{thm:MatrixRecurrenceConvergance} and \Cref{subsec:AverageBlockSize} to conclude the average block size converges to $\frac{\left\langle \vec{u},w\right\rangle }{\left\langle \vec{u},\vec{1}\right\rangle }$, where $w=\left(\frac{r}{2},r,\frac{3r}{2}\right)$ and $\vec{u}$ is any eigenvector corresponding to $\lambda_{1}=r$.

We can choose $\vec{u}=\left(2,3,4\right)$, and get average block size: $\frac{\left\langle \left(2,3,4\right),\left(\frac{r}{2},r,\frac{3r}{2}\right)\right\rangle }{2+3+4}=\frac{10}{9}r$.
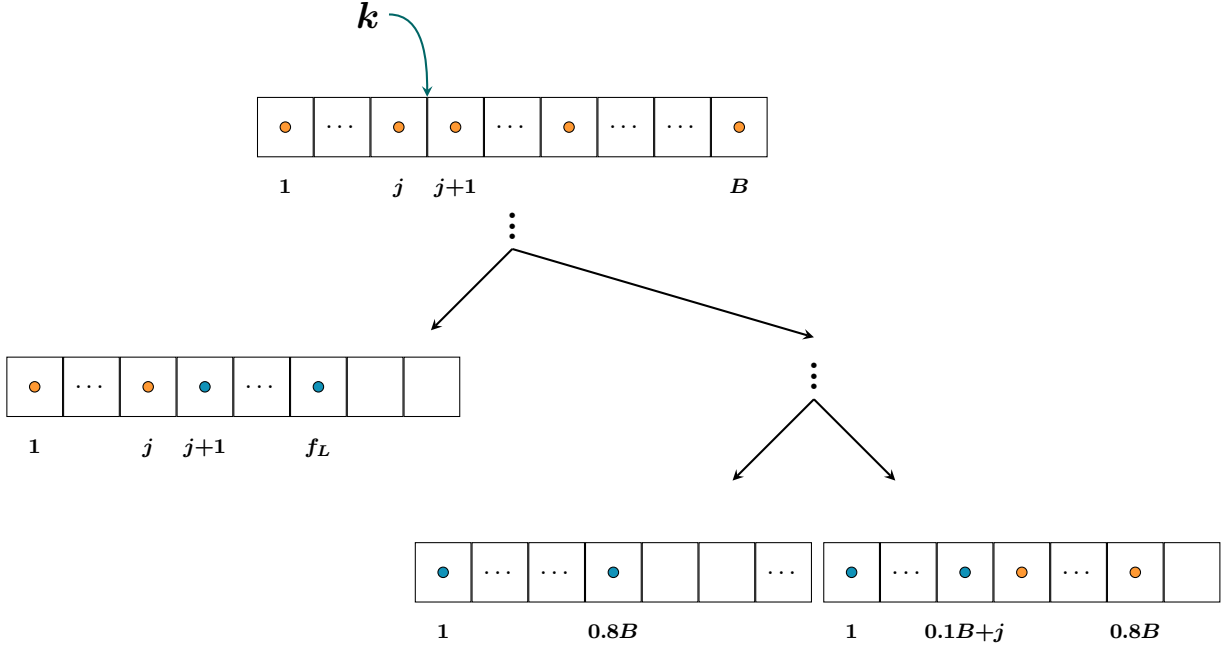
\begin{figure}[h!]
    \centering
    \resizebox{\textwidth}{!}{
    \begin{tikzpicture}[
    cell/.style={
        draw, 
        minimum width=0.75cm, 
        minimum height=0.8cm, 
        inner sep=0pt, 
        anchor=west,
        font=\small
    },
    dot/.style={
        circle, 
        draw=black, 
        inner sep=0pt, 
        minimum size=4pt
    },
    odot/.style={dot, fill=orange!80!white},
    bdot/.style={dot, fill=cyan!70!black},
    label text/.style={
        font=\bfseries\boldmath\scriptsize, 
        anchor=north,
        yshift=-0.1cm,
        text height=1.5ex,
        text depth=0.5ex
    }
]

    \coordinate (start) at (-4.125, 0); 

    \node[cell] (t1) at (start) {}; \node[odot] at (t1) {}; \node[label text] at (t1.south) {1};
    \node[cell] (t2) [right=0pt of t1] {$\dots$};
    \node[cell] (t3) [right=0pt of t2] {}; \node[odot] at (t3) {}; \node[label text] at (t3.south) {$j$};
    \node[cell] (t4) [right=0pt of t3] {}; \node[odot] at (t4) {}; \node[label text] at (t4.south) {$j{+}1$};
    \node[cell] (t5) [right=0pt of t4] {$\dots$};
    \node[cell] (t6) [right=0pt of t5] {}; \node[odot] at (t6) {}; 
    \node[cell] (t7) [right=0pt of t6] {$\dots$};
    \node[cell] (t8) [right=0pt of t7] {$\dots$};
    \node[cell] (t9) [right=0pt of t8] {}; \node[odot] at (t9) {}; \node[label text] at (t9.south) {$B$};

    \coordinate (split_point) at (t3.north east);
    \node (k_label) [above=0.8cm of split_point, xshift=-0.8cm, font=\Large\bfseries\boldmath] {$k$};
    \draw[->, thick, >=stealth, color=teal!80!black] (k_label.east) to[out=0, in=90] (split_point);

    \node (dots1) [below=0.4cm of t5, font=\Large\bfseries] {$\vdots$};
    
    \coordinate (ml_start) at (-7.5, -3.5);
    \node[cell] (l1) at (ml_start) {}; \node[odot] at (l1) {}; \node[label text] at (l1.south) {1};
    \node[cell] (l2) [right=0pt of l1] {$\dots$};
    \node[cell] (l3) [right=0pt of l2] {}; \node[odot] at (l3) {}; \node[label text] at (l3.south) {$j$};
    \node[cell] (l4) [right=0pt of l3] {}; \node[bdot] at (l4) {}; \node[label text] at (l4.south) {$j{+}1$};
    \node[cell] (l5) [right=0pt of l4] {$\dots$};
    \node[cell] (l6) [right=0pt of l5] {}; \node[bdot] at (l6) {}; \node[label text] at (l6.south) {$f_L$};
    \node[cell] (l7) [right=0pt of l6] {}; 
    \node[cell] (l8) [right=0pt of l7] {}; 

    \coordinate (dots2_pos) at (3.375, -3.25);
    \node (dots2) at (dots2_pos) [font=\Large\bfseries] {$\vdots$};

    \draw[->, thick, >=stealth] (dots1.south) -- ++(225:1.55cm);
    \draw[->, thick, >=stealth] (dots1.south) -- (dots2.north);

    \coordinate (bl2_start) at (-2.0, -6.0);
    \node[cell] (bl1) at (bl2_start) {}; \node[bdot] at (bl1) {}; \node[label text] at (bl1.south) {1};
    \node[cell] (bl2) [right=0pt of bl1] {$\dots$};
    \node[cell] (bl3) [right=0pt of bl2] {$\dots$};
    \node[cell] (bl4) [right=0pt of bl3] {}; \node[bdot] at (bl4) {}; \node[label text] at (bl4.south) {$0.8B$};
    \node[cell] (bl5) [right=0pt of bl4] {}; 
    \node[cell] (bl6) [right=0pt of bl5] {}; 
    \node[cell] (bl7) [right=0pt of bl6] {$\dots$};

    \coordinate (br2_start) at (3.5, -6.0);
    \node[cell] (br1) at (br2_start) {}; \node[bdot] at (br1) {}; \node[label text] at (br1.south) {1};
    \node[cell] (br2) [right=0pt of br1] {$\dots$};
    \node[cell] (br3) [right=0pt of br2] {}; \node[bdot] at (br3) {}; \node[label text] at (br3.south) {$0.1B{+}j$};
    \node[cell] (br4) [right=0pt of br3] {}; \node[odot] at (br4) {}; 
    \node[cell] (br5) [right=0pt of br4] {$\dots$};
    \node[cell] (br6) [right=0pt of br5] {}; \node[odot] at (br6) {}; \node[label text] at (br6.south) {$0.8B$};
    \node[cell] (br7) [right=0pt of br6] {}; 

    \draw[->, thick, >=stealth] (dots2.south) -- ++(225:1.55cm);
    \draw[->, thick, >=stealth] (dots2.south) -- ++(315:1.55cm);

\end{tikzpicture}
    }
    \caption{Example for \Cref{subsec:LargeRegime} illustrating a batch of $r=1.7B$ insertions hitting a block of size $0.7B$.}
    \label{fig:example}
\end{figure}

\subsection{Deferred even splitting for large batches}
\label{subsec:LargeRegime}
When $r>\frac{2}{3}B$, we obtain good space utilization by using deferred even splitting, as defined in \Cref{sec:intro}.
We now prove that deferred even splitting obtains good space utilization in this regime. First, note that for $\frac{2}{3}B<r\leq B$, deferred even splitting is trivial to analyze - it is an invariant that at the end of each batch all blocks have size exactly $r$. So, we now turn attention to analyzing deferred even splitting when $r>B$.
\begin{lemma}
When $r>\frac{2}{3}B$ the average fullness obtained by deferred even splitting is at least $\max\left\{\frac{2}{3}, \frac{r/B+\frac{1}{2}}{\lceil r/B+1\rceil }\right\}-\frac{1}{B}$.
\end{lemma}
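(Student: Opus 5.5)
The plan is a purely combinatorial, worst-case argument with no linear algebra: show that every block present at the end of any batch is large, then average. For $\frac{2}{3}B<r\le B$ the invariant noted just before the statement already gives fullness $r/B>\frac23$, which dominates the bound. So assume $r>B$ and write $\alpha=r/B$.

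First, I will show that after every batch each block has size at least $\frac{2}{3}B-1$. Consider a batch of $r$ keys landing in a block with $\ell\ge 1$ keys, $\ell\le B$. Deferred even splitting replaces it by $m=\lceil (r+\ell)/B\rceil$ blocks of sizes within one of $(r+\ell)/m$, so each new block has size at least $(r+\ell)/m-1$. Since $r+\ell>(m-1)B$ and $m\ge 2$,
\[
\frac{r+\ell}{m}>\frac{m-1}{m}B\ge\frac{B}{2}.
\]
This gives only $\frac12$, so the count must be refined using $r>B$. If $m=2$, then $r+\ell>r>B$, which gives only $>B/2$ — not enough. The bound should therefore come from the term $\frac{\alpha+\frac12}{\lceil\alpha+1\rceil}$ directly. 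Only blocks produced by the batch just inserted need a fresh argument, and each such block has size $\ge (r+\ell)/m-1$. Since $\ell\le B$ gives $m\le\lceil \alpha+1\rceil$, it suffices to show $r+\ell\ge (\alpha+\frac12)B$, i.e.\ $\ell\ge B/2$.

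Second, I will argue by induction that $\ell\ge B/2$ holds for every block a batch can hit. The base case is a single block of size $r$ split into pieces of size about $r/\lceil\alpha\rceil>B/2$. In the inductive step, every block produced has size $\ge (r+\ell)/m-1\ge (\alpha+\frac12)B/\lceil\alpha+1\rceil-1$. This is $\ge B/2-1$ because $\alpha+\frac12\ge\frac12(\alpha+2)>\frac12\lceil\alpha+1\rceil\cdot\frac{\alpha+1/2}{\alpha+1}$; more precisely, $(\alpha+\frac12)/\lceil\alpha+1\rceil\ge\frac12$ iff $2\alpha+1\ge\lceil\alpha\rceil+1$, which holds for $\alpha\ge1$. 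The $-1$ from rounding is what produces the $-\frac1B$ slack in the statement. The $\frac23$ branch comes from bounding $\frac{m-1}{m}B$ using $m\ge3$ whenever $r>B$ and $\ell\ge B/2$, since then $r+\ell>\frac32B$ forces $m\ge 2$ and gives blocks of size $\ge \frac{r+\ell}{m}$. I would handle this by checking separately the cases $m=2$, where $(r+\ell)/2>\frac{(3/2)B}{2}=\frac34B$, and $m\ge3$, where $(m-1)/m\ge\frac23$. Hence each block is at least $\frac23B-1$.

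Third, since every block at the end of every batch satisfies both lower bounds, the average fill is at least their maximum minus $\frac1B$. Between batches, within a partially completed batch, fill only affects lower-order terms, which the limit ignores.

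The main obstacle is the circularity in the $\ell\ge B/2$ invariant combined with the exact rounding. I would first prove the cleaner invariant "every block has at least $\lceil B/2\rceil$ keys," using $\lfloor (r+\ell)/m\rfloor$ with integer arithmetic, and then derive both bounds from it.
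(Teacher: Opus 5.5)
Your proposal is correct and follows essentially the same route as the paper. The half-full invariant gives $N=r+\ell\ge r+B/2$, and together with $m\le\lceil r/B+1\rceil$ this yields the first term. The case split into $m=2$ (where $N/2>\frac34B$) and $m\ge3$ (where $\frac{m-1}{m}\ge\frac23$) yields the $\frac23$ term.

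The circularity you worry about is not real. Minimality of $m$, which you already wrote down in your first paragraph, gives $(r+\ell)/m>\frac{m-1}{m}B\ge\frac B2$ for every $\ell\ge0$. With integer rounding this is the half-full invariant directly, with no induction; the paper simply calls it trivial. One claim needs qualifying: the $O(1)$ blocks created by the very first batch can fall below $\frac23B-1$ (e.g.\ when $r$ is just above $B$). This is harmless in the limit, and the paper likewise sets it aside by arguing only from the second batch on.
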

\begin{proof}
First, since deferred even splitting partitions the total number of keys evenly into the minimum required number of blocks, it is a trivial invariant that at the end of each batch of insertions all blocks are at least half full. So for every batch, starting from the second, we have a total of $N\geq\frac{B}{2}+r$ keys to partition into $k=\lceil N/B\rceil$ bins. The minimum number of keys in a block at the end of the batch is therefore at least $\lfloor N/k\rfloor$, and our goal now is to lower bound this quantity. First, since $N\leq r+B$, we have $k\leq\lceil\frac{r}{B}+1\rceil$. Therefore, $N/k\geq\frac{B/2+r}{\lceil r/B+1\rceil}$ which implies $\lfloor N/k\rfloor\geq\frac{B/2+r}{\lceil r/B+1\rceil}-1$. We now proceed to show $\lfloor N/k\rfloor\geq\frac{2}{3}B-1$. First, if $k=2$, since $N>1.5B$, then $\lfloor N/k\rfloor\geq\frac{4}{3}B$. If $k\geq3$, by the minimality of $k$, we know $\left(k-1\right)\cdot B<N\le kB$. Therefore:
\[
\frac{N}{k}>\frac{\left(k-1\right)\cdot B}{k}=B\left(1-\frac{1}{k}\right)\geq B\cdot\left(1-\frac{1}{3}\right)=\frac{2}{3}B
\]
So, we can thus conclude: $\lfloor N/k\rfloor\geq\frac{2}{3}B-1$.
\end{proof} 
\section{Analysis of deferred even splitting for smaller batches} \label{Sec:EvenDeferredAnalysis}

In this section we show that, when the batch size \(r\) falls into a
certain range, the deferred even split strategy admits an exact closed form
for its limiting fullness.

Recall that under deferred even split we always insert in batches of size
\(r\), and we only split a block when a single insertion causes it to exceed
capacity \(B\). When \(r \in (B/(2i),\,B/(2i-1)]\), every block size is forced
to be a multiple of \(r\), and the largest possible multiple is \((2i-1)r\).
After rescaling by a factor of \(r\), the process on block sizes becomes
exactly Yao's even-split model with capacity \(2i-1\) and split rule
\(2i \to (i,i)\). This allows us to read off a closed form for the stationary
distribution and hence for the expected fullness:

\begin{lemma}\label{lem:DeferredEvenSplitting}
Suppose $B$ and $i$ are positive integers. For $r \in \bigl(B/(2i),\,B/(2i - 1)\bigr]$, the expected final fullness of the
deferred even split strategy is
\[
\frac{2ir}{B}\,(H_{2i} - H_{i}) .
\]
where $H_k = 1 + \frac12 + \cdots + \frac1k$ is the $k$-th harmonic number.
\end{lemma}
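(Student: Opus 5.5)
The plan is to reduce the process to Yao's single-insertion model, exactly as the paragraph before the lemma suggests, and then apply the machinery of \Cref{sec:ConverganceTheorem} and \Cref{subsec:AverageBlockSize} to an explicit transition matrix whose eigenvector can be written down in closed form.

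\textbf{Step 1: block sizes are multiples of $r$.} We prove by induction on the number of batches that after every batch, every block has size $jr$ for some $j\in\{i,\dots,2i-1\}$ (after the initial phase). We use the dummy $-\infty$ key convention of \Cref{subsec:UniformModeling}, or simply ignore an $O(1)$ perturbation that does not affect the limit.

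\emph{No overflow below $2i$.} A block of size $jr$ hit by a batch becomes $(j+1)r$. If $j+1\le 2i-1$, this does not overflow, because $(2i-1)r\le B$.

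\emph{Split at $2i$.} If $j+1=2i$, then $2ir>B$, so deferred even splitting uses $\lceil 2ir/B\rceil$ blocks. Since $2ir\le \frac{2i}{2i-1}B\le 2B$, this is exactly $2$ blocks, each of size $ir$.

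\emph{Range check.} We need $ir\le B$, and $ir>B/2$ keeps every block above half full, so the states $\{i,\dots,2i-1\}$ are closed under the dynamics. Starting from the first block of size $r$, sizes $r,2r,\dots$ grow until the first split. The states $1,\dots,i-1$ are transient and contribute only $O(1)$ blocks, so they do not matter in the limit.

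\textbf{Step 2: the rescaled recurrence.} The probability that a batch lands in a given block of size $jr$ is $jr/n$. So the expected-count vector $v_n=(\mathbb{E}x^n_{i},\dots,\mathbb{E}x^n_{2i-1})$ satisfies $v_{n+r}=(I+\frac1n A)v_n$ with $A=r\,A'$. Here $A'$ is Yao's matrix $A(2i-1,1)$ from \Cref{subsec:UniformModeling}, with $d=i$ and capacity $2i-1$.

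\emph{Spectral hypotheses.} $A'$ is irreducible Metzler, by the cycle $i\to i+1\to\dots\to 2i-1\to i$. Its positive left eigenvector is $w=(i,\dots,2i-1)$ with eigenvalue $1$, by the argument of \Cref{lem:eigenvalueOfTransitionMatrix} with $r=1$. By \Cref{lem:MetzlerMatrices}, $1$ is a simple eigenvalue of $A'$ and all others have real part less than $1$. Hence $r$ is simple for $A$, with all other eigenvalues having real part less than $r$.

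\emph{Convergence.} \Cref{thm:MatrixRecurrenceConvergance} then gives convergence of $v_n/n$ to the $r$-eigenvector. As in \Cref{lem:evensplitaveragebocksizeconverges}, the average block size tends to $\langle u,w_r\rangle/\langle u,\vec 1\rangle$, where $w_r=(ir,\dots,(2i-1)r)$.

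\textbf{Step 3: solve for the eigenvector.} For $k>i$, the eigen-equation $-k u_k+(k-1)u_{k-1}=u_k$ gives
\[
u_k=\frac{k-1}{k+1}u_{k-1},
\]
so $u_k\propto \frac{1}{k(k+1)}$. This is \Cref{lem:step-r-recurrence} with step $1$. The first row is then automatically consistent, since the eigenspace is one-dimensional.

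\emph{The two sums.} For the weighted sum,
\[
\sum_{k=i}^{2i-1}k\,u_k\propto\sum_{k=i}^{2i-1}\frac1{k+1}=H_{2i}-H_i .
\]
For the total, the sum telescopes:
\[
\sum_{k=i}^{2i-1}\frac1{k(k+1)}=\frac1i-\frac1{2i}=\frac1{2i}.
\]

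\emph{Fullness.} The average block size is therefore $r\cdot 2i(H_{2i}-H_i)$. Dividing by $B$ gives the claimed fullness $\frac{2ir}{B}(H_{2i}-H_i)$. Note that this computes the ratio of expectations, as in \Cref{lem:evensplitaveragebocksizeconverges}.

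\textbf{Main obstacle.} The delicate part is Step 1. We must verify carefully that deferred even splitting of a block of size $2ir$ yields exactly two blocks of size $ir$, and never three, at both endpoints of the interval. At $r=B/(2i-1)$ the bound $2ir\le 2B$ holds with room to spare. At $r\to B/(2i)^+$ we have $2ir>B$, so a split is indeed forced.

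We must also check that no intermediate split is triggered inside a batch before the block would reach $2ir$. This is fine because splitting is deferred to the batch total. Finally, we must handle $i=1$, where the state space is the single state $\{r\}$; there the formula gives $\frac{2r}{B}\cdot\frac12=r/B$, which is consistent.
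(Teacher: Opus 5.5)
Your proposal is correct and follows essentially the same route as the paper: you show by induction that block sizes stay in $\{ir,\dots,(2i-1)r\}$, rescale to Yao's chain with capacity $2i-1$, use the step-$1$ recurrence $u_k=\frac{k-1}{k+1}u_{k-1}$ to get $u_k\propto\frac{1}{k(k+1)}$, and evaluate the two sums. Your version is slightly more careful than the paper's: you explicitly check that $\lceil 2ir/B\rceil=2$, and you verify the spectral hypotheses for $A=rA'$ directly via \Cref{lem:MetzlerMatrices} and \Cref{thm:MatrixRecurrenceConvergance} instead of appealing loosely to the $r=1$ even-splitting case.
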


\begin{proof}
We first claim that throughout the process every block has size of the form
$jr$ with an integer $j$ satisfying $0 \le j < 2i$. We prove this by induction on the number of insertion steps.

Initially, there is a single empty block of size $0 = 0 \cdot r$, so the claim
holds.

Now assume that at some time every block has size $jr$ for some integer
$0 \le j < 2i$. Consider the next insertion. We pick a block of size $jr$ and
add $r$ elements to it, so its size becomes $(j+1)r$.

\begin{itemize}
  \item If $j+1 \leq 2i-1$, then
  \[
    (j+1)r \le (2i-1)r \le B,
  \]
  where the last inequality uses $r \le B/(2i-1)$. Thus no split occurs and the
  new size $(j+1)r$ is again of the required form, with $0 \le j+1 < 2i$.

  \item If $j+1 = 2i$, then the new size is $(j+1)r = 2ir$, and now
  \[
    2ir > B,
  \]
  since $r > B/(2i)$. Thus this block overflows and is split. Under the
  deferred even split strategy, we split its total size $2ir$ evenly into two
  blocks of size $ir$ and $ir$. Both are of the form $jr$ with $j=i < 2i$.
\end{itemize}

In all cases, after the insertion (and possible split) every block size is still
of the form $jr$ with $0 \le j < 2i$. This completes the induction and proves
the claim.

\medskip

Because all block sizes are multiples of $r$, we can rescale and think of each
block as holding an integer number of “units’’ instead of elements: a block of
size $jr$ elements corresponds to a block of size $j$ units. Under this
rescaling:

\begin{itemize}
  \item In each step we pick a block with probability proportional to its size
  \[
    \frac{jr}{\sum_k j_k r} = \frac{j}{\sum_k j_k},
  \]
  so in the rescaled process blocks are chosen with probability proportional to
  their current number of units $j$.

  \item When a block of size $j$ units is chosen, we add $r$ elements, i.e., the
  block gains one unit and its size becomes $j+1$.

  \item If $j+1 \le 2i-1$, the block does not overflow (because $(j+1)r \le B$),
  so no split occurs.

  \item If $j+1 = 2i$, then the block attains size $2i$ units (total size
  $2ir > B$) and is immediately split into two blocks of size $i$ units each.
\end{itemize}

Thus, in units, we obtain the following process: blocks have integer sizes
$0,1,\dots,2i-1$; at each step we pick a block of size $j$ with probability
proportional to $j$, increase its size to $j+1$, and whenever a block reaches
size $2i$ we split it into two blocks of size $i$. This is exactly the
even-split model with capacity $2i-1$ and split rule $2i \to (i,i)$ analyzed
by Yao.

For the case $r = 1$, the deferred even split strategy coincides with the even split strategy. Hence, by Lemma~\ref{lem:evensplitaveragebocksizeconverges}, the vector of block fractions converges to a stationary distribution. Let $u_j$ denote the stationary fraction of blocks of size $j$ units in this rescaled chain. By Lemma~\ref{lem:step-r-recurrence}, we have $u_{j} = \frac{j-1}{j+1}u_{j-1}$ for $j > i$. One can easily check that this is a probability distribution:
\[
u_j = 0 \quad \text{for } j < i,
\qquad
u_j = \frac{2i}{j(j+1)} \quad \text{for } j = i,i+1,\dots,2i-1.
\]

Returning to the original (unscaled) process, a block of size $j$ units has
physical size $jr$ and capacity $B$. Therefore its fullness is $jr/B$, and the
expected fullness in the steady state is
\begin{align*}
\mathbb{E}[\text{fullness}]
  &= \sum_{j=i}^{2i-1} \frac{jr}{B} \cdot u_j \\
  &= \frac{r}{B}
     \sum_{j=i}^{2i-1} j \cdot \frac{2i}{j(j+1)} \\
  &= \frac{2ir}{B}
     \sum_{j=i}^{2i-1} \frac{1}{j+1} \\
  &= \frac{2ir}{B}\,\bigl(H_{2i} - H_i\bigr),
\end{align*}
which is exactly the desired expression.
\end{proof}

Thus, for every \(i \ge 1\) and every batch size
\(r \in (B/(2i),\,B/(2i-1)]\), deferred even split has a well-defined limiting
fullness, and it is given by the simple harmonic-expression
\(\frac{2ir}{B}(H_{2i} - H_i)\). In particular, different choices of \(r\) in
this interval all induce the same underlying rescaled chain (with capacity
\(2i-1\)), and only rescale the fullness by the factor \(r/B\).

\section*{Acknowledgements}
This work was supported by NSF grants 
CNS-2504471 and
CCF-2247577.
Any opinions, findings, and conclusions or recommendations expressed in this material are those of the authors and do not necessarily reflect the views of the National Science Foundation.

Michael Bender was supported in part by the John L. Hennessy Chaired Professorship and Sandia National Laboratories.
Aaron Bernstein is funded by the NSF CAREER grant, the Sloan fellowship, and the Charles S. Baylis endowment at NYU.
Mart\'{\i}n Farach-Colton was supported in part by the Leonard J. Shustek professorship.
Hanna Koml\'os was supported in part by the Graduate Fellowships for STEM Diversity. Part of this work was supported by the Simons Institute for the Theory of Computing, and conducted when Hanna Koml\'os was visiting the Institute.

\bibliographystyle{plainurl}
\bibliography{refs}

\end{document}